\documentclass[a4paper,UKenglish,runningheads,11pt]{llncs}

\usepackage{tabularx,booktabs,multirow,delarray,array}
\usepackage{graphicx,amssymb,amsmath}
\usepackage{enumerate}
\usepackage[ruled,vlined,linesnumbered]{algorithm2e}
\usepackage{fullpage}
\usepackage{latexsym}
\usepackage{enumerate}
\usepackage{lineno}
\usepackage{wrapfig}
\usepackage{bibspacing}
\setlength{\bibspacing}{\baselineskip}

\newenvironment{proof}{\par\noindent{\bf Proof:}}{\mbox{}\hfill$\qed$\\}
\newtheorem{observation}{Observation}

\newcommand{\ignore}[1]{ }

\newcounter{rem}
\setcounter{rem}{0}

\def\qed{\hbox{\rlap{$\sqcap$}$\sqcup$}}


\begin{document}

\title{Visibility Polygons and Visibility Graphs among Dynamic Polygonal Obstacles in the Plane}
\titlerunning{Visibility polygons and visibilty graphs among dynamic obstacles}

\author{
Sanjana Agrawal\inst{1}
\and
R. Inkulu\inst{1}
\thanks{A preliminary version \cite{conf/cocoon/AgrawInkulu20} appeared in Proceedings of the $26^{th}$ International Computing and Combinatorics Conference (COCOON), 2020.}
}

\institute{
Department of Computer Science \& Engineering\\
IIT Guwahati, India\\
\email{\{sanjana18,rinkulu\}@iitg.ac.in}
}

\maketitle

\pagenumbering{arabic}
\setcounter{page}{1}

\begin{abstract}
We devise an algorithm for maintaining the visibility polygon of any query point in a dynamic polygonal domain, i.e., as the polygonal domain is modified with vertex insertions and deletions to its obstacles, we update the data structures that store the visibility polygon of the query point.
After preprocessing the initial input polygonal domain to build a few data structures, our algorithm takes $O(k(\lg{|VP_{\cal P'}(q)|})+(\lg{n'})^{2}+h)$ (resp. $O(k(\lg n')^2+(\lg|VP_{\cal P'}(q)|)+h)$) worst-case time to update data structures that store visibility polygon $VP_{\cal P'}(q)$ of a query point $q$ when any vertex $v$ is inserted to (resp. deleted from) any obstacle of the current polygonal domain $\cal P'$.
Here, $n'$ is the number of vertices in $\cal P'$, $h$ is the number of obstacles in $\cal P'$, $VP_{\cal P'}(q)$ is the visibility polygon of $q$ in $\cal P'$ ($|VP_{\cal P'}(q)|$ is the number of vertices of $VP_{\cal P'}(q)$), and $k$ is the number of combinatorial changes in $VP_{\cal P'}(q)$ due to the insertion (resp. deletion) of $v$.
\ignore {
Our algorithm takes $O(k(\lg{|VP_{\cal P'}(q)|})+(\lg{n'})^{2}+h)$ worst-case time to update the visibility polygon $VP_{\cal P'}(q)$ of a query point $q$ when any vertex $v$ is inserted to any obstacle of the current polygonal domain $\cal P'$, and it takes $O(k(\lg n')^2+(\lg|VP_{\cal P'}(q)|)+h)$ worst-case time to update $VP_{\cal P'}(q)$ when any vertex $v$ of any obstacle of the current polygonal domain $\cal P'$ is deleted.
In $O(|VP_{\cal P'}(q)|(\lg{n'})^2+h)$ time, apart from outputting the visibility polygon of a given query point, our visibility polygon query algorithm builds data structures that facilitate in efficiently updating the visibility polygon when a vertex is inserted to or deleted from any of the obstacles.
Initially, we preprocess the input polygonal domain $\cal P$ in $O(n(\lg{n})^2+h(\lg{h})^{1+\epsilon})$ time, and build data structures of size $O(n)$.
Here, $n$ is the number of vertices of $\cal P$, $h$ is the number of obstacles in $\cal P$, and $\epsilon > 0$ is a small positive constant (resulting from triangulating the free space of $\cal P$ using the algorithm in \cite{journals/ijcga/Bar-YehudaC94}).
}
\hfil\break

As an application of the above algorithm, we also devise an algorithm for maintaining the visibility graph of a dynamic polygonal domain, i.e., as the polygonal domain is modified with vertex insertions and deletions to its obstacles, we update data structures that store the visibility graph of the polygonal domain.
After preprocessing the initial input polygonal domain, our dynamic algorithm takes $O(k(\lg{n'})^{2}+h)$ (resp. $O(k(\lg{n'})^{2}+h)$) worst-case time to update data structures that store the visibility graph when any vertex $v$ is inserted to (resp. deleted from) any obstacle of the current polygonal domain $\cal P'$.
Here, $n'$ is the number of vertices in $\cal P'$, $h$ is the number of obstacles in $\cal P'$, and $k$ is the number of combinatorial changes in the visibility graph of $\cal P'$ due to the insertion (resp. deletion) of $v$.
\end{abstract}

\section{Introduction}
\label{sect:intro}

The polygonal domain comprises a set of pairwise-disjoint simple polygons (obstacles) in the plane.
We assume the obstacles in the polygonal domain are placed in a large bounding box.
For any polygonal domain $\cal P$, the {\it free space} $\mathcal{F(P)}$ is the closure of bounding box without the union of the interior of all the obstacles in $\cal{P}$.
Any two points $p, q \in {\cal{F(P)}}$ are {\it visible} to each other if the open line segment joining $p$ and $q$ lies entirely in $\cal{F(P)}$.
A vertex $v$ of the polygonal domain is said to be a {\it visible vertex} to a point $q$ whenever $v$ is visible to $q$.
For a point $q \in {\cal{F(P)}}$, the {\it visibility polygon} of $q$, denoted by $VP_{\cal{P}}(q)$, is the maximal set $S$ of points in $\cal{F(P)}$, such that each point in $S$ is visible to $q$.
(When $\cal P$ is clear from the context, the visibility polygon of $q$ is denoted by $VP(q)$.)
The {\it visibility polygon query} problem seeks to preprocess the given polygonal domain $\cal P$ so that to efficiently compute $VP_{\cal P}(q)$ for any query point $q$ located in $\cal{F(P)}$.
Computing visibility polygons is a fundamental problem in computational geometry, and it is studied extensively.
The {\it (vertex-vertex) visibility graph} of a polygonal domain $\cal P$, denoted by $VG_{\cal P}$, is the undirected graph with its vertex set comprising all the vertices of $\cal P$. 
The edge set of $VG_{\cal P}$ comprises of every line segement with its endpoints $v', v''$ being the vertices of $\cal P$ such that $v'$ and $v''$ are visible to each other among the obstacles of $\cal P'$. 
The visibility graphs have many applications, ex., in computing geodesic shortest paths, minimum link paths, and in computing the weak visibility polygons.
In the following, we denote the number of vertices of the simple polygon or the polygonal domain by $n$, the number of obstacles in the polygonal domain by $h$, and the set comprising the edges of the visibility graph by $E$. 

The problem of computing the visibility polygon of a point in a simple polygon was first attempted by Davis and Benedikt in \cite{journals/cgip/Davis79}, and they presented an $O(n^2)$ time algorithm.
Later, both ElGindy and Avis \cite{journals/jal/ElGindyA81}, and Lee \cite{journals/cvgip/Lee83}, presented $O(n)$ time algorithms for the same problem.
Joe and Simpson \cite{journals/bitnummath/joe87} corrected a flaw in \cite{journals/jal/ElGindyA81,journals/cvgip/Lee83}, and devised an $O(n)$ time algorithm that correctly handles winding in the simple polygon. 
For a simple polygon with holes, both Suri and O' Rourke \cite{conf/compgeom/SuriO86}, and Asano \cite{journals/algorithmica/AsanoAGHI86} presented $O(n\lg{n})$ time algorithms, and Heffernan and Mitchell \cite{journals/siamcomp/HeffernanM95} gave a $O(n+h\lg{h})$ time algorithm.
The visibility polygon computation among convex sets was considered by Ghosh in \cite{journals/jal/Ghosh91}.

For both the simple polygon as well as the polygonal domain, the previous works considered the visibility polygon query problem.
Bose et~al. \cite{journals/comgeo/BoseLM02} gave an algorithm to preprocess the given simple polygon in $O(n^3 \lg{n})$ time, build data structures of size $O(n^3)$, and answer any visibility polygon query in $O(\lg{n}+|VP(q)|)$ time.
Later, Aronov et~al. \cite{journals/dcg/AronovGTZ02} devised an algorithm for the same problem with preprocessing time $O(n^2\lg{n})$, space $O(n^2)$, and query time $O(\lg^2{n}+|VP(q)|)$.
Zarei and Ghodsi \cite{conf/compgeom/ZareiG05} presented an algorithm that preprocesses the given polygonal domain in $O(n^3 \lg{n})$ time to build data structures of size $O(n^3)$, and answers each visibility polygon query in $O((1+h')\lg{n}+|VP(q)|)$ time, where $h' = min(h, |VP(q)|)$.
The algorithm by Inkulu and Kapoor \cite{journals/comgeo/InkuluK09} preprocesses the input polygonal domain in $O(n^2\lg{n})$ time, builds data structures of size $O(n^2)$, and answers visibility polygon query in $O(\min(h, |VP(q)|)(\lg{n})^2 + h + |VP(q)|)$ time. 
This paper also presented another algorithm with preprocessing time $O(T + |VG| + n\lg{n})$, space $O(\min(|VG|, h n) + n)$, and query time $O(|VP(q)|\lg{n} + h)$.
Here, $|VG|$ denotes the number of edges in the visibility graph, and $T$ is the time to triangulate the free space of the given polygonal domain.
Baygi and Ghodsi \cite{journals/cgta/BaygiG13} constructed a data structure of size $O(n^2)$ in $O(n^2\lg{n})$ time, and their algorithm answers any visibility polygon query in $O(|VP(q)| + \lg{n})$ time.
Lu et~al. \cite{journals/jics/LuYW11} presented an algorithm to compute a data structure of size $O(n^2)$ in $O(n^2\lg{n})$ time, which helps in answering any visibility polygon query in $O(|VP(q)|+(\lg{n})^2+h\lg(n/h))$ time.
Chen and Wang \cite{journals/comgeo/ChenW15} gave an algorithm that preprocesses the polygonal domain in $O(n + h^2\lg{h})$ time to construct data structures of size $O(n+h^2)$, so that to answer any visibility polygon query in $O(|VP(q)|\lg{n})$ time.
Pocchiola and Vegter \cite{journals/dcg/Pocch96} considered the query version of visibility polygon computation in the polygonal domain comprising convex obstacles.
Their algorithm computes the visibility polygon of any query point in $O(|VP(q)| \lg{n})$ time after preprocessing the convex polygonal domain in $O(n\lg{n})$ time and building data structures of size $O(n)$.  

To compute the visibility graph of a simple polygon, Lee~\cite{phdthes/Lee78} and Sharir and Schorr~\cite{journals/siamcomp/SharirS86} gave algorithms that take $O(n^2\lg{n})$ time.
For this problem, algorithms given by Asano et~al.~\cite{journals/algorithmica/AsanoAGHI86} and Welzl~\cite{journals/ipl/Welzl85} take $O(n^2)$ worst-case time, which are worst-case optimal since the number of edges $|E|$ in the visibility graph is $\Theta(n^2)$ in the worst-case.
For triangulated simple polygons, Hershberger~\cite{journals/algorithmica/Hersh89} gave an $O(n+|E|)$ time algorithm. 
Since any simple polygon can be triangulated in linear time due to an algorithm by Chazelle~\cite{journals/dcg/Chazelle91}, the algorithm in \cite{journals/algorithmica/Hersh89} essentially takes $O(n+|E|)$ time, which is optimal for this problem.

There are a number of algorithms to compute the visibility graph of a polygonal domain efficiently.
Overmars and Welzl~\cite{conf/socg/OverWelzl88} gave an output-sensitive algorithm that takes $O(|E|\lg{n})$ time and $O(n+|E|)$ space.
Ghosh and Mount~\cite{journals/siamcomp/GhoshM91} presented an algorithm with $O(n\lg{n} + |E|)$ worst-case time with $O(|E|+n)$ space.
Keeping the same time complexity as the algorithm in \cite{journals/siamcomp/GhoshM91}, Pocchiola and Vegter \cite{journals/dcg/Pocch96} improved the space complexity to $O(n)$.
Kapoor and Maheshwari~\cite{conf/socg/Kapoor88,journals/siamcomp/KapoorM00} proposed an algorithm with time complexity $O(T + |E| + h\lg{n})$ and $O(|E|)$ space.
Here, $T$ is the time for triangulating the free space of the polygonal domain, which is $O(n+h(\lg{h})^{1+\delta})$ due to Bar-Yehuda and Chazelle~\cite{journals/ijcga/Bar-YehudaC94} for a small positive constant $\delta$.
Later, closely following \cite{journals/siamcomp/GhoshM91}, Chen and Wang~\cite{journals/jocg/ChenW15b} devised an algorithm with the same time complexity as the algorithm in \cite{conf/socg/Kapoor88,journals/siamcomp/KapoorM00}.
The texts by Ghosh~\cite{books/visalgo/skghosh2007} and O' Rourke~\cite{books/artgal/rourke1987} detail a number of algorithms for computing visibility polygons and visibility graphs in the plane.

In the context of visibility polygons (resp. visibility graphs), having algorithms to maintain the visibility polygon (resp. visibility graph) among dynamic polygonal obstacles helps in updating the visibility polygon (resp. visibility graph) efficiently (that is, with respect to updation time) as compared to computing the entire visibility polygon (resp. visibility graph) from scratch using traditional algorithms.
In doing this, the algorithm specifically exploits the recent changes that occurred to the polygonal domain; based on these changes, the update is performed. 
Both Inkulu and Nitish~\cite{conf/caldam/Inkulu17} and Inkulu et al.~\cite{journals/ijcga/InkuluST20} devised algorithms for maintaining the visibility polygon of a query point in a dynamic simple polygon.
After preprocessing the initial simple polygon with $n$ vertices in $O(n)$ time, the algorithm given in \cite{journals/ijcga/InkuluST20} updates data structures that store the visibility polygon of a query point in $O((k+1)(\lg{n'})^2)$ time when any vertex $v$ is inserted to (resp. deleted from) the simple polygon.
Here, $k$ is the number of combinatorial changes in the visibility polygon of $q$ due to the insertion (resp. deletion) of $v$ and $n'$ is the number of vertices in the current simple polygon.
An algorithm for maintaining the weak visibility polygon in a dynamic simple polygon is given in \cite{journals/ijcga/InkuluST20}.
Choudhury and Inkulu \cite{conf/caldam/Inkulu19} devised an algorithm for maintaining the visibility graph of a dynamic simple polygon. 
The algorithm in \cite{conf/caldam/Inkulu19} preprocesses the initial simple polygon in $O(n + |E| \lg{|E|})$ time to build data structures of size $O(n+|E|)$, and updates the visibility graph of the current simple polygon $\cal P'$ in $O((k+1)(\lg{n'})^2)$ time when a vertex $v$ is inserted to (resp. deleted from) $\cal P'$. 
Here, $k$ is the number of combinational changes in the visibility graph of $\cal P'$ due to the insertion (resp. deletion) of $v$ to $\cal P'$.
The visibility in the context of a moving observer was studied in \cite{journals/ipl/ChenDaescu98,conf/cccg/AkbariGhodsi10}.

\subsection{Our results}

In this paper, two algorithms are proposed. 
After preprocessing to build a few data structures, the first algorithm updates data structures that store the visibility polygon of any query point $q$, and the second algorithm updates the data structures that store the visibility graph of the current polygonal domain, both among a set of dynamic polygonal obstacles. 
These algorithms update relevant data structures whenever any vertex is inserted to any of the obstacles in the current polygonal domain or whenever any vertex is deleted from any of the obstacles.

\begin{itemize}
\item
Let $VP_{\cal P'}(q)$ be the visibility polygon of a query point $q$ located in the free space of the current polygonal domain $\cal P'$.
Also, let $n'$ be the number of vertices of $\cal P'$.
When any vertex $v$ is inserted to (resp. deleted from) any of the obstacles of $\cal P'$, our algorithm takes $O(k(\lg{|VP_{\cal P'}(q)|})+(\lg{n'})^{2}+h)$ (resp. $O(k(\lg n')^2+(\lg|VP_{\cal P'}(q)|)+h)$) time to update data structures that store $VP_{\cal P'}(q)$. 
Here, $k$ is the number of combinatorial changes in $VP_{\cal P'}(q)$ due to the insertion (resp. deletion) of $v$.
(The combinatorial changes to $VP_{\cal P'}(q)$ include vertices inserted, vertices deleted, edges inserted, and edges deleted from $VP_{\cal P'}(q)$.)
Given any query point $q$ located in the free space of the current polygonal domain $\cal P'$, our output-sensitive visibility polygon query algorithm computes $VP_{\cal P'}(q)$ in $O(|VP_{\cal P'}(q)|(\lg{n'})^2+h)$ time. 
The data structures constructed as part of answering the visibility polygon query algorithm facilitate efficient updations as the polygonal domain changes. 
We preprocess the initial input polygonal domain $\cal P$ in $O(n(\lg{n})^2+h(\lg{h})^{1+\epsilon})$ time, and construct data structures of size $O(n)$.
Here, $n$ is the number of vertices of $\cal P$, $h$ is the number of polygonal obstacles in $\cal P$, and $\epsilon > 0$ is a small positive constant (resulting from triangulating the free space of $\cal P$ using the algorithm in \cite{journals/ijcga/Bar-YehudaC94}). 

\vspace{0.12in}
\item
Let $VG_{\cal P'}$ be the visibility graph of the current polygonal domain $\cal P'$.
Also, let $n'$ be the number of vertices in $\cal P'$.
When any vertex $v$ is inserted to (resp. deleted from) any of the obstacles of $\cal P'$, our algorithm takes $O(k(\lg{n'})^2 + h)$ (resp. $O(k((\lg n')^2+h))$) time to update the visibility graph of $\cal P'$. 
Here, $k$ is the number of combinatorial changes in $VG_{\cal P'}$ due to the insertion of $v$.
(The combinatorial changes in $VG_{\cal P'}$ include vertices inserted, vertices deleted, edges inserted, and edges deleted from $VG_{\cal P'}$.)
We preprocess the initial input polygonal domain $\cal P$ in $O(n(\lg{n})^2+h(\lg{h})^{1+\epsilon} + |E| \lg{|E|})$ time, and construct data structures of size $O(n + |E|)$.
Here, $n$ is the number of vertices of $\cal P$, $h$ is the number of polygonal obstacles in $\cal P$, $E$ is the set of edges in the visibility graph of $\cal P$, and $\epsilon > 0$ is a small positive constant (resulting from triangulating the free space of $\cal P$ using the algorithm in \cite{journals/ijcga/Bar-YehudaC94}). 
\end{itemize}

For convenience, we assume all through the algorithm, the number of polygonal obstacles does not change, and the obstacles are pairwise disjoint.
To our knowledge, these are the first algorithms for maintaining the visibility polygon of any given point and the visibility graph of the polygonal domain among the dynamic polygonal obstacles. 
These algorithms obviate computing the visibility polygon of a given point or the visibility graph from scratch whenever the polygonal domain is modified with a vertex insertion or vertex deletion. 
 
Next, we give an overview of our approach.
For any given query point $q$, we compute two binary trees (called visibility trees) to store the visibility polygon of $q$.
These visibility tree data structures in our algorithm are a modification to visibility trees defined in \cite{journals/siamcomp/KapoorM00}.  
From these trees, we compute the visibility polygon $VP_{\cal P'}(q)$ of $q$ in the current polygonal domain $\cal P'$.
Whenever a new vertex $v$ is inserted to an obstacle of the current polygonal domain $\cal P'$, first, we perform a ray-shooting query to determine whether $v$ is visible to $q$.
If it is not visible, then our algorithm does nothing further. 
Otherwise, we traverse the visibility trees in depth-first order to determine all the vertices that are not visible to $q$ due to the insertion of $v$. 
We update the $VP_{\cal P'}(q)$ by removing these vertices from $VP_{\cal P'}(q)$.
As part of this, we also update both the visibility trees.
Similarly, in the case of deletion of a vertex $v$, by performing a ray-shooting query, it is determined whether deleted vertex $v$ is visible to $q$.
If the deleted vertex $v$ was visible to query point $q$, then the deletion of $v$ may cause the addition of some new vertices to $VP_{\cal P'}(q)$.
These new vertices are determined using our output-sensitive visibility polygon query algorithm with $q$ as the query point.
Updating $VP_{\cal P'}(q)$ involves merging a set of polygons due to these vertices into data structures that store $VP_{\cal P'}(q)$.
Our visibility polygon query algorithm enhances data structures designed in \cite{journals/comgeo/InkuluK09}, so that they work efficiently among dynamic polygonal obstacles.
As in \cite{journals/comgeo/InkuluK09}, our visibility polygon query algorithm computes visibility trees of a query point $q$ by determining sequences' of traingles bounding rays of visibility cones initiated at $q$ intersect. 
When any ray in a visibility cone is found to strike a point on the boundary of any obstacle $O$, we compute tangents from $q$ to $O$.
To compute these tangents and, in turn, for updating visibility cones, our algorithm dynamically maintains hull trees of obstacle boundaries using the algorithm by Overmars and van Leeuwen \cite{books/compgeom/prep1985}.
In updating the visibility graph, we use an important characterization to determine edges of the visibility graph $VG_{\cal P'}$ of the polygonal domain $\cal P'$ that need to be deleted from (resp. included into) $VG_{\cal P'}$ due to the insertion (resp. deletion) of a vertex from $\cal P'$. 

\subsection{Terminology}

We denote the initial input polygonal domain by $\cal P$, the number of vertices of $\cal P$ by $n$, and the number of polygonal obstacles in $\cal P$ by $h$.
We use $\cal P'$ to denote a polygonal domain just before inserting (resp. deleting) a vertex, and $\cal P''$ to denote a polygonal domain just after inserting (resp. deleting) a vertex.
We call $\cal P'$ as the {\it current polygonal domain}, and we call $\cal P''$ as the {\it updated polygonal domain}.
The number of vertices of $\cal P'$ is denoted by $n'$.
The number of pairwise disjoint polygonal obstacles in any polygonal domain is denoted by $h$.
Recall the definitions of free space of a polygonal domain, visibility polygon of a point in a polygonal domain, and the visibility graph of a polygonal domain from Section~\ref{sect:intro}. 
It is assumed that every new vertex is added between two successive vertices of an obstacle. 
Whenever a new vertex $v$ is inserted between two adjacent vertices $v_i$ and $v_{i+1}$ in polygonal domain $\cal P'$, it is assumed that two new edges are added to $\cal P'$: one between vertices $v$ and $v_{i}$, and the other between vertices $v$ and $v_{i+1}$. 
Similarly, in the case of deletion of a vertex $v$, it is assumed that after deleting $v$ from $\cal P'$ which is adjacent to vertices $v_i$ and $v_{i+1}$ in $\cal P'$, a new edge is inserted between vertices $v_i$ and $v_{i+1}$.
After adding (resp. deleting) any vertex to (resp. from) any obstacle $O$ in the current polygonal domain, our algorithm assumes $O$ continues to be a simple polygon.
Further, the newly inserted vertex in $\cal P''$ is assumed to be contained in the bounding box containing the obstacles of $\cal P$.
The visibility polygon of $q$ in any polygonal domain $\cal Q$ is denoted by $VP_{\cal Q}(q)$.
The visibility graph in any polygonal domain $\cal Q$ is denoted by $VG_{\cal Q}$.
For any simple polygon $P$, the boundary of $P$ is denoted by $bd(P)$.
Unless specified otherwise, the boundary of any simple polygon is assumed to be traversed in counter-clockwise order.
It is assumed that all the angles are measured in the counter-clockwise direction from the positive horizontal axis ($x$-axis).
A {\it constructed edge} \cite{books/visalgo/skghosh2007} is an edge $u_iu_{i+1}$ on the boundary of $VP_{\cal P'}(q)$ such that either (i) no point of $u_iu_{i+1}$, except the points $u_i$ and $u_{i+1}$, belongs to the boundary of $\cal P'$, and at least one of $u_i,u_{i+1}$ is a vertex of $\cal P'$, or (ii) every point on the edge $u_iu_{i+1}$ is lying on the boundary of $\cal P'$, and neither $u_i$ and $u_{i+1}$ is a vertex of $\cal P'$.
For every constructed edge $u_iu_{i+1}$, among $u_i$ and $u_{i+1}$, the farthest one from $q$ is termed a {\it constructed vertex} of $VP(q)$.

\begin{figure}[ht]
\begin{center}
\includegraphics[totalheight=1.4in]{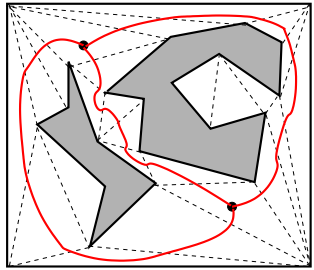}
\caption{\footnotesize Illustrating a triangulation of the free
space among two obstacles and the corridors (indicated by red solid curves).
There are two junction triangles marked by a large dot inside
each of them, connected by three solid (red) curves. Removing the two
junction triangles results in three corridors.
(This illustration is from \cite{journals/jocg/ChenIW16}.)
}
\label{fig:corridors}
\end{center}
\end{figure}
Next we describe a few structures from \cite{conf/stoc/Kapoor99,conf/socg/Kapoor88,journals/dcg/KapoorMM97} which result from decomposing the free space $\cal{F(P)}$ of a polygonal domain $\cal P$.
Consider a triangulation $\cal{T_P}$ of $\cal{F(P)}$. 
Let $G_d$ be the dual graph of $\cal{T_P}$.
First, we prune $G_d$ by iteratively removing all the vertices of degree one.
Let $G'_d$ be the pruned dual graph of $\cal{T_P}$.
For every node belonging to $G'_d$, its degree is either two or three.
Every triangle in $\cal{T_P}$ that corresponds to a vertex of degree three in $G'_d$ is known as a {\it junction}.
Removing the junctions leads to connected simple polygonal regions in $\cal P$.
Every such region is known as a {\it corridor}.
(Refer to Fig.~\ref{fig:corridors}.)
Any corridor $C$ has at most two poly-lines, each is known as a {\it corridor side}, and two edges, each is known as a {\it (corridor) bounding edge} of $C$.
For any corridor $C$, and for any bounding edge $v'v''$ of $C$, if the angle made by ray $qv'$ is less than or equal to (resp. greater than) the angle made by ray $qv''$ at $q$, then the corridor side of $C$ that has $v'$ is known as the {\it left side} (resp. {\it right side}) of $C$ (with respect to $q$).

Let $r'$ and $r''$ be two rays with origin at $p$.
Let $uv_1$ and $uv_2$ be the unit vectors along the rays $r'$ and $r''$, respectively.
A {\it cone} $C_p(r', r'')$ is the set of points defined by rays $r'$ and $r''$ such that a point $p' \in C_p(r', r'')$ if and only if $p'$ can be expressed as a convex combination of the vectors $uv_1$ and $uv_2$ with positive coefficients.
When the rays are evident from the context, we denote the cone with $C_p$.
A cone $C_p$ is called a {\it visibility cone} whenever $C_p$ contains at least one point in $\cal{F(P)}$ that is visible to $p$. 
For any cone $C_p(r_i, r_j)$, among rays $r_i$ and $r_j$, the ray that makes lesser angle with the positive $x$-axis at $p$ is the {\it left ray} of $C_p$ and the other ray is the {\it right ray} of $C_p$.
For any cone, throughout the paper, we assume the counter-clockwise angle between the left ray of $C_p$ and the right ray of $C_p$ is less than $\pi$. 
Any edge in the visibility graph of the polygonal domain is called a {\it visible edge}.
We use {\it vertex} to denote any endpoint of any edge of the polygonal domain, and we use {\it node} to denote any tree node in data structures that we construct.

The preprocessing algorithm and the data structures for maintaining the visibility polygon are detailed in Section~\ref{sect:preprocessds}.
Section~\ref{sect:insertdelete} details the algorithms to maintain the visibility polygon of any query point whenever the current polygonal domain is modified with a vertex insertion or with a vertex deletions. 
The output-sensitive visibility polygon query algorithm of any given query point is given in Section~\ref{sect:queryalgo}.
Section~\ref{sect:visgrmaint} details both the preprocessing algorithm and the algorithms for maintaining the visibility graph.
Conclusions are in Section~\ref{sect:conclu}.

\section{Preprocessing algorithm and data structures} 
\label{sect:preprocessds}

We first preprocess the input polygonal domain $\cal P$.
Using the algorithms in \cite{conf/stoc/Kapoor99,conf/socg/Kapoor88,journals/dcg/KapoorMM97}, we partition the free space $\cal{F(P)}$ of $\cal P$ into $O(h)$ corridors and junctions.
To efficiently compute tangents to dynamic sides of corridors, for each side $S$ of every corridor, we construct a hull tree corresponding to $S$ with the algorithm in \cite{journals/jcss/Overvan81,books/compgeom/prep1985}.
For locating any vertex that is inserted to (resp. deleted from) any obstacle, and for locating query points, we compute a point location data structure for the corridor structure with the algorithm in \cite{journals/jal/GoodrichT97}.

\begin{figure}[ht]
\centering
\minipage{0.3\textwidth}
\includegraphics[width=5cm]{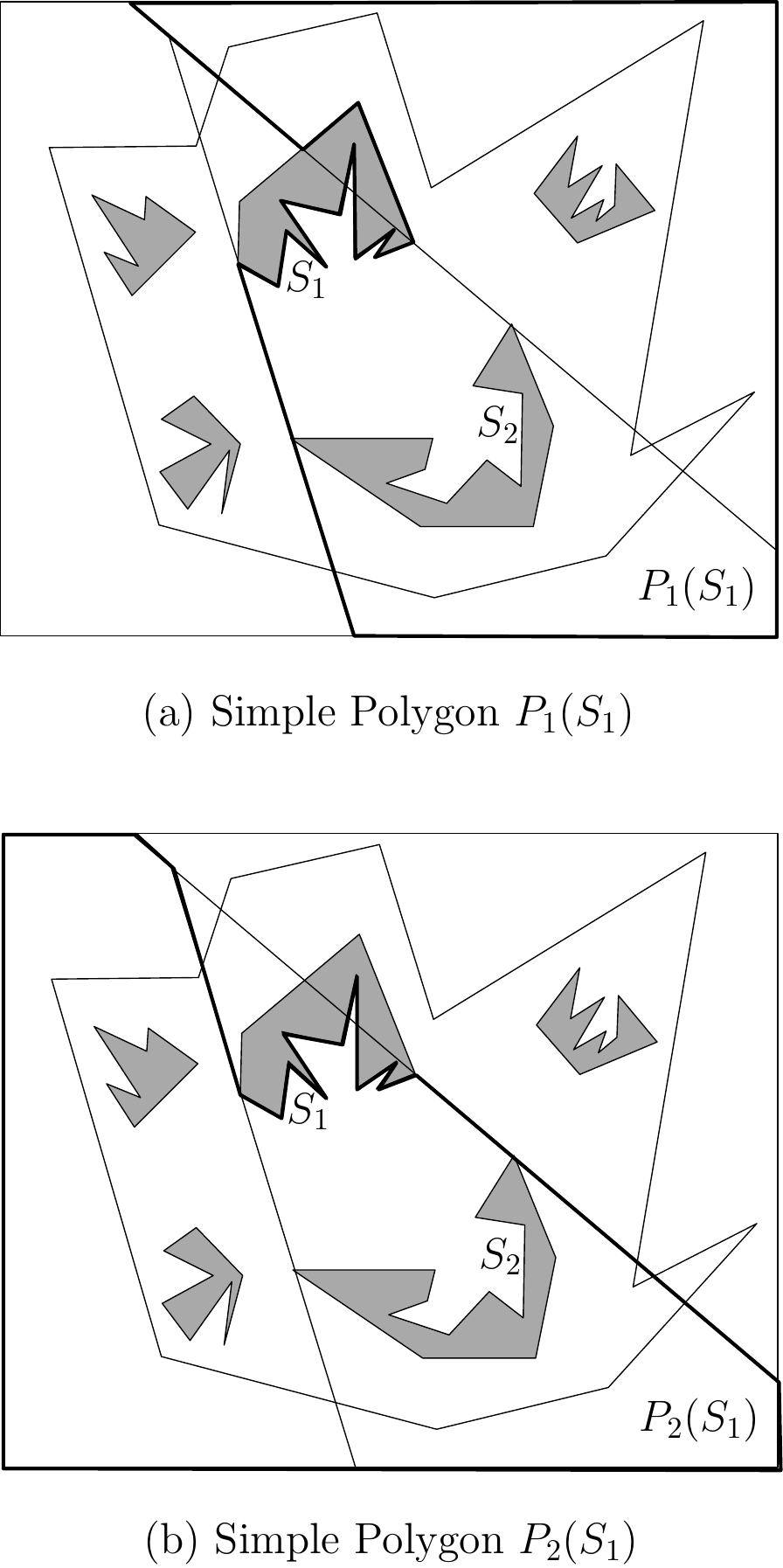}
\endminipage\hfil
\minipage{0.3\textwidth}
\includegraphics[width=5cm]{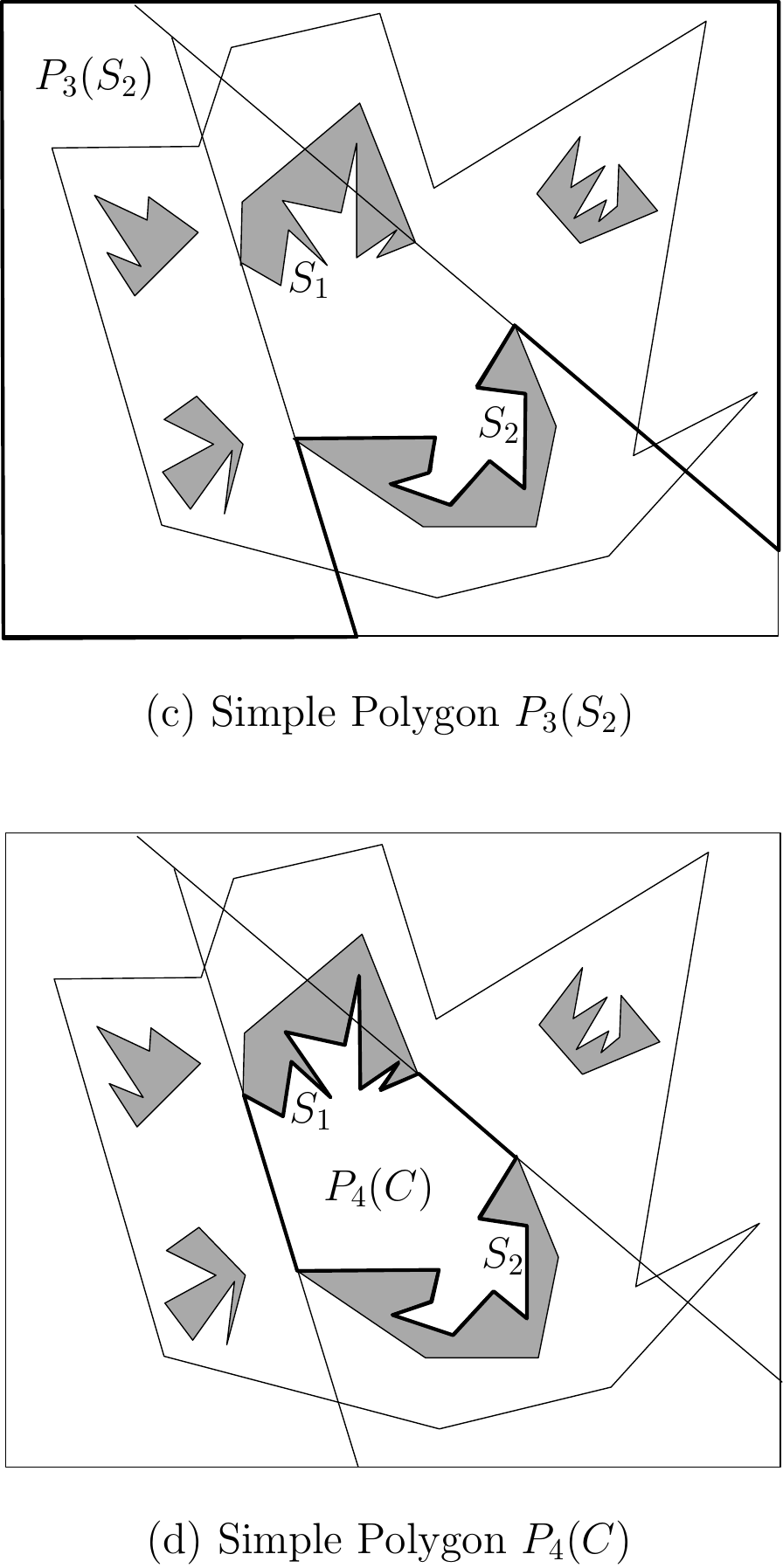}
\endminipage
\caption{Illustrating four simple polygons in $P_C$ corresponding to a corridor $C$.  (This illustration is from \cite{journals/comgeo/InkuluK09}.)}
\label{fig:foursimppoly}
\end{figure}

As in the algorithm in \cite{journals/comgeo/InkuluK09}, for every corridor $C$, we construct a set $P_C$ of simple polygons corresponding to corridor $C$.
(Refer to Fig.~\ref{fig:foursimppoly}.)
One of the simple polygons in $P_C$, denoted with $P_4(C)$, is the corridor $C$ itself.
This polygon helps in determining vertices of $C$ that are visible to $q$ when $q$ is located in $C$.
If $q$ is not located in $C$, the other three simple polygons in $P_C$, denoted by $P_1(S_1), P_2(S_1), P_3(S_2)$, each corresponding to a side of $C$, together help in determining vertices of $C$ that are visible to $q$.
In specific, two of these simple polygons $P_1(S_1), P_2(S_1)$ correspond to one side $S_1$ of $C$, and $P_3(S_2)$ correspond to the other side $S_2$ of $C$.
For more details, refer to \cite{journals/comgeo/InkuluK09}. 

Whenever an obstacle in the polygonal domain gets modified with a vertex $v$ insertion or deletion, a side of the corridor on which $v$ resides is updated.
Whenever a side $S$ of any corridor $C$ changes, we make the changes to $P_4(C)$ as well as to the simple polygons in $P_C$ that correspond to side $S$. 
In the query phase, using the algorithm in \cite{journals/ijcga/InkuluST20}, we compute visible vertices and constructed vertices (refer to \cite{books/visalgo/skghosh2007}) in these dynamic simple polygons.
As part of preprocessing required for the algorithm in \cite{journals/ijcga/InkuluST20}, for every corridor $C$, each of the four simple polygons in $P_C$ are further processed in linear time to construct data structures as required by the algorithm in \cite{journals/ijcga/InkuluST20}.
Analogous to set $P_C$ of simple polygons for $\cal P$, our algorithm maintains corresponding set $P'_C$ (resp. $P''_C$) of simple polygons for the current polygonal domain $\cal P'$ (resp. the updated polygonal domain $\cal P''$).

\begin{lemma} 
\label{lem:vispolypreproc}
Given a polygonal domain $\cal P$ defined with $h$ obstacles and $n$ vertices, our preprocessing algorithm computes data structures of size $O(n)$ in $O(n(\lg{n})^2 + h(\lg{h})^{1+\epsilon})$ time.  
Here, $\epsilon$ is a small positive constant resulting from the triangulation of free space $\cal{F(P)}$ using the algorithm in \cite{journals/ijcga/Bar-YehudaC94}.
\end{lemma}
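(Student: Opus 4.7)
The plan is to bound the cost of each preprocessing step separately and sum them; since each of the constructed data structures occupies linear space, the overall space bound will follow immediately.

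First, I would invoke the algorithm of Bar-Yehuda and Chazelle \cite{journals/ijcga/Bar-YehudaC94} to triangulate the free space $\cal{F(P)}$ in $O(n + h(\lg h)^{1+\epsilon})$ time, producing a triangulation of size $O(n)$. Using this triangulation, the corridor-and-junction decomposition due to Kapoor et~al. \cite{conf/stoc/Kapoor99,conf/socg/Kapoor88,journals/dcg/KapoorMM97} is obtained in time linear in the triangulation size (by pruning the dual tree of degree-one nodes and marking degree-three nodes as junctions), yielding $O(h)$ junctions and $O(h)$ corridors and occupying $O(n)$ space.

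Next I would address the hull-tree construction. For each corridor side $S$, the Overmars--van Leeuwen structure \cite{journals/jcss/Overvan81,books/compgeom/prep1985} can be built by inserting the vertices of $S$ one at a time; each insertion costs $O((\lg |S|)^2)$ worst-case time and contributes $O(|S|)$ space in total. Summing over all corridor sides, whose total vertex count is $O(n)$, gives a build time of $O(n(\lg n)^2)$ and space $O(n)$. This term I expect to be the dominant contribution to the time complexity, together with the triangulation term. Concurrently, I would build the point-location data structure of Goodrich and Tamassia \cite{journals/jal/GoodrichT97} on the corridor-and-junction subdivision in $O(n \lg n)$ time and $O(n)$ space, which is absorbed into the $O(n(\lg n)^2)$ term.

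Finally, for each corridor $C$ I would construct the four simple polygons of $P_C$ in time linear in the number of vertices of $C$, following \cite{journals/comgeo/InkuluK09}; summing across all corridors gives $O(n)$ time and space. Each such polygon is then preprocessed using the algorithm of \cite{journals/ijcga/InkuluST20}, which requires only linear-time preprocessing, contributing another $O(n)$ term. Adding the costs of all steps yields $O(n(\lg n)^2 + h(\lg h)^{1+\epsilon})$ preprocessing time and $O(n)$ space. The only subtle point is to verify that the combined collection of hull trees, simple polygons in the various $P_C$, and the auxiliary structures for \cite{journals/ijcga/InkuluST20} do not overcount vertices; this follows because each vertex of $\cal P$ appears on at most a constant number of corridor sides and in at most a constant number of simple polygons across the sets $P_C$.
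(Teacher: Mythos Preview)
Your proposal is correct and follows essentially the same approach as the paper: both bound the cost of triangulation/corridor decomposition, hull-tree construction, point location, the $P_C$ polygons, and the preprocessing for \cite{journals/ijcga/InkuluST20} separately, then sum. The only minor discrepancy is that the paper claims $O(n)$ for the Goodrich--Tamassia point-location preprocessing whereas you state $O(n\lg n)$; either way this is dominated by the $O(n(\lg n)^2)$ hull-tree term, so the final bound is unaffected.
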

\begin{proof}
Triangulating $\cal P$ using the algorithm in \cite{journals/ijcga/Bar-YehudaC94} and computing corridors using the algorithms in \cite{conf/stoc/Kapoor99,conf/socg/Kapoor88,journals/dcg/KapoorMM97} takes $O(n+h(\lg{h})^{1+\epsilon})$ time. 
Construction of point location data structure requires $O(n)$ time with the algorithm in \cite{journals/jal/GoodrichT97}.
Using the algorithm in \cite{books/compgeom/prep1985}, constructing hull trees for all the corridor sides together takes $O(n(\lg n)^2)$ time.
Computing the set $P_C$ of polygons \cite{journals/comgeo/InkuluK09} for every corridor $C$, while considering all the corridors together takes $O(n)$ time.
And, preprocessing all such simple polygons corresponding to all the corridors for constructing data structures required for the algorithm in \cite{journals/ijcga/InkuluST20} takes $O(n)$ time.
The data structures for hull trees and point location data structures are of size $O(n)$.
All the simple polygons constructed for all the corridors together take $O(n)$ space.
\end{proof} 

The query algorithm to construct visibility tree data structures $TVIS_{\cal P'}^B(q)$ and $TVIS_{\cal P'}^U(q)$ for the input query point $q$ among the polygonal obstacles of $\cal P'$ is described in Section~\ref{sect:queryalgo}.
The visibility polygon of $q$ is determined from the information stored at the nodes of these trees, and our dynamic algorithms update visibility trees as and when the current polygonal domain is modified with vertex insertions and vertex deletions.
Our algorithms for updating the visibility polygon of a point $q \in \cal{F(P')}$ are provided with the visibility trees $TVIS_{\cal P'}^B(q)$ and $TVIS_{\cal P'}^U(q)$.

The visibility trees were first defined in \cite{journals/siamcomp/KapoorM00}.
We modify visibility tree structures from \cite{journals/siamcomp/KapoorM00} so that they are helpful in our context. 
Here we describe the structures from \cite{journals/siamcomp/KapoorM00} together with our modifications.
For every corridor $C$ that has at least one point on the boundary of $C$ that is visible to $q$, there exists at least one node in these trees that corresponds to $C$.
Any node $t$ in either of these trees corresponds to a corridor $C^t$.
With $t$, we store a pointer to $C^t$, a visibility cone $vc^t$ (with its apex at $q$), and two red-black balanced binary search trees (refer \cite{books/algo/Cormen09}), denoted by $RBT_L^t, RBT_R^t$.
Refer to Fig.~\ref{fig:datastr}.
The $RBT_L^t$ (resp. $RBT_R^t$) at node $t$ stores every (constructed) vertex $v'$ of $VP_{\cal P'}(q)$ that belongs to the left (resp. right) side of corridor $C$ whenever $v'$ lies in $vc^t$.
With each point $p$ in both of these RBTs, we store the angle ray $qp$ makes at $q$.
If a point $p$ is stored in $TVIS_{\cal P'}^B(q)$ (resp. $TVIS_{\cal P'}^U(q)$) then the line segment $qp$ is guaranteed to intersect $B$ (resp. $U$).
For any point $p$ located on the boundary of an obstacle, and for $p$ visible to $q$, the sequence of corridors intersected by the line segment $qp$ is said to be the {\it corridor sequence} of $qp$.
We note that for any two points $p', p''$ in a corridor $C$, with both $p'$ and $p''$ visible to $q$, the corridor sequence of $qp'$ is not necessarily same as the corridor sequence of $qp''$.
Hence, in any visibility tree of $q$, there could be more than one node that corresponds to any corridor.
However, any (constructed) vertex of $VP_{\cal P'}(q)$ (or, any vertex of $\cal P'$) appears at most once in any of the RBTs stored at the nodes of these visibility trees.
For any two nodes $t', t''$ of any visibility tree, for any point $p'$ stored in either $RBT_L^{t'}$ or $RBT_R^{t'}$, and for any point $p''$ stored in either $RBT_L^{t''}$ or $RBT_R^{t''}$, the corridor sequence of $qp'$ is not equal to the corridor sequence of $qp''$. 

For every corridor $C'$, the list of visibility cones that intersect $C'$ are stored in a red-black tree, named $VC_{C'}$.
In specific, the visibility cones in $VC_{C'}$ are stored in sorted order with respect to angle left bounding ray of each cone in $VC_{C'}$ makes at $q$.
In addition, with each visibility cone $vc$ in $VC_{C'}$, we store the pointer to a node in a visibility tree that saved a visible point belonging to $bd(C') \cap vc$.
(Refer to Fig.~\ref{fig:datastr}.)
If no such visible point exists, then the pointer to a node in the visibility tree that represents the corridor nearest to $p$ along $qp$, where $p \in C' \cap vc$, is stored with $vc$.
Whenever a vertex $v$ is inserted (resp. deleted) to (resp. from) $C'$, by searching in $VC_{C'}$, we determine the visibility cone in which we need to update the visibility polygon of $q$.

\begin{figure}[h]
\centering
\minipage{0.4\textwidth}
\includegraphics[width=\linewidth]{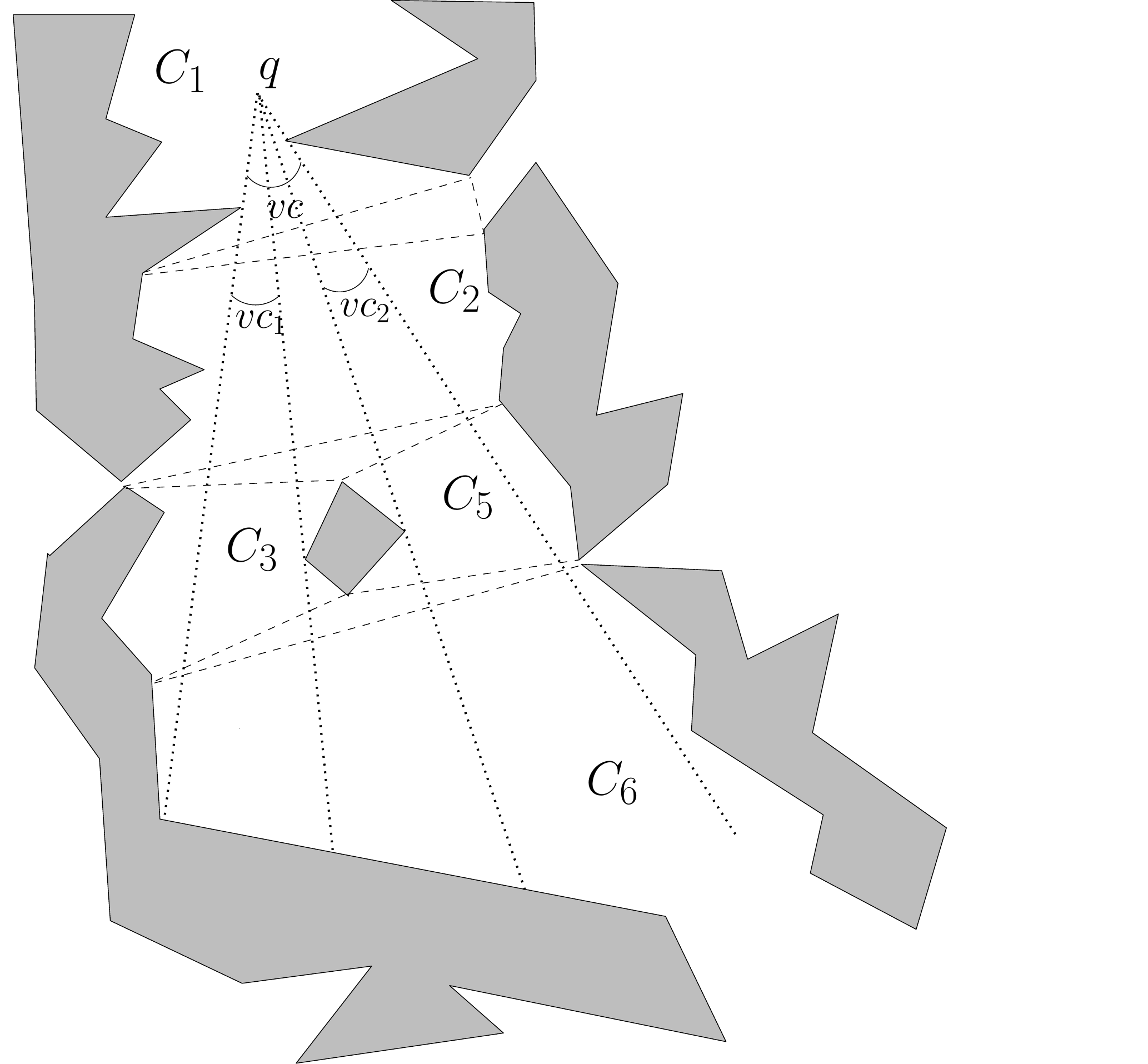}
\endminipage
\minipage{0.55\textwidth}
\includegraphics[width=1.05\linewidth]{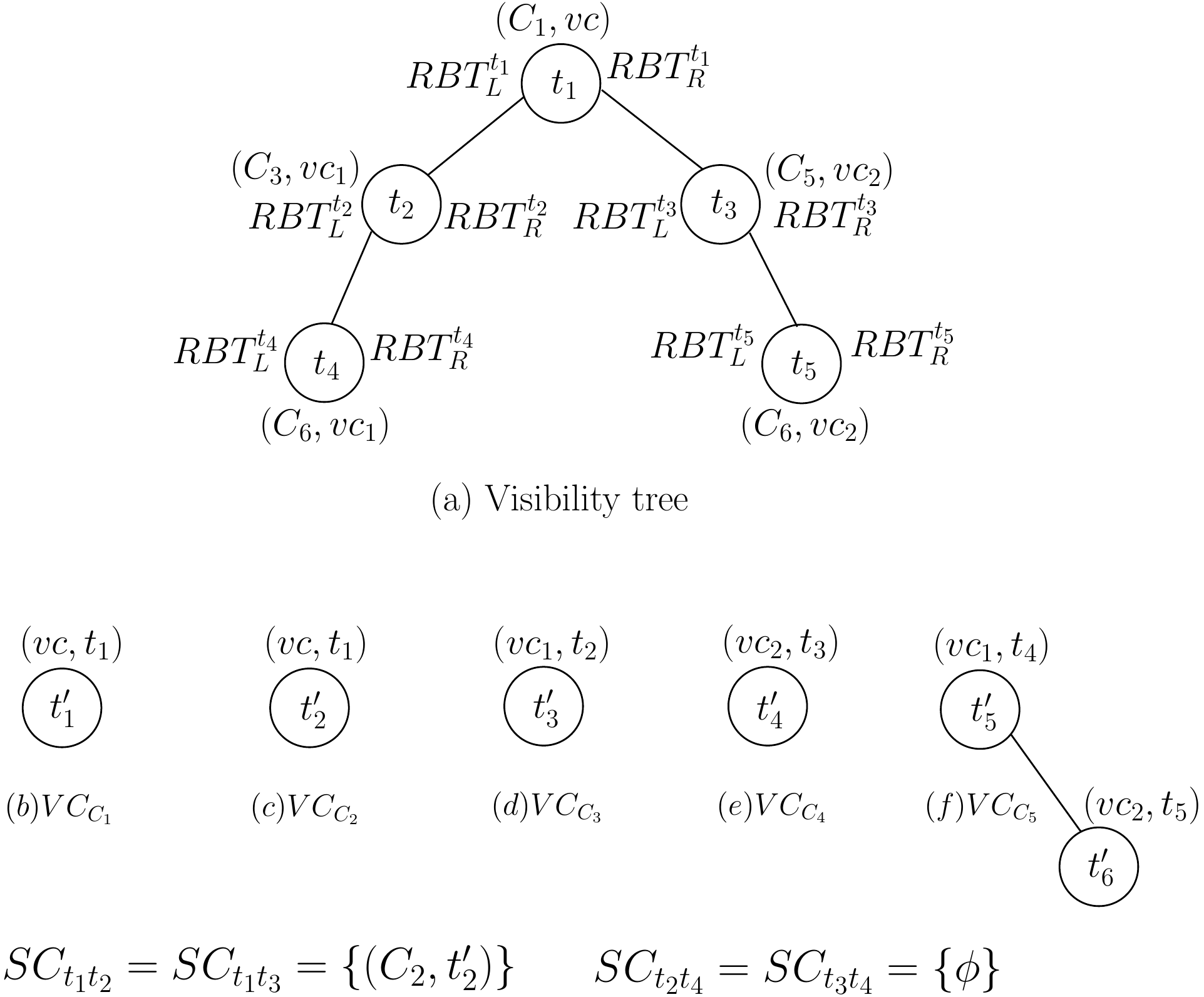}
\endminipage
\caption{Illustrating visibility tree (right top), $VC_{C_l}$ (right middle) and $SC_{t_it_j}$ (right bottom) data structures corresponding to a set of corridors (left).}
\label{fig:datastr}
\end{figure}
Let $t'$ be any node in either of the visibility trees of $q$.
And, let $t''$ be any child of $t'$.
Also, let $C', C''$ be the corridors associated to $t', t''$ respectively.
We note that it is not necessary for corridors $C'$ and $C''$ to be adjacent in the corridor subdivision of $\cal{F(P')}$.
If $C'$ and $C''$ are not adjacent, then there exists a unique sequence of corridors between $C'$ and $C''$, and this sequence of corridors is stored in a list $SC_{t't''}$.
The list $SC_{t't''}$ is associated with the edge $t't''$ of the visibility tree.
(Refer to Fig.~\ref{fig:datastr}.)
Note that if $C'$ and $C''$ are adjacent in the corridor subdivision of $\cal{F(P')}$, then the list $SC_{tt'}$ would be empty.
With every corridor $C \in SC_{t't''}$, we store a pointer to the node in the visibility tree that corresponding to visibility cone $vc^{t'}$.
In addition, the pointer stored with $vc^{t'}$ in $VC_C$ points to node $t'$.

\section{Maintaining the visibility polygon of any query point}
\label{sect:insertdelete}

Let $\cal P'$ be the polygonal domain just before inserting (resp. deleting) $v$ to (resp. from) the boundary of an obstacle.
Also, let $v_i$ and $v_{i+1}$ be the vertices between which $v$ is located.
We assume the triangle $vv_iv_{i+1}$ is located in one corridor. 
In this section, we devise algorithms for updating the data structures that save the visibility polygon of any given query point when $\cal P'$ is modified with a vertex insertion or a vertex deletion.
Mainly, we update the satellite data associated with visibility trees $TVIS_{\cal P'}^U(q)$ and $TVIS_{\cal P'}^B(q)$.
We first describe parts of the algorithm that are common to both the insertion and deletion algorithms. 
If $v$ is inserted to an obstacle of $\cal P'$, then we insert $v$ at its corresponding position into at most three simple polygons in $P_C$; note that one simple polygon in $P_C$ correspond to the side of $C$ to which $v$ is incident. 
If $v$ is deleted from an obstacle in $\cal P'$, then for every simple polygon $P \in P_C$, we delete $v$ from $P$ if $v \in P$.
Then, for each simple polygon in $P_C$ that got modified, we update the preprocessed data structures needed for determining visibility in dynamic simple polygons \cite{journals/ijcga/InkuluST20}.
Using the algorithm in \cite{journals/jal/GoodrichT97} for ray-shooting in dynamic simple polygons, we determine whether $v$ is visible to $q$ among obstacles in $\cal P'$.
This is accomplished using simple polygons in $P_C$: if $q \in C$, the ray-shooting query with ray $qv$ is performed in $P_4(C)$; otherwise, if $v$ belongs to a side $S_1$ (resp. $S_2$) of a corridor $C' (\ne C)$, we query with ray $qv$ in each simple polygon in $P_{C'}$ that corresponds to side $S_1$ (resp. $S_2$).
From the correctness of characterizations in \cite{journals/comgeo/InkuluK09}, it is immediate that we correctly determine whether $v$ is visible to $q$. 
If $v$ is found to be not visible to $q$, then $VP_{\cal P'}(q)$ does not change.
In this case, we only update the preprocessed data structures for hull trees of sides of corridor $C$, as well as the data structures for dynamic point location.
We note that all the updations of preprocessed data structures can be accomplished in $O((\lg n)^2)$ time.

Consider the case when $v$ is visible to $q$.
In this case, the insertion of $v$ (resp. deletion of $v$) may cause the deletion of (resp. insertion of) some vertices from (resp. to) the current visibility polygon $VP_{\cal P'}(q)$.
For every two vertices $v', v'' \in \{v, v_i, v_{i+1}\}$, we determine the (smaller) angle between rays $qv'$ and $qv''$.
Among these three possible cones, we find the cone $vc_m$ with the maximum cone angle.
The visible vertices belonging to $vc_m$ are the potential candidates to be deleted (resp. inserted) from (resp. to) $VP_{\cal P'}(q)$ in the insertion (resp. deletion) algorithm.
Let $B_{C_q}$ be the lower bounding edge of corridor $C_q$ containing $q$.
Without loss of generality, suppose $vc_m$ intersects $B_{C_q}$.
Noting that $v \in C$, by searching in $VC_C$, we determine the visibility cone $vc$ in which $qv$ lies. 
We let $t'$ be the node saved with $vc$ in $VC_C$, and let $C'$ be the corridor referred by $t'$.
In the following subsections, we describe the specific details of insertion and deletion algorithms. 

\subsection{\textbf{Insertion of a vertex}} 
\label{subsect:insertvert}

If corridor $C'$ is same as corridor $C$ (to remind, $v$ is in corridor $C$), then $v$ is inserted to $RBT_L^{t'}$ (resp. $RBT_R^{t'}$) of node $t'$ if $v$ is located on the left (resp. right) side of $C'$.
In the other sub-case, $C'$ is not the same as $C$.
This indicates there is no node present in $TVIS^B_{\cal P'}(q)$ that corresponds to $C$ and $vc$, i.e., before the insertion of $v$, there was no point of $bd(C) \cap vc$ is visible to $q$.
Let $t_l', t_r'$ be the left and right children of $t'$ respectively.
A new node $t''$ is inserted as a left (resp. right) child of $t'$, if $v$ is located on the left (resp. right) side of $C$, and the parent of $t_l'$ (resp. $t_r'$) is changed to $t''$.
The visibility cone $vc'$ associated with $t''$ is same as the visibility cone $vc$ associated with node $t'$.
Without loss of generality, suppose $t''$ is inserted as the left child of $t'$.
Let $C''$ be the corridor associated to node $t_l'$.
The sequence of corridors $SC_{t't_l'}$ associated to edge $t't_l'$ is splitted into two sequences: the corridor sequence between $C'$ to $C$ along visibility cone $vc$ is associated to edge $t't''$, and the corridor sequence between $C$ and $C''$ along the visibility cone $vc$ is saved with edge $t''t_l'$.
In addition, for each corridor $C'''$ in $SC_{t''t_l'}$, the pointer stored with visibility cone $vc$ in $VC_{C'''}$ is modified so that it points to node $t''$.

\begin{figure}[h]
\centering
\minipage{0.3\textwidth}
\includegraphics[width=\linewidth]{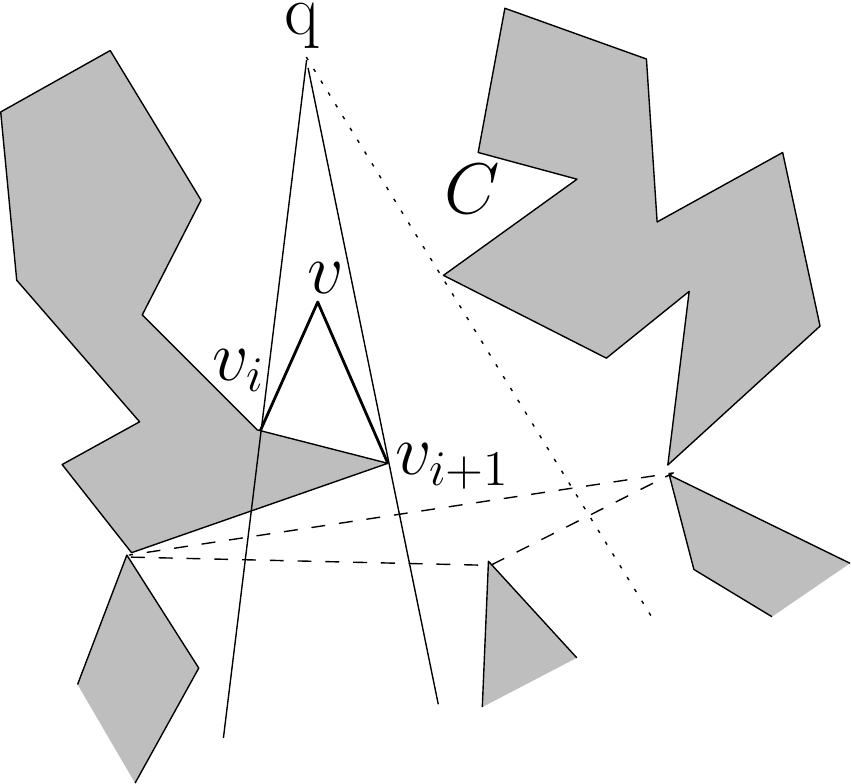}
\caption{Illustrating the case in which the rays bounding $vc_m$ are $qv_i$ and $qv_{i+1}$.}
\label{fig:ins1}
\endminipage\hfil
\minipage{0.3\textwidth}
\includegraphics[width=\linewidth]{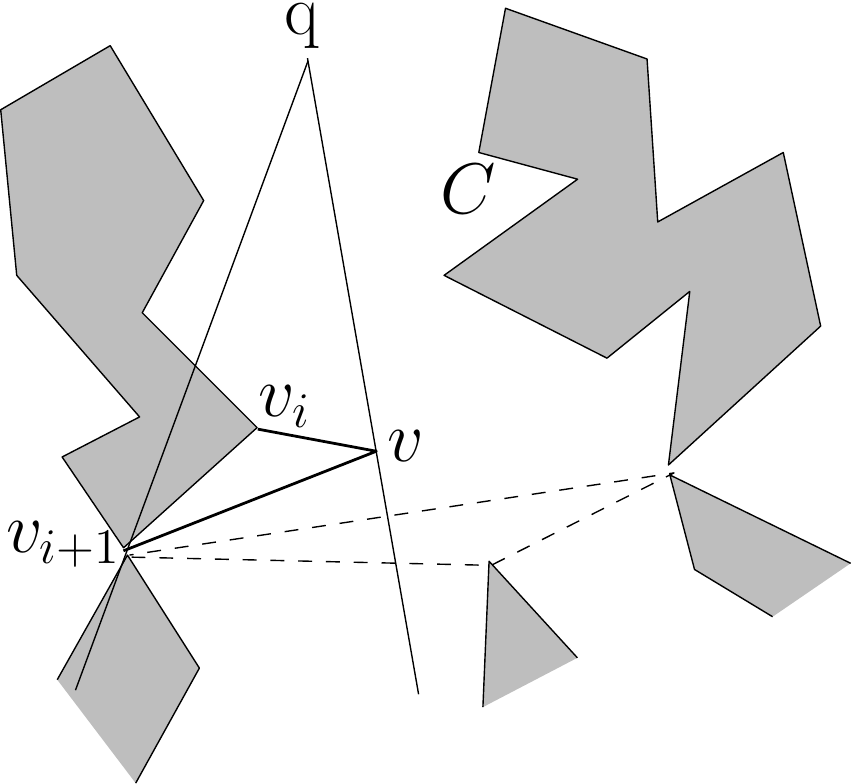}
\caption{Illustrating the case in which rays $qv_{i+1}$ and $qv$ bound $vc_m$.}
\label{fig:ins2}
\endminipage
\end{figure}

Suppose the rays bounding $vc_m$ are $qv_i$ and $qv_{i+1}$. 
(Refer to Fig.~\ref{fig:ins1}.)
Then, every vertex of $VP_{\cal P'}(q)$ continues to be visible to $q$.
Hence, there is no vertex to be deleted from $VP_{\cal P'}(q)$.
However, since $v$ is visible to $q$, we need to insert $v$ into a RBT of $TVIS^B_{\cal P'}(q)$. 
When $v$ is located on the left (resp. right) side of $C$, if $C'$ is same as $C$, then $v$ is inserted to $RBT_L^{t'}$ (resp. $RBT_R^{t'}$); otherwise, if $C'$ is not same as $C$, then vertex $v$ is inserted to $RBT_L^{t''}$ (resp. $RBT_R^{t''}$).
    
\begin{observation}
\label{obs:ins}
Let $\cal P'$ be the current polygonal domain. 
Let $VP_{\cal P'}(q)$ be the visibility polygon of a point $q \in \cal P'$.
Whenever a new vertex $v$ is inserted to an obstacle of $\cal P'$, the set of vertices of $VP_{\cal P'}(q)$ that get hidden due to the insertion of $v$ are contiguous along the boundary of $VP_{\cal P'}(q)$. 
In specific, vertices stored in any red-black tree of any visibility tree hidden due to the insertion of $v$ are contiguous at the leaves.
\end{observation}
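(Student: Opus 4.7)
The plan is to prove the observation by first characterising which vertices of $VP_{\cal P'}(q)$ become invisible after the insertion of $v$, and then translating that characterisation into a contiguity statement inside each red-black tree. Since by assumption the triangle $T=vv_iv_{i+1}$ lies in a single corridor, the only region of the plane whose status changes from free space to obstacle interior is the open triangle $T$. Hence a previously visible vertex $u$ of $VP_{\cal P'}(q)$ becomes hidden if and only if the open segment $qu$ now enters $T$. Because $T$ is convex, the set of rays from $q$ meeting $T$ is precisely the angular wedge bounded by the two rays from $q$ to the vertices of $T$ attaining the extreme angular coordinates at $q$; this wedge is exactly the visibility cone $vc_m$ identified in Section~\ref{sect:insertdelete}. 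Consequently every hidden vertex $u$ has $\theta_u\in(\alpha,\beta)$, where $[\alpha,\beta]$ is the angular extent of $vc_m$, and $|qu|>d_{in}(\theta_u)$, where $d_{in}(\theta)$ is the distance from $q$ at angle $\theta$ to the near silhouette of $T$.

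Next I plan to fix an arbitrary red-black tree $R$ at a node $t$ of a visibility tree of $q$ and analyse which entries of $R$ are hidden. By the invariants set up in Section~\ref{sect:preprocessds}, every entry of $R$ lies on a single fixed corridor side $S$ of $C^t$, is reached from $q$ along the unique corridor sequence associated with $t$, and is stored in $R$ sorted by its angle at $q$. The geometric heart of the argument is to show that among the entries of $R$ whose angles lie in $(\alpha,\beta)$, either all are hidden or none is. A ``mixed'' case would require an entry of $R$ to lie strictly inside the open triangle $T$, between the new near silhouette of $T$ and the old edge $v_iv_{i+1}$; but this region was free space contained in one corridor prior to the insertion, and by the pairwise disjointness of the obstacles no obstacle boundary, and hence no entry of any RBT, can lie in that region. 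Therefore, along the fixed corridor sequence, the polyline along $S$ visible from $q$ is uniformly either in front of or behind the new near silhouette of $T$ throughout $(\alpha,\beta)$.

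From this uniformity, the hidden entries of $R$ are precisely the entries of $R$ whose angles fall in $(\alpha,\beta)$ when $S$ lies on the far side of $T$ from $q$, and empty otherwise; in either case this set is a (possibly empty) contiguous angular sub-range of the entries of $R$. Since $R$ is a balanced binary search tree ordered by angle at $q$, this sub-range corresponds to a contiguous block of the leaves of $R$, which establishes the second (specific) assertion of the observation. The first assertion then follows because every hidden vertex of $VP_{\cal P'}(q)$, irrespective of which RBT stores it, has angular coordinate in the single wedge $vc_m$; since $bd(VP_{\cal P'}(q))$ is traversed in monotone angular order around $q$ (as $VP_{\cal P'}(q)$ is star-shaped with respect to $q$), the hidden vertices occupy a single contiguous angular block of the boundary.

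The main obstacle I foresee is the ``no-mixed-case'' step of the second paragraph: formally ruling out that an entry of $R$ can lie between the new silhouette of $T$ and the old edge $v_iv_{i+1}$ requires carefully invoking both the corridor assumption on $T$ and the disjointness of the obstacles. Once this geometric fact is in hand, the rest of the proof is routine bookkeeping on the angular ordering maintained by $R$ and the star-shaped structure of $VP_{\cal P'}(q)$.
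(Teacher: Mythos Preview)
Your proposal is substantially more elaborate than the paper's own proof, which is a two-sentence argument: the paper fixes the ordered set $S=\{v_1,\ldots,v_j\}$ of entries in a red-black tree and simply observes that the entries of $S$ falling in the angular wedge $vc_m$ form a (contiguous) subsequence of $S$, since $S$ is sorted by angle at $q$. The paper tacitly identifies ``hidden'' with ``lying in $vc_m$'' and moves on. Your proof, by contrast, takes seriously the possibility that an entry could lie in $vc_m$ yet remain visible because it sits in front of the near silhouette of $T$, and you spend most of your effort ruling this out (your ``no-mixed-case'' step). This is a genuine subtlety the paper glosses over, so your route is the more rigorous one; what it buys is an honest justification of the statement as written, rather than of the slightly weaker claim ``entries in $vc_m$ are contiguous''.

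That said, the argument you give for the no-mixed-case step does not quite go through as stated. You assert that a mixed case would force some entry of $R$ to lie strictly inside the open triangle $T$, apparently via an intermediate-value style argument along the corridor side $S$. But the entries of $R$ are isolated points on a polyline, and even the full polyline $S$ can pass from ``in front of $T$'' to ``behind $T$'' by going \emph{around} $T$ rather than through it, so nothing forces a point of $S$ into $T$. A cleaner way to close the gap is to exploit the corridor-sequence invariant you already invoke: all entries of $R$ are reached from $q$ along one fixed corridor sequence, and $T$ lies in a single corridor $C$. If $C$ is not in that sequence (equivalently, $t$ is not in the subtree rooted at the node $t'$ corresponding to $C$), no segment $qu$ with $u\in R$ can meet $T$, so no entry of $R$ is hidden. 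If $C$ is in the sequence, then every such segment traverses $C$ before reaching $C^t$; combined with the algorithm's restriction to the subtree below $t'$ and the angular condition $\theta_u\in vc_m$, this forces $qu$ to cross the near silhouette of $T$, so every entry of $R$ in $vc_m$ is hidden. Either way the hidden set is exactly the angular sub-range, and your contiguity conclusion follows.
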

\begin{proof}
Let $S = \{v_1, v_2, \ldots, v_j\}$ be the ordered set of vertices stored at a leaf of a red-black tree in $TVIS_{\cal P'}^B(q)$ (resp. $TVIS_{\cal P'}^U(q)$). 
The observation is immediate as the vertices in $S$ that fall in $vc_m$ is a subsequence of $S$.
\end{proof} 

Suppose the rays bounding $vc_m$ are $qv_{i+1}$ and $qv$.
(Refer to Fig.~\ref{fig:ins2}.)
In this case, due to the insertion of $v$, some vertices of $VP_{\cal P'}(q)$ may become not visible to $q$. 
To determine these vertices, we do the depth-first traversal of $TVIS_{\cal P'}^B(q)$, starting from $t'$ if $C'$ is same as $C$; otherwise, we do the depth-first traversal of $TVIS_{\cal P'}^B(q)$, starting from node $t''$.
Let $\alpha_1$ and $\alpha_2$ be the angles made by rays $qv_{i+1}$ and $qv$, respectively at $q$. 
Also, let $\alpha_1 < \alpha_2$.
For every red-black tree $T$ at every node $t$ encountered in this traversal, we search in $T$ to find the contiguous list of vertices such that each vertex in that list lies in the cone $vc_m$.
By Observation~\ref{obs:ins}, all the vertices belonging to this list are the ones that needed to be removed from $T$.
Hence, we remove each vertex $v'$ in this list from $T$, as $v'$ is no more visible to $q$.
Let $C^t$ be the corridor referred by node $t$.
During traversal, if visibility cone $vc^t$ associated with node $t$ is found to be lying completely inside $vc_m$, we delete the node corresponding to visibility cone $vc^t$ in $VC_{C^t}$.
    
The handling of the last case in which the $vc_m$ is bounded by rays $qv_{i}$ and $qv$ is analogous to the case in which $vc_m$ is bounded by rays $qv_{i+1}$ and $qv$.

\begin{lemma}
Let $\cal P''$ be the polygonal domain resultant from inserting a vertex $v$ to an obstacle of a polygonal domain $\cal P'$.
Also, let $VP_{\cal P''}(q)$ be the visibility polygon of $q$ among obstacles in $\cal P''$ determined by the algorithm.
Then, any point $p \in VP_{\cal P''}(q)$ if and only if $p$ is visible to $q$ among obstacles in $\cal P''$.
\end{lemma}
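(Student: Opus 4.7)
The plan is to verify both directions of the equivalence by localizing where visibility can possibly change. The key geometric observation I will lean on is that inserting $v$ adds precisely the closed triangle $\triangle vv_iv_{i+1}$ to the set of obstacles (the edge $v_iv_{i+1}$ is replaced by the two edges $v_iv$ and $vv_{i+1}$), so a point $p \in \mathcal{F(P'')}$ can have its visibility to $q$ change only if the open segment $qp$ crosses the interior of $\triangle vv_iv_{i+1}$. Any ray from $q$ that meets the interior of this triangle must lie in the cone spanned by the two of the three rays $qv,\,qv_i,\,qv_{i+1}$ that are extremal, and this extremal cone is exactly the cone $vc_m$ selected by the algorithm. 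Thus, for any $p$ outside $vc_m$, the segment $qp$ is disjoint from $\triangle vv_iv_{i+1}$, and hence $p$ is visible to $q$ in $\mathcal{P''}$ if and only if it was visible in $\mathcal{P'}$; since the algorithm leaves the visibility trees unchanged outside $vc_m$, correctness is immediate there.

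Inside $vc_m$ I would analyse the two situations identified in the algorithm description. In the first, $vc_m$ is bounded by $qv_i$ and $qv_{i+1}$ (Fig.~\ref{fig:ins1}); here $v$ lies strictly outside $vc_m$ relative to $q$ in the sense that the triangle $\triangle vv_iv_{i+1}$ lies on the far side of the chord $v_iv_{i+1}$ from $q$. Then for every $p \in vc_m \cap \mathcal{F(P')}$, the segment $qp$ does not meet the triangle's interior, so no previously visible vertex is hidden. The only new vertex that could be visible is $v$ itself, and by construction the algorithm performs the ray-shooting query along $qv$ to confirm visibility and then inserts $v$ into the correct RBT of the node whose visibility cone contains $qv$. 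The node splitting (creating $t''$ as a child of $t'$ and redistributing $SC_{t't_l'}$) only rearranges bookkeeping along the corridor sequence traversed by the cone $vc$, so it does not alter which points are represented in the visibility trees; hence both directions of the equivalence hold in this case.

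In the second situation, $vc_m$ is bounded by $qv$ together with one of $qv_i,qv_{i+1}$ (Fig.~\ref{fig:ins2}). Now $\triangle vv_iv_{i+1}$ lies on the near side, so a vertex $p$ of $VP_{\mathcal{P'}}(q)$ becomes hidden in $\mathcal{P''}$ precisely when $qp$ meets the triangle interior, which happens if and only if $p \in vc_m$. For the forward direction I would verify that every vertex the algorithm leaves in a visibility tree still satisfies the ray-shooting test against $\mathcal{P''}$: vertices outside $vc_m$ are unchanged by the above argument, and vertices inside $vc_m$ that the algorithm keeps are exactly those the RBT searches did not return, i.e. those whose angular coordinates lie outside $[\alpha_1,\alpha_2]$. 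For the reverse direction, I would show that every vertex the algorithm removes is genuinely hidden: by Observation~\ref{obs:ins} the hidden vertices in any RBT form a contiguous angular range, so the binary-search-delimited interval extracted by the algorithm at each visited node captures them all; the depth-first traversal starting at $t'$ (or $t''$) visits every node of $TVIS_{\mathcal{P'}}^B(q)$ whose visibility cone meets $vc_m$, because the tree's structure is consistent with the partial order of corridors along rays from $q$ and a ray from $q$ into $vc_m$ that strikes $\triangle vv_iv_{i+1}$ must pass through exactly those descendant nodes. The case where $vc_m$ is bounded by $qv_i$ and $qv$ is symmetric.

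The main obstacle I expect is the bookkeeping step in the second situation: proving that the depth-first traversal restricted to descendants of $t'$ (or $t''$) truly enumerates every node of the visibility tree whose RBTs can contain a newly hidden vertex, and that no vertex stored at a node in a different branch is affected. This comes down to the invariant that, for any two nodes $t_1,t_2$ of the visibility tree with $t_2$ a descendant of $t_1$, the corridor associated to $t_2$ is further from $q$ along $vc^{t_1}$ than the corridor associated to $t_1$, combined with the fact that a ray into $vc_m$ can be blocked by $\triangle vv_iv_{i+1}$ only for points lying farther from $q$ than the triangle along that ray. Once this invariant is made precise using the corridor-sequence lists $SC_{t't''}$ and the uniqueness of corridor sequences guaranteed by the construction of the visibility trees, the equivalence follows.
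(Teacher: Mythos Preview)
Your argument is essentially the same as the paper's: localize all possible changes to the cone $vc_m$, split into the cases of Fig.~\ref{fig:ins1} and Fig.~\ref{fig:ins2}, invoke Observation~\ref{obs:ins} for contiguity, and argue that the depth-first traversal rooted at $t'$ (or $t''$) enumerates exactly the affected RBTs. You in fact supply more geometric justification than the paper's own proof, which largely restates the algorithm and asserts the key geometric fact (``vertices residing inside $vc_m$ become not visible'') without further argument.

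One small point to tighten: in your second case you write that $p$ becomes hidden ``if and only if $p \in vc_m$''. As stated this is not literally true, since a vertex $p \in vc_m$ that lies between $q$ and the triangle $\triangle vv_iv_{i+1}$ is not blocked. You do correct for this later via the ``farther from $q$ than the triangle'' clause and the observation that the traversal starts at the node for corridor $C$, so only vertices stored at or below that node (hence beyond $C$) are examined; just make this qualification explicit where the biconditional first appears.
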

\begin{proof}
Let a vertex $v$ be inserted on a side of corridor $C \in \cal P'$.
If $v$ is visible to $q$, we search in $VC_C$ to find the visibility cone $vc$ in which $v$ is lying.
A new node is inserted in the current visibility tree of $q$ only if there is no node in that visibility tree that corresponds to corridor $C$ and visibility cone $vc$.
This ensures that inside a visibility cone, whenever a corridor has at least one visible point on its side, there exists a node corresponding to it in one of the visibility trees.
The red-black trees stored at any node of visibility trees that corresponds to $C$ and $vc$ stores all the vertices in $\cal P'$ and the constructed vertices that belong to $C \cap vc$.
Among the vertices stored in these red-black trees, vertices that are residing inside the cone $vc_m$ become not visible to $q$ after the insertion of $v$.
After inserting $v$ into the current visibility tree, we traverse the updated visibility tree and determine these vertices.
At every red-black tree $T$ encountered during the traversal, we search for the contiguous section of vertices (Observation~\ref{obs:ins}) residing in $vc_m$ and remove it from the current visibility polygon.
It ensures that any vertex belonging to $VP_{\cal P'}(q)$ is removed from the current visibility tree only if it has become invisible due to the insertion of $v$.
\end{proof}
        
\begin{lemma} 
Whenever a vertex $v$ is inserted to $\cal P'$, our algorithm updates the current visibility polygon $VP_{\cal P'}(q)$ of $q$ in $O(k(\lg|VP_{\cal P'}(q)|)+(\lg n')^2+h)$ time.
Here, $k$ is the number of combinatorial changes in $VP_{P'}(q)$ due to the insertion of $v$, $n'$ is the number of vertices of $\cal P'$, and $h$ is the number of obstacles in $\cal P'$.
\end{lemma}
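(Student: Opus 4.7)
The plan is to decompose the insertion time into three parts: (i) maintenance of the preprocessed data structures (hull trees, point location, simple polygons in $P_C$), (ii) the ray-shooting visibility check for $v$, and (iii) the actual modifications to the visibility trees $TVIS_{\cal P'}^B(q), TVIS_{\cal P'}^U(q)$ and the auxiliary RBTs $VC_{C'}$ and $SC_{t't''}$. Part (i) is charged to the hull tree updates of Overmars and van Leeuwen and to the dynamic point location of Goodrich and Tamassia, both of which take $O((\lg n')^2)$ worst-case time; inserting $v$ into the (at most three) simple polygons of $P_C$ and updating the structures of \cite{journals/ijcga/InkuluST20} required on them contributes another $O(\lg n')$. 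Part (ii) uses ray shooting on those dynamic simple polygons and also costs $O((\lg n')^2)$. If $v$ is determined to be not visible to $q$, the algorithm halts, and the bound follows trivially.

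For part (iii), I first account for locating where the change must be applied. Once $vc_m$ has been computed in constant time from the three rays $qv_i, qv_{i+1}, qv$, we search $VC_C$ in $O(\lg n')$ time to identify the cone $vc$ and its associated node $t'$. If a new node $t''$ must be spliced in, I would argue that the only list work required is splitting $SC_{t't_l'}$ around corridor $C$ and updating the back-pointers stored in $VC_{C'''}$ for each corridor $C'''$ appearing in the new sub-sequence; across the entire corridor decomposition, at most $O(h)$ corridors can occur in any corridor sequence, and this is where the $h$ summand in the bound originates.

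Next I bound the cost of the depth-first traversal of $TVIS_{\cal P'}^B(q)$ that actually deletes vertices of $VP_{\cal P'}(q)$ which have been hidden by $v$. By Observation~\ref{obs:ins}, the vertices to be removed from each RBT along the traversal form a contiguous subsequence, so I can locate the left and right endpoints of this subsequence in $O(\lg |VP_{\cal P'}(q)|)$ time per RBT using two searches keyed on the angle at $q$. Then the removed vertices can be extracted and the RBT rebalanced in $O(\lg |VP_{\cal P'}(q)|)$ amortized time per removal, giving $O(k\lg|VP_{\cal P'}(q)|)$ across all RBTs touched by the traversal. A visibility-tree node whose cone $vc^t$ lies entirely inside $vc_m$ is itself discarded and its entry in $VC_{C^t}$ is deleted in $O(\lg n')$; such nodes contribute to $k$ as combinatorial changes, so their cost is subsumed.

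The main obstacle is ensuring that the traversal does not visit more visibility-tree nodes than the $k$ combinatorial changes warrant. I would handle this by pruning the search at each node $t$: if the angular interval of $vc^t$ is disjoint from the interval of $vc_m$, the subtree rooted at $t$ cannot contain any affected vertex and is skipped; otherwise either some vertex is removed (charged to $k$) or the cone $vc^t$ straddles a boundary ray of $vc_m$, which can happen for at most $O(1)$ nodes along the two boundary paths at each level, and I would charge the $O(h)$ worst-case depth of those two paths to the additive $h$ term already present in the stated bound. Combining parts (i)--(iii) yields the claimed $O(k\lg|VP_{\cal P'}(q)|+(\lg n')^2+h)$ update time.
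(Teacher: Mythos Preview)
Your decomposition into maintenance of preprocessed structures, the ray-shooting visibility test, and the visibility-tree updates mirrors the paper's own proof, and you correctly identify the source of each term: $O((\lg n')^2)$ from point location and ray shooting, $O(h)$ from updating the $SC$ and $VC$ pointers when a new node is spliced in, and $O(k\lg|VP_{\cal P'}(q)|)$ from the RBT deletions. The paper's proof is in fact terser than yours; it simply asserts that ``traversing red-black trees at the nodes of visibility trees and removing the vertices not visible to $q$ takes overall $O(k\lg|VP'(q)|)$ time'' without your explicit pruning argument for why the number of visited nodes is controlled, and it bounds the $VC_C$ search by $O(\lg h)$ (since the visibility trees have $O(h)$ nodes) rather than your looser $O(\lg n')$, but these are minor differences of presentation rather than of method.
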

\begin{proof}
Locating the corridor $C$ in which $v$ is inserted takes $O((\lg{n'})^2)$ time.
Determining whether $v$ is visible to $q$ with a ray-shooting query in each of the simple polygons in $P_C$ takes $O((\lg{n})^2)$ time.
By searching in $VC_C$, we determine the visibility cone in which $v$ is lying. 
Every node in $VC_C$ points to a node in either $TVIS_{\cal P'}^B(q)$ or $TVIS_{\cal P'}^U(q)$. 
Since the total number of nodes in either of these trees is $O(h)$, searching $VC_C$ takes $O(\lg h)$ time.
Updating pointers when a new node is inserted into a visibility tree of $q$ takes $O(h)$ time.
Traversing red-black trees at the nodes of visibility trees and removing the vertices not visible to $q$ takes overall $O(k\lg{|VP'(q)|})$ time.
Updating data structures constructed in the preprocessing phase takes $O((\lg n')^2)$ time.
\end{proof}

\subsection{\textbf{Deletion of a vertex}}
  
Suppose the rays bounding $vc_m$ are $qv_i$ and $qv_{i+1}$. 
(Refer to Fig.~\ref{fig:del1}.)
The deletion of $v$ does not change the visibility of any vertex belonging to $\cal P'$.
Hence, there is no vertex needs to be included in $VP_{\cal P'}(q)$.
However, since $v$ is not going to be visible to $q$, we need to delete $v$ from node $t'$ of $TVIS^B_{\cal P'}(q)$.
The vertex $v$ will be deleted from $RBT_L^{t'}$ (resp. $RBT_R^{t'}$) if it is located on the left (resp. right) side of $C$.
We determine whether vertices $v_i$ and $v_{i+1}$ are visible to $q$.
If any of them is not visible to $q$, then using the algorithm in \cite{journals/ijcga/InkuluST20} for dynamic simple polygons and with cone $vc_m$ as input, we determine the endpoints of constructed edges (refer to \cite{books/visalgo/skghosh2007}) residing in $vc_m$ which are incident to edge $v_iv_{i+1}$.
In addition, we insert these endpoints into $RBT_L^{t'}$ (resp. $RBT_R^{t'}$) if they are located on the left (resp. right) side of $C$.

\begin{figure}[h]
\centering
\minipage{0.3\textwidth}
\includegraphics[width=\linewidth]{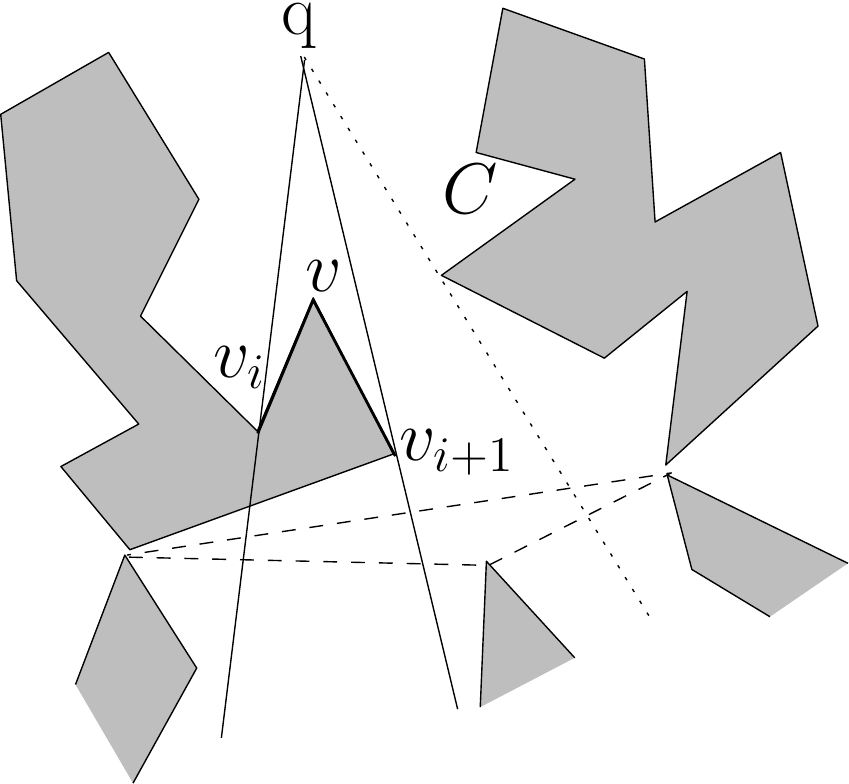}
\caption{Illustrating the case in which the rays bounding $vc_m$ are $qv_i$ and $qv_{i+1}$.}
\label{fig:del1}
\endminipage\hfil
\minipage{0.3\textwidth}
\includegraphics[width=\linewidth]{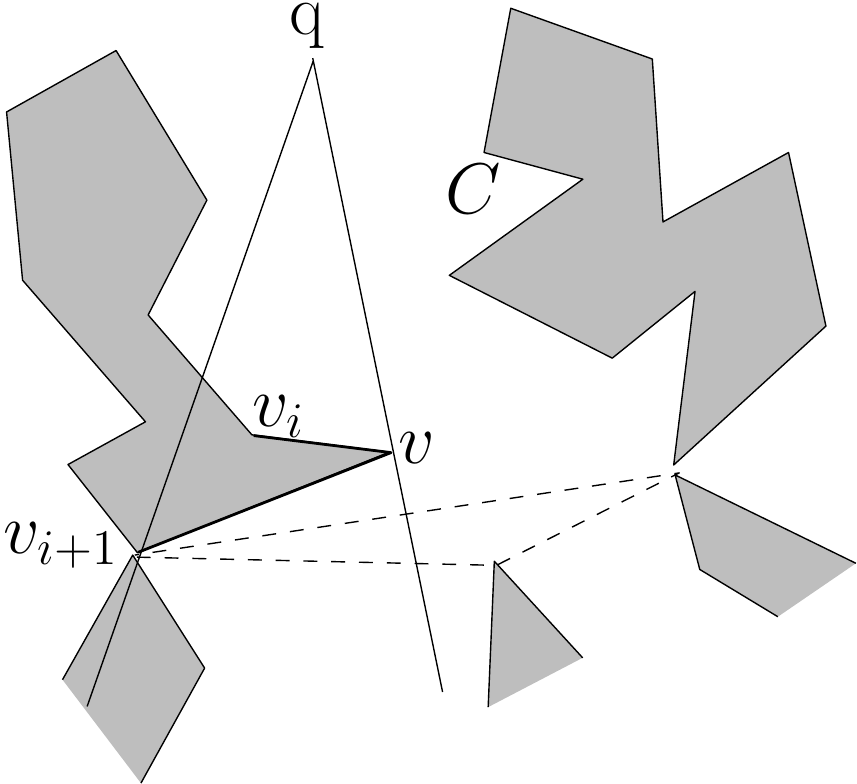}
\caption{Illustrating the case in which rays $qv_{i+1}$ and $qv$ bound $vc_m$.}
\label{fig:del2}
\endminipage
\end{figure}

Suppose the rays bounding $vc_m$ are $qv_{i+1}$ and $qv$.
(Refer to Fig.~\ref{fig:del2}).
In this case, due to the deletion of $v$, some new vertices of $\cal P'$ may become visible to $q$.
As in the above case, $v$ is deleted from $RBT_L^{t'}$ (resp. $RBT_R^{t'}$) if it is located on the left (resp. right) side of $C$.
Now, to determine vertices that become visible due to the deletion of $v$, we invoke the query algorithm described in Section~\ref{sect:queryalgo} with $vc_m \cap vc$ as the visibility cone and $q$ as the query point.
Let $vc'$ be the visibility cone $vc_m \cap vc$.
Let $vc'_l$ (resp. $vc'_r$) be the left (resp. right) bounding ray of $vc'$.
Also, let $\cal C$ be the set comprising of corridors such that the path in $G_d$ from a node of $G_d$ that corresponds to $R$ to the node of $G_d$ that corresponds to $C_q$.
Given $C$ and $vc'$, our query algorithm determines all the visible vertices on the sides of each corridor in $\cal C$ to $q$. 
(Refer to Subsection~\ref{subsect:processstack}.)
   
Let $T'$ be the tree returned by the query algorithm, and $T_{t'}$ be the subtree rooted at $t'$ in $TVIS_{\cal P'}^B(q)$. 
(Note that $t'$ is the node from which $v$ is deleted.)
For any corridor $C_i$, let $VC_{C_i}$ (resp. $VC'_{C_i}$) be the red-black tree storing the pointers to the nodes belonging to $T_{t'}$ (resp. $T'$).
Without loss of generality, suppose vertex $v$ is deleted from the left side of the corridor $C$.
Let $rr_{vc}$ be the right bounding ray of visibility cone $vc$.
For every node $t''$ in $T'$, if the right bounding ray of its visibility cone is $vc'_r$, we replace it with the ray $rr_{vc}$.
We also make similar changes in $VC'_{C''}$ corresponding to the corridor $C''$ referred by $t''$.

Further, every vertex belonging to $T'$ is added to $TVIS_{\cal P'}^{B}(q)$.
To accomplish this, we traverse the trees $T'$ and $T_{t'}$ in the breadth-first order.
For every node $t''$ that we encounter in the breadth-first traversal of $T'$, we search in $VC_{C^{t''}}$ to find a visibility cone which is lying entirely inside the visibility cone $vc^{t''}$.
If such a cone exists, it indicates that node $t''$ is present in $T_{t'}$. 
In this case, by traversing $T_{t'}$ in breadth-first order, we locate $t''$ in $T_{t'}$.
Significantly, for every $t''$, breadth-first traversal of $T_{t'}$ starts from the node where the traversal in that tree was stopped in the previous search.
The red-black trees stored at the node $t''$ in $T'$ are merged with the red-black trees stored at the node found by the breadth-first traversal in $T_{t'}$.

In the other sub-case, if node $t''$ is not present in $T_{t'}$, it indicates that before deletion, the corridor represented by this node had no visible vertex on either of its sides in visibility cone $vc$.
Due to the deletion of $v$, some portion of $bd(C_{t''})$ became visible.
Hence, $t''$ is inserted in $T_{t'}$ such that the parent of $t''$ in $T_{t'}$ is same as the parent of $t''$ in $T'$.
At the end, for every corridor $C_i$, we merge $VC'_{C_i}$ with $VC_{C_i}$.

The handling of the last case in which the $vc_m$ is bounded by rays $qv_i$ and $qv$ is analogous to the case in which $vc_m$ is bounded by $qv_{i+1}$ and $qv$.
     
\begin{lemma}
Let $\cal P''$ be the polygonal domain resultant from deleting a vertex $v$ from a polygonal domain $\cal P'$.
Also, let $VP_{\cal P''}(q)$ be the visibility polygon of $q$ among obstacles in $\cal P''$ determined by the algorithm.
Then, any point $p \in VP_{\cal P''}(q)$ if and only if $p$ is visible to $q$ in $\cal P''$.
\end{lemma}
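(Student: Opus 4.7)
The plan is to establish the two directions of the biconditional separately, and to localise the argument to the cone $vc_m$, since outside $vc_m$ nothing can change. More precisely, the first step is to observe that the only combinatorial difference between $\cal P'$ and $\cal P''$ is the removal of the triangle $vv_iv_{i+1}$ from the obstacle, and by hypothesis this triangle lies in a single corridor $C$ and inside $vc_m$. Consequently, for any point $p$ whose supporting ray $qp$ does not enter $vc_m$, the open segment $qp$ is disjoint from the symmetric difference of the free spaces of $\cal P'$ and $\cal P''$, so $p$ is visible to $q$ in $\cal P''$ if and only if it was visible to $q$ in $\cal P'$. Since the algorithm leaves $TVIS_{\cal P'}^B(q)$ and $TVIS_{\cal P'}^U(q)$ untouched outside the subtree rooted at $t'$ and outside the portion of the RBTs lying in $vc_m$, correctness outside $vc_m$ reduces to the correctness of the current visibility data structures before the deletion, which we may inductively assume.

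Next I would dispatch the easier subcase where $vc_m$ is bounded by rays $qv_i$ and $qv_{i+1}$. Here $v$ lies outside $vc_m$ together with every obstacle point of $\cal P'$ whose supporting ray crosses $vc_m$ before reaching $qv_iv_{i+1}$, so the deletion only removes $v$ from the left/right RBT of $t'$ and potentially inserts the endpoints of the new constructed edges adjacent to $v_iv_{i+1}$. Correctness of these insertions follows from the correctness of the dynamic simple-polygon visibility subroutine of \cite{journals/ijcga/InkuluST20} applied to the appropriate member of $P_C$: the endpoints returned are exactly the points of $bd(v_iv_{i+1})$ that appear on the boundary of $VP_{\cal P''}(q)$ inside $vc_m$, which is what the definition of the visibility polygon requires.

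The main subcase, and the main obstacle, is when $vc_m$ is bounded by $qv_{i+1}$ and $qv$ (the case with $qv_i$ and $qv$ is symmetric). Here the newly visible portion of $\cal{F(P'')}$ as seen from $q$ lies in $vc' = vc_m \cap vc$, and we invoke the output-sensitive query algorithm of Section~\ref{sect:queryalgo} restricted to the cone $vc'$ and the corridor sequence $\cal C$ from $C$ back to $C_q$. Assuming the correctness of that query algorithm (stated later in the paper), the tree $T'$ it returns contains exactly the vertices of $\cal P''$ and the constructed vertices of $VP_{\cal P''}(q)$ whose supporting rays lie in $vc'$. For the ``only if'' direction I would argue that each node of $T'$ inserted into $T_{t'}$ corresponds to a corridor genuinely intersected by a newly visible ray of $q$ (so insertion is not spurious) and that the rewriting of the right bounding ray from $vc'_r$ to $rr_{vc}$ correctly reflects that past $v$ the visibility region extends all the way to the previously existing boundary of $vc$. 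For the ``if'' direction I would argue that every visible point $p$ of $VP_{\cal P''}(q)$ whose supporting ray lies in $vc'$ is represented in $T'$ (by correctness of the query), and hence, after the breadth-first merge, appears either as a freshly inserted node or in a red-black tree merged into an existing node of $T_{t'}$; while every visible point whose supporting ray lies outside $vc'$ was already in $TVIS_{\cal P'}^B(q) \cup TVIS_{\cal P'}^U(q)$ by the localisation observation.

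The delicate step I expect to be the main obstacle is showing that the breadth-first merge of $T'$ into $T_{t'}$ neither duplicates a vertex nor drops one. This requires using the invariant, guaranteed by the data structure, that a corridor/visibility-cone pair corresponds to at most one node across the two visibility trees, together with the fact that a node of $T'$ already present in $T_{t'}$ is witnessed by a visibility cone of $VC_{C^{t''}}$ that lies entirely inside $vc^{t''}$. Once this invariant is established, the merging of RBTs (or attachment of new nodes with the parent relationship copied from $T'$) preserves the defining property of $TVIS$, and the combination of the two directions above yields the claim.
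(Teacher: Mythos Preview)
Your proposal is correct and follows essentially the same approach as the paper's proof: a case analysis on the bounding rays of $vc_m$, reduction to the correctness of the dynamic simple-polygon subroutine of \cite{journals/ijcga/InkuluST20} in the easy case, and reduction to the correctness of the query algorithm (the paper's Lemma~\ref{lem:querycorr}) in the main case, followed by a no-duplication argument for the merge of $T'$ into $T_{t'}$. Your write-up is actually more explicit than the paper's in separating the two directions of the biconditional and in spelling out the localisation to $vc_m$; the paper's argument is terser but structurally identical.
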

\begin{proof}
Let a vertex $v$ be deleted from a side of corridor $C \in \cal P'$.
If $v$ is visible to $q$, we search in $VC_C$ to find the visibility cone $vc$ in which $v$ is lying.
If the rays bounding $vc_m$ are $qv_i$ and $qv_{i+1}$, we delete $v$ from a visibility tree and, using the algorithm in \cite{journals/ijcga/InkuluST20} for dynamic simple polygons, with cone $vc_m$, we determine the endpoints of constructed edges (if any) that incident to edge $v_iv_{i+1}$.
These endpoints are inserted into a visibility tree.
In the other cases, after deleting the vertex $v$ from a visibility tree, the vertices which have become visible due to the deletion of $v$ are the ones that are lying in the visibility cone $vc \cap vc_m$.
Invoking the query algorithm described in Section~\ref{sect:queryalgo} with cone $vc \cap vc_m$ ensures that a new vertex is added to the current visibility tree only if it became visible after deleting $v$.
By the correctness of query algorithm (Lemma~\ref{lem:querycorr}), it is guaranteed that every vertex of $\cal P'$ or any point on an edge of $\cal P'$, which is visible to $q$ and lying in the cone $vc \cap vc_m$, is determined by the query algorithm correctly.
While merging the new visibility tree $T'$ with the current visibility tree $T_{t'}$, any node belonging to $T'$ is inserted to $T_{t'}$ only if it is not present in $T_{t'}$.
The insertion guarantees that there are no duplicate nodes present in the updated visibility trees.
\end{proof}
     
\begin{lemma} 
Whenever a vertex $v$ is deleted from $\cal P'$, our algorithm updates the current visibility polygon $VP_{\cal P'}(q)$ of $q$ in $O(k(\lg n')^2+(\lg|VP_{\cal P'}(q)|)+h)$ time. 
Here, $k$ is the number of combinatorial changes in $VP_{P'}(q)$ due to the deletion of $v$, $n'$ is the number of vertices of $\cal P'$, and $h$ is the number of obstacles in $\cal P'$. 
\end{lemma}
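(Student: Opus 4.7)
The plan is to account for the worst-case cost of each phase of the deletion algorithm described above and verify that the dominant contributions combine to $O(k(\lg n')^2+\lg|VP_{\cal P'}(q)|+h)$. I would proceed in the same order the algorithm executes.

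First, I would dispose of the preprocessing overhead. Locating the corridor $C$ in which $v$ lies via the dynamic point-location structure of \cite{journals/jal/GoodrichT97}, updating the side of $C$ in the hull-tree structure of \cite{books/compgeom/prep1985}, and refreshing the per-polygon preprocessing in $P_C$ required by the dynamic visibility algorithm of \cite{journals/ijcga/InkuluST20}, each cost $O((\lg n')^2)$. The ray-shooting query that decides whether $v$ is visible to $q$ on the simple polygons of $P_C$ costs $O((\lg n')^2)$ as well. When $v$ turns out to be invisible to $q$ the algorithm terminates here, well within the claimed bound.

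Next, assuming $v$ is visible, I would search $VC_C$ for the visibility cone $vc$ that contains $qv$, which takes $O(\lg h)$ since $|VC_C|=O(h)$, and delete $v$ from $RBT_L^{t'}$ or $RBT_R^{t'}$ at cost $O(\lg|VP_{\cal P'}(q)|)$. I would then split the analysis by which pair of rays bounds $vc_m$. In the sub-case bounded by $qv_i$ and $qv_{i+1}$, only the constructed-edge endpoints incident to $v_iv_{i+1}$ need to be recomputed via \cite{journals/ijcga/InkuluST20} and inserted into the appropriate red-black tree, contributing $O((\lg n')^2+\lg|VP_{\cal P'}(q)|)$. In the sub-case bounded by $qv_{i+1}$ and $qv$ (and, symmetrically, $qv_i$ and $qv$), the algorithm invokes the output-sensitive query of Section~\ref{sect:queryalgo} with cone $vc_m\cap vc$; by the complexity claimed for that query it produces the tree $T'$ of newly visible vertices in $O(k(\lg n')^2+h)$ time, where the $h$ summand pays for traversing the corridor graph from $C_q$ to $C$ and the $k(\lg n')^2$ summand for the $k$ newly visible vertices.

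Finally, I would bound the merging of $T'$ into $T_{t'}$. The breadth-first traversal visits $O(k)$ nodes of $T'$, and at each such node a look-up in $VC_{C^{t''}}$ costs $O(\lg h)$; since the breadth-first position in $T_{t'}$ only advances monotonically across successive searches, the combined traversal cost remains $O(k\lg h)$. At matched nodes the red-black-tree merge inserts each of the $O(k)$ newly visible vertices in $O(\lg|VP_{\cal P'}(q)|)=O(\lg n')$, the right-bounding-ray replacement $vc'_r\mapsto rr_{vc}$ contributes $O(1)$ per affected node, and the concluding union of each $VC'_{C_i}$ into $VC_{C_i}$ totals $O(k\lg h)$. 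Summing yields $O(k\lg n')$ for merging, and adding all contributions gives $O(k(\lg n')^2+\lg|VP_{\cal P'}(q)|+h)$, matching the claim. The main obstacle will be justifying that the query invocation with the restricted cone $vc_m\cap vc$ is truly output-sensitive in $k$ rather than in $|VP_{\cal P'}(q)|$; this rests on the fact that a vertex of $\cal P'$ can be newly visible only if it lies in $vc_m\cap vc$, so the output of the query restricted to that cone is exactly the set of newly visible vertices, which has cardinality $O(k)$ by definition of $k$.
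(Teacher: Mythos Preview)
Your proposal is correct and follows essentially the same decomposition as the paper: point-location and ray-shooting in $O((\lg n')^2)$, the $VC_C$ search in $O(\lg h)$, the restricted query invocation in $O(k(\lg n')^2+h)$, and the merge of $T'$ into $T_{t'}$ dominated by $O(k\lg|VP_{\cal P'}(q)|)$ for the red-black-tree updates. The only slight difference is that the paper bounds the two breadth-first traversals together by $O(h+k)$ (since $T_{t'}$ may have up to $O(h)$ nodes even with the monotone-advance trick), whereas you state $O(k\lg h)$; this does not affect the final bound because the $h$ summand is already paid by the query call.
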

\begin{proof}
Locating the corridor $C$ in which $v$ is inserted takes $O((\lg{n'})^2)$ time. 
Determining whether $v$ is visible to $q$ with a ray-shooting query in simple polygons in $P_C$ takes $O((\lg{n'})^2)$ time.
By searching in $VC_C$, we determine the visibility cone in which $v$ is lying. 
Every node in $VC_C$ points to a node in either $TVIS_{\cal P'}^B(q)$ or $TVIS_{\cal P'}^U(q)$. 
Since the total number of nodes in either of these trees is $O(h)$, searching $VC_C$ takes $O(\lg h)$ time.
Determining vertices visible due to the deletion of $v$ using the query algorithm of Section~\ref{sect:queryalgo} takes $O(k(\lg n')^2+h)$.
A breadth-first traversal of the tree returned by the query algorithm, as well as the breadth-first traversal of a visibility tree together, takes $O(h+k)$ time.  In addition, the red-black trees stored at the nodes of these trees can be updated in $O(k \lg |VP_{\cal P'}(q)|)$ time.
Searching in $VC$ data structures and updating the visibility cone at the nodes in $T'$ takes $O(k\lg h)$ time.
Updating data structures constructed in the preprocessing phase takes $O((\lg n')^2)$ time.
\end{proof}

\section{Determining the visibility polygon of a query point}
\label{sect:queryalgo}

In this section, we describe an output-sensitive algorithm to determine the visibility polygon of a query point in a dynamic polygonal domain.
We modify the algorithm for answering visibility polygon queries in \cite{journals/comgeo/InkuluK09}, so that the algorithm accommodates dynamic polygonal obstacles.

In the query phase, for any query point $q \in \cal{F(P')}$, we compute $VP_{\cal P'}(q)$.
First, we determine all the sides of corridors, each of which has at least one point visible to $q$.
Recall the visibility tree data structure from Section~\ref{sect:preprocessds}.
We store the vertices of the visibility polygon in two visibility trees, denoted by $TVIS_{\cal P'}^B(q)$ and $TVIS_{\cal P'}^U(q)$. 
For convenience, in the query algorithm, at every node in both of these visibility trees, we save an additional cone, named {\it auxiliary visibility cone}.
The auxiliary visibility cone $vc_{aux}^t$ defined at a node $t$ indicates that there is an obstacle $O \in \cal P'$ such that (i) $p \in bd(O) \cap vc_{aux}^t$, and (ii) $p$ is visible to $q$.
(The specific use of auxiliary visibility cones is described in the subsections below.)
As in \cite{journals/comgeo/InkuluK09}, for each such side $S$, the query algorithm computes all the vertices of $S$ visible to $q$.
To construct these trees, we use a stack.
This stack contains objects which are yet to be processed by the algorithm. 
Each object $obj$ in stack is represented by a tuple $[lr_{vc}, rr_{vc}, ptr_l, ptr_r, ptr_t]$.
Here, $lr_{vc}$ (resp. $rr_{vc}$) is the left (resp. right) bounding ray of the visibility cone $vc$; 
$ptr_l$ (resp. $ptr_r$) is a pointer to the first unexplored corridor in the corridor sequence of line segment $qp'$ (resp. $qp''$), where $p'$ (resp. $p''$) is the point at which $lr_{vc}$ (resp. $rr_{vc}$) strikes an obstacle $O \in \cal P'$ or the bounding box; 
and, $ptr_t$ is a pointer to the node in a visibility tree that was created at the time of initialization of $obj$.

First, using the point location data structure, we determine the corridor $C_q$ containing $q$.
Let $B_{C_q}$ (resp. $U_{C_q}$) be the lower (resp. upper) bounding edge of $C_q$.
Using hull trees, we find the points of tangency on both sides of $C_q$.
Note that there can be at most two points of tangency on each side.
For any point of tangency $p$ that is lying on the left side (resp. right side) of $C_q$, if the ray $qp$ intersects $B_{C_q}$ then point $p$ is known as $p_l^B$ (resp. $p_r^B$), and if the ray $qp$ intersects $U_{C_q}$ then point $p$ is known as $p_l^U$ (resp. $p_r^U$). 
(Refer to Fig.~\ref{fig:vis_cone}.)
    
\begin{wrapfigure}{r}{0.42\textwidth}
\centering
\includegraphics[width=0.8\linewidth]{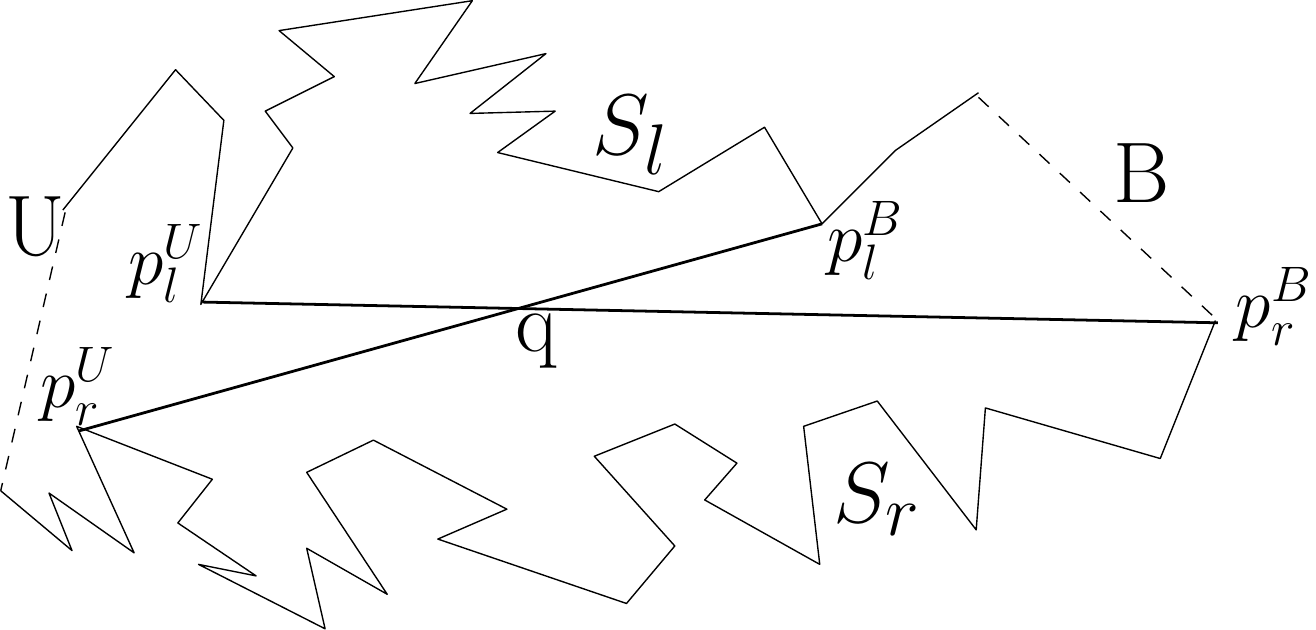}
\caption{Illustrating the initial visibility cones.}
\label{fig:vis_cone}
\end{wrapfigure}

For the current polygonal domain $\cal P'$, if the visibility cone $(p_l^B,p_r^B)$ (resp. $(p_l^U,p_r^U)$) exists, then we create a node $t^B$ (resp. $t^U$) as root node of $TVIS_{\cal P'}^B(q)$ (resp. $TVIS_{\cal P'}^U(q)$).
The node $t^B$ (resp. $t^U$) refers to corridor $C_q$, and the visibility cone $(p_l^B,p_r^B)$ (resp. $(p_l^U,p_r^U)$) is associated with $t^B$.
Also we initialize $obj^B$ (resp. $obj^U$) that corresponds to $t^B$ (resp. $t^U$). 
And, $obj^U$, followed by $obj^B$, are pushed onto the stack.

\subsection{Constructing $TVIS_{\cal P'}^B(q)$ and $TVIS_{\cal P'}^U(q)$}
\label{subsect:processstack}

The visible vertices in corridors other than corridor $C_q$ are determined by processing objects in the stack.
Let $obj = [lr_{vc}, rr_{vc}, ptr_l, ptr_r, ptr_t]$ be the object popped from the stack.
Let $CS_l$ and $CS_r$ be the corridor sequences of line segments $lr_{vc}$ and $lr_{vc}$ respectively. 
When both $ptr_l$ and $ptr_r$ refer to the same corridor $C'$, starting from $C'$ in $CS_l$ and $CS_r$, we find the last common corridor $C''$ that occurs in both $CS_l$ and $CS_r$.
For every corridor $C_i$ between $C'$ and $C''$ in $CS_l$ (or, $CS_r$), a node $t'$ associated with $ptr_t$ and $vc$ is inserted to $VC_{C_i}$, and $C_i$ saves a pointer to $t'$.
(This denotes the visibility of corridor $C_i$ is hindered by the corridor corresponding to the node pointed by $ptr_t$.)
Let $C_l$ (resp. $C_r$) be the corridor after $C''$ in $CS_l$ (resp. $CS_r$).

When the corridor $C_l$ referred by $ptr_l$ is different from the corridor $C_r$ referred by $ptr_r$, it is immediate that there is an obstacle $O$ that separates $C_l$ from $C_r$.
We find a tangent $qp_r$ to the right side of $C_l$ from $q$ and a tangent $qp_l$ to left side of $C_r$ from $q$.
We insert one node $t_l$ as the left child of $t$ which refers to $C_l$, and another node $t_r$ as the right child of $t$ which refers to $C_r$.
For every corridor $C'''$ in the sequence of corridors from $C'$ to $C''$, $C'''$ together with a pointer to $VC_{C'''}$ is associated to both the edges $tt_l$ and $tt_r$.
We also associate visibility cone $(lr_{vc}, qp_r)$ (resp. $(qp_l, rr_{vc})$) with $t_l$ (resp. $t_r$).
A node $t'$ (resp. $t''$) with $ptr_{t_l}$ (resp. $ptr_{t_r}$) and the visibility cone $(lr_{vc}, qp_r)$ (resp. $(qp_l, rr_{vc})$) is inserted to $VC_{C_l}$ (resp. $VC_{C_r}$).
In addition, an auxiliary visibility cone $(qp_r, qp_l)$ is stored at node $t$. 
We initialize $obj_l$ (resp. $obj_r$) that corresponds to $t_l$ (resp. $t_r$). 
And, $obj_r$ is pushed onto the stack, followed by $obj_l$.
(Refer to Fig.~\ref{fig:case_1}.)
        
If no corridor exists after $C''$ in $CS_l$, then we determine the point $p$ at which ray $lr_{vc}$ strikes an obstacle.
Let $C_p$ be the corridor in which $p$ is located.
We find the point of tangency $p_l$ from $q$ to the left side of $C_p$, using the hull tree corresponding to that side. 
(Refer to Fig.~\ref{fig:case_2.1}.)
One new node is inserted as left child $t_l$ of $t$ that correspond to $C_p$ and $vc$. 
A node $t'$ with $ptr_{t_l}$ and $vc$ is inserted to $VC_{C_p}$.
We initialize an object $obj$ that corresponds to $t_l$ and push that object onto the stack.
For every corridor $C'''$ in the sequence of corridors from $C'$ to $C''$, $C'''$ together with a pointer to $VC_{C'''}$ is associated to edge $tt_l$.
When tangent to the left side of $C_p$ does not exist, no object is pushed onto the stack. 
(Refer to Fig.~\ref{fig:case_2.2}.)
The algorithm for handling when there is no corridor after $C''$ in $CS_r$ is analogous.

\begin{figure}[h]
\centering
\minipage{0.27\textwidth}
\includegraphics[width=\linewidth]{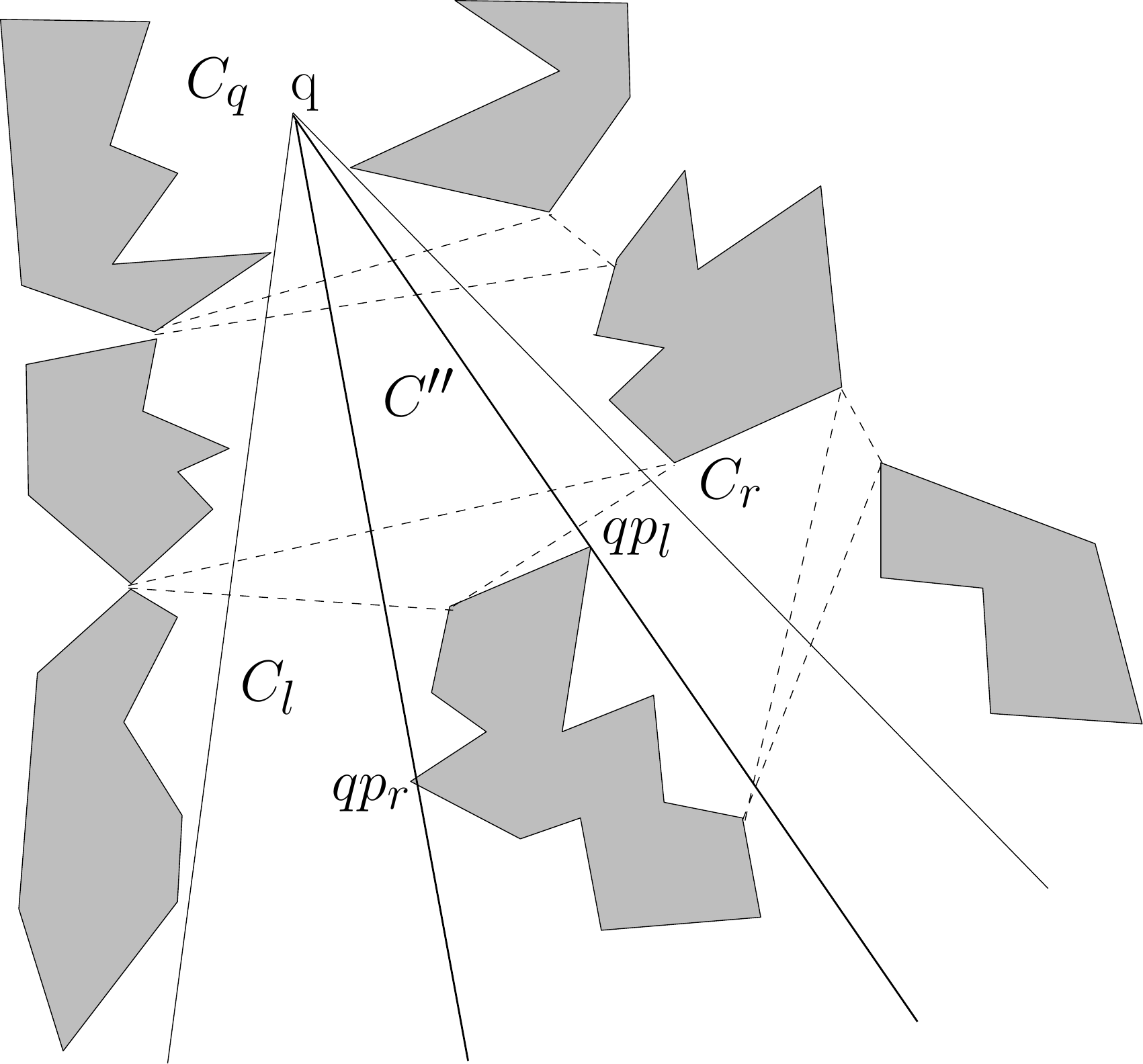}
\caption{Illustrating the case of $C_l$ being not equal to $C_r$.}
\label{fig:case_1}
\endminipage\hfill
\minipage{0.27\textwidth}
\includegraphics[width=\linewidth]{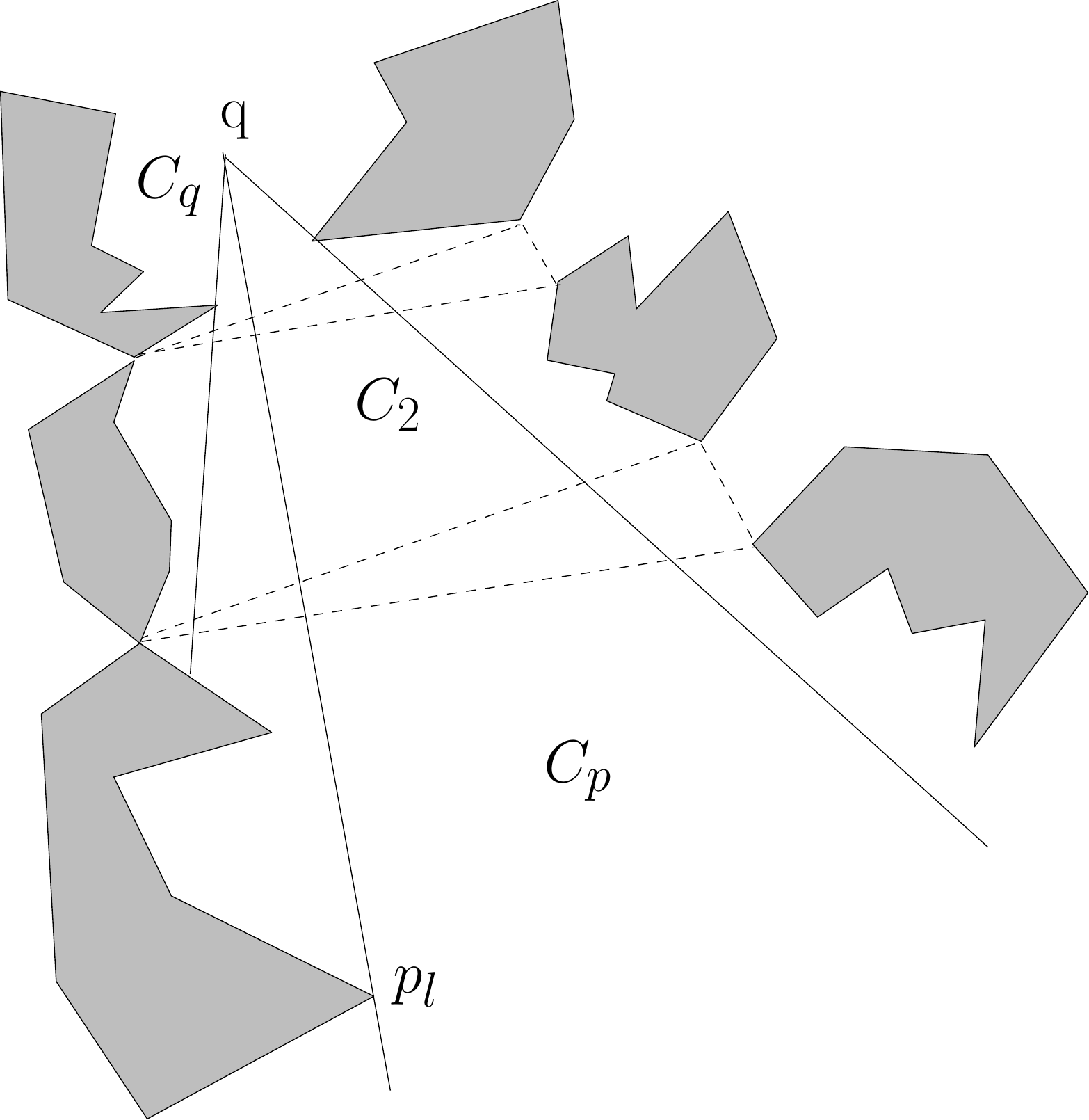}
\caption{Illustrating the ray $lr_{vc}$ striking corridor $C_p$, and $qp_l$ being a tangent to left side of $C_p$.}
\label{fig:case_2.1}
\endminipage\hfill
\minipage{0.27\textwidth}
\includegraphics[width=\linewidth]{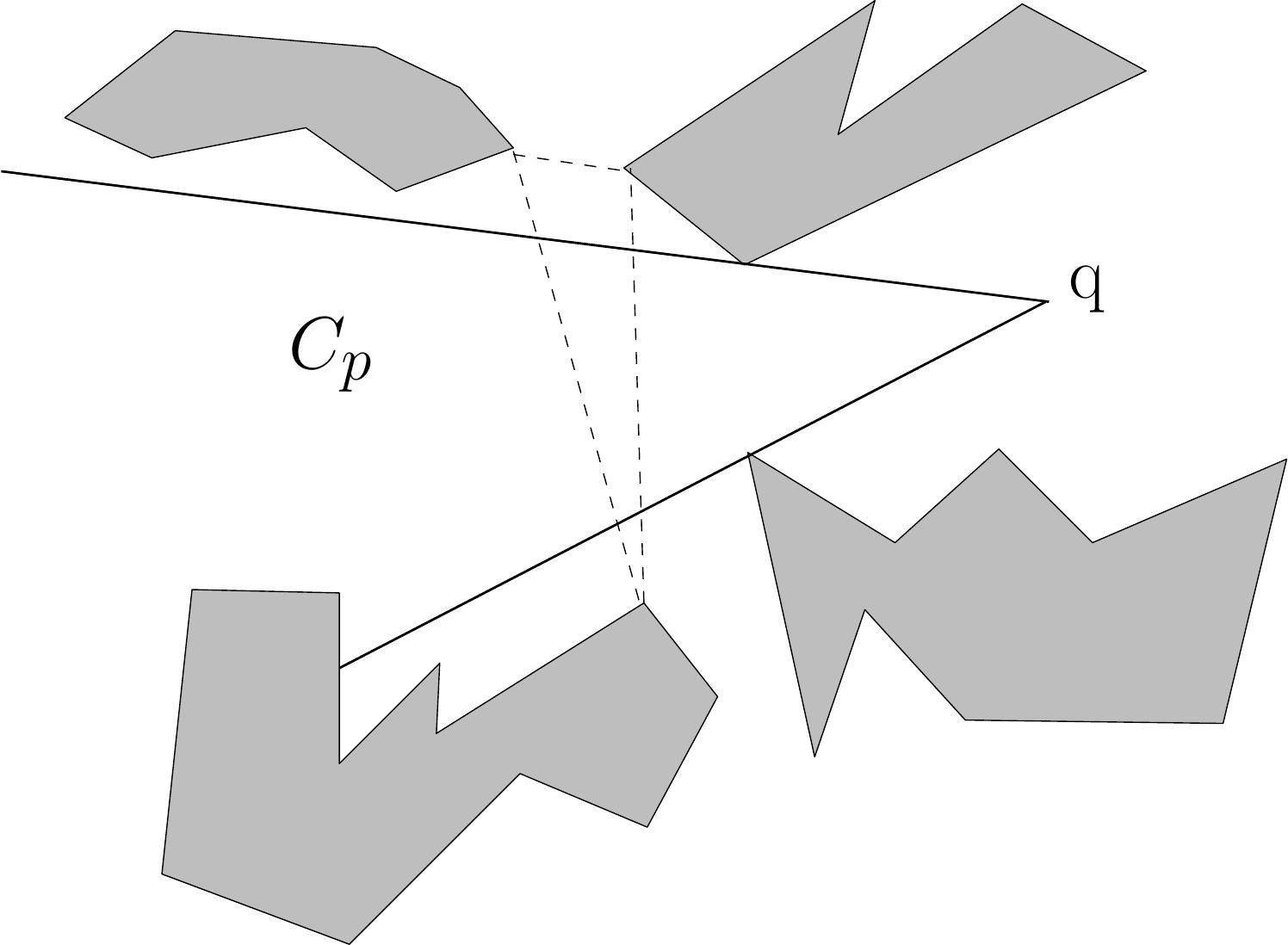}
\caption{Illustrating the ray $lr_{vc}$ striking corridor $C_p$, and tangent to left side of $C_p$ does not exist.}
\label{fig:case_2.2}
\endminipage
\end{figure}

\subsection{Computing visible vertices at the nodes of $TVIS_{\cal P'}^B(q)$ and $TVIS_{\cal P'}^U(q)$}
    
As in \cite{journals/comgeo/InkuluK09}, we traverse each of the visibility trees in depth-first order.
At every node $t$, for each side $S$ of $C_t$, we determine vertices of $VP_{\cal P'}(q)$ that belong to $S \cap vc^t$, by applying the algorithm in \cite{journals/ijcga/InkuluST20} to each simple polygon in $P_{C_t}$ that corresponds to $S$, with $q$ and $vc^t$ as the additional two parameters.
In addition, we store visible vertices on the left side (resp. right side) of $C_t$ in $RBT_{L}^{t}$ (resp. $RBT_{R}^{t}$).
After recursively traversing both the left and right subtrees of $t$, we determine all the vertices that are visible in $vc_{aux}^t$. 
Let $t_l$ and $t_r$ be the left child and right child of $t$, respectively. 
Let $C_t$ be the corridor referred by node $t$.
Let $O$ be the obstacle that lies in $vc_{aux}^t$.
Also, let $C_l$ and $C_r$ respectively be the corridors corresponding to $t_l$ and $t_r$.
In other words, $bd(O)$ is the union of right side $S_r$ of $C_l$ and the left side $S_l$ of $C_r$. 
We determine all the vertices of $S_r$ (resp. $S_l$) that are visible to $q$ and located in cone $vc_{aux}^t$, by applying the algorithm in \cite{journals/ijcga/InkuluST20} to each simple polygon in $P_{C_l}$ that corresponds to $S_r$ (resp. $S_l$), with $q$ and $vc_{aux}^t$ as the additional two parameters.
In addition, we store these visible vertices in $RBT_{R}^{t_l}$ (resp. $RBT_{L}^{t_r}$).

\subsection{Constructing $VP_{\cal P'}(q)$ using $TVIS_{\cal P'}^B(q)$ and $TVIS_{\cal P'}^U(q)$}

To construct $VP_{\cal P'}(q)$, we traverse both the visibility trees in depth-first order. First we traverse $TVIS_{\cal P'}^B(q)$ followed by $TVIS_{\cal P'}^U(q)$.
At every node $t$ encountered during the traversal, we traverse the leaf nodes of $RBT_L^t$ in left to right order and output the respective points stored at them.
Then, we recursively traverse the left subtree of $t$, followed by the right subtree of $t$.
After that, we traverse the leaf nodes of $RBT_R^t$ in the right to left order and output the respective points stored at them.
Let $S_{dft}$ be the ordered set of points obtained from the depth-first traversal of $TVIS_{\cal P'}^B(q)$ followed by $TVIS_{\cal P'}^U(q)$ as described above.
The sequence of edges obtained by joining the consecutive points in the ordered set $S_{dft}$, followed by the edge joining the first and last point in $S_{dft}$ is the boundary of visibility polygon of query point $q$.

\begin{lemma} \label{l2}
Let $\cal P'$ be a polygonal domain.
Also, let $VP_{\cal P'}(q)$ be the visibility polygon of $q$ among obstacles in $\cal P'$ determined by the algorithm.
Then, any point $p \in VP_{\cal P'}(q)$ if and only if $p$ is visible to $q$ in $\cal P'$.
\end{lemma}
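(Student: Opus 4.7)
The plan is to prove both directions of the biconditional by induction on the structure of the two visibility trees $TVIS_{\cal P'}^B(q)$ and $TVIS_{\cal P'}^U(q)$ produced by the stack-driven construction of Subsection~\ref{subsect:processstack}. First I would establish a structural invariant maintained as objects are popped from the stack: at the moment an object $obj = [lr_{vc}, rr_{vc}, ptr_l, ptr_r, ptr_t]$ is processed, the cone $vc$ is an unexplored angular interval at $q$ whose two bounding rays are either the initial tangents to the sides of $C_q$ or tangents to obstacles pushed during earlier pops. Because each new bounding ray $qp_l$ or $qp_r$ is produced by a tangent query on a hull tree of a corridor side, it is exactly the direction at which the closest obstacle along a rotating ray from $q$ changes. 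Consequently, a corridor $C'$ appears as a node of one of the visibility trees if and only if at least one point of $bd(C') \cap vc$ is closest to $q$ (hence visible) within some sub-cone; and the auxiliary cone $vc_{aux}^t$ stored at a node $t$ faithfully records the angular interval over which the obstacle separating the two children of $t$ is visible.

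Next I would argue per-node correctness of the visible-vertex computation. For a node $t$ referring to corridor $C_t$ with visibility cone $vc^t$, the algorithm applies the dynamic simple-polygon algorithm of \cite{journals/ijcga/InkuluST20} to the polygons of $P_{C_t}$ with $q$ and $vc^t$ as additional parameters; the analogous call is made on $vc_{aux}^t$ for the obstacle separating the children. Correctness here reduces, by the characterizations of \cite{journals/comgeo/InkuluK09}, to the observation that a vertex $v$ of a side $S$ of $C_t$ is visible to $q$ in $\cal P'$ within the cone $vc^t$ if and only if it is visible to $q$ within the appropriate simple polygon of $P_{C_t}$ (with $P_4(C_q)$ playing this role when $q \in C_t$), together with the correctness of \cite{journals/ijcga/InkuluST20} on that dynamic simple polygon. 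Combined with the structural invariant above, this shows that every vertex of $\cal P'$ visible to $q$ is inserted into exactly one red-black tree of exactly one visibility-tree node, while every vertex inserted is certified visible by the tangent construction; the same reasoning applies to the constructed vertices produced along the bounding rays of each cone.

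Finally, I would prove the ordering claim that justifies passing from these red-black trees to the polygon $VP_{\cal P'}(q)$. Define for each node $t$ the angular interval $[\alpha_l(t),\alpha_r(t)]$ from $vc^t$ and the subinterval excised by $vc_{aux}^t$. A straightforward induction on tree height, using the fact that the left-child cone and the right-child cone lie in the two angular components of $[\alpha_l(t),\alpha_r(t)] \setminus vc_{aux}^t$ in left-then-right order, shows that the sequence $S_{dft}$ produced by the prescribed depth-first traversal lists the true visible vertices of $\cal P'$ in counter-clockwise angular order around $q$, with successive entries either adjacent along the boundary of some corridor side or separated by a constructed edge joining two tangent points stored as cone boundaries. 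Since $TVIS_{\cal P'}^B(q)$ covers the angular wedge whose rays pierce the lower bounding edge of $C_q$ and $TVIS_{\cal P'}^U(q)$ covers the complementary wedge through the upper bounding edge, concatenating their traversals yields the full boundary of $VP_{\cal P'}(q)$.

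The main obstacle I anticipate is this last angular-ordering argument. Because a single corridor may appear at multiple nodes in the same tree, and because two siblings may correspond to corridors that are neither adjacent in $G_d$ nor close geometrically, one has to carefully verify that the left/right child orientation coming out of the stack construction coincides with the angular left/right orientation seen from $q$. Once the invariant $\alpha_l(t_l) \le \alpha_r(t_l) \le \alpha_l(vc_{aux}^t) < \alpha_r(vc_{aux}^t) \le \alpha_l(t_r) \le \alpha_r(t_r)$ is proved by induction, the equality between $S_{dft}$ and the boundary of $VP_{\cal P'}(q)$ becomes immediate and the lemma follows.
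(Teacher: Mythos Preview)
Your proposal follows essentially the same route as the paper's proof: both argue that the tangent rays produced by hull-tree queries carve the angular range at $q$ into maximal sub-cones, that each stack object correctly records an unexplored sub-cone, and that the per-node simple-polygon calls of \cite{journals/ijcga/InkuluST20} together with the characterizations of \cite{journals/comgeo/InkuluK09} certify the visible and constructed vertices inside each such cone. The paper's argument is a narrative case analysis (initial cone at $C_q$, obstacle-splitting case, ray-strike case), whereas you phrase the same content as an explicit structural invariant maintained across pops; these are the same idea at different levels of formality.

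Where you genuinely go further is the angular-ordering step. The paper's proof stops at ``every visible vertex is stored in some red-black tree of some node'' and does not justify that the depth-first traversal order of $S_{dft}$ coincides with the boundary order of $VP_{\cal P'}(q)$; it treats the reconstruction in Subsection~4.3 as self-evident. Your inductive invariant $\alpha_l(t_l) \le \alpha_r(t_l) \le \alpha_l(vc_{aux}^t) < \alpha_r(vc_{aux}^t) \le \alpha_l(t_r) \le \alpha_r(t_r)$ is exactly what is needed to close that gap, and the obstacle you flag (that left/right child orientation from the stack must match angular left/right from $q$, even when a corridor appears at several nodes) is real but routine once you observe that the stack always pushes $obj_r$ before $obj_l$ and that the tangents $qp_r$, $qp_l$ are chosen on the right side of $C_l$ and left side of $C_r$ respectively. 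So your plan is sound and in fact fills in a step the paper leaves implicit.
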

\begin{proof} 
Consider a point $p$ that is visible to $q$. 
The point $p$ may be a vertex of $\cal P'$, or it may be a visible point on an edge of $\cal P'$ that will appear as an endpoint of a constructed edge of $VP_{\cal P'}(q)$.
As the algorithm starts, we determine the corridor $C_q$ containing $q$.
Let $B_{C_q}$ (resp. $U_{C_q}$) be the lower (resp. upper) bounding edge of $C_q$.
To determine the initial visibility cone, we find tangents to both the sides of $C_q$ from $q$.
Any point which is lying outside the visibility cones defined using these tangents is guaranteed to be not visible to $q$.
After this, we create the root node of $TVIS_{\cal P'}^B(q)$ (resp. $TVIS_{\cal P'}^U(q)$) if there exist a visibility cone that intersects $B_{C_q}$ (resp. $U_{C_q}$).
This root node refers to the corridor $C_q$.
It ensures that any point that is lying on either of the sides of $C_q$ and visible to $q$ is determined by the algorithm. 

Consider the other case when $p$ lies in a corridor $C'$ other than $C_q$.
Every such corridor $C'$ is determined by processing the objects popped from the stack.
Let $vc$ be the visibility cone in the current object popped from the stack.  
Whenever there is an obstacle $bd(O)$ that lies in $vc$, the tangents found on the boundaries of $bd(O)$ ensure part of that scene that is not visible due to $bd(O)$ is not considered further. 
The auxiliary cone helps in computing the points on $bd(O)$ that are visible to $q$. 
In the other case, when one of the bounding rays $r$ of $vc$ strike the side $S$ of a corridor $C_p$, we find the tangent from $q$ on the side $S$. 
Using this, we determine a section of $bd(O)$ such that no point on that section is visible to $q$.

Further, in every visibility cone for every corridor that is having at least one visible point on either of its sides, we insert a node corresponding to it in a visibility tree.
Hence, by traversing both the visibility trees, it is ensured that every vertex of $\cal P'$ or any point on an edge of $\cal P'$ that is visible to $q$, is guaranteed to be determined.
Whenever we pop an object from the stack while going through the corridor sequences of the bounding rays of the current visibility cone, for every corridor $C$ in the corridor sequence, we update $VC_{C}$ data structure.
This helps in correctly maintaining the visible cones that intersect $C$ in sorted order of their intersection with $C$.
\end{proof}

\begin{lemma}
\label{lem:querycorr}
Let $\cal P'$ be the current polygonal domain.
Given any query point $q$ in $\cal{F(P')}$, the visibility polygon of $q$ among obstacles in $\cal P'$ is computed in $O(|VP_{\cal P'}(q)|(\lg{n'})^2+h)$ time.
\end{lemma}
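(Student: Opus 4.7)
The plan is to add up the costs of each phase of the query algorithm and show the total fits within $O(|VP_{\cal P'}(q)|(\lg n')^2+h)$.

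First I would handle the initialisation. Locating $q$ in the dynamic point-location structure of \cite{journals/jal/GoodrichT97} costs $O((\lg n')^2)$, and computing the at most four tangents from $q$ to the two sides of the enclosing corridor $C_q$ with the hull trees of \cite{books/compgeom/prep1985} costs $O((\lg n')^2)$. This creates the roots of $TVIS_{\cal P'}^B(q)$ and $TVIS_{\cal P'}^U(q)$, together with the two initial stack entries, in $O((\lg n')^2)$ time.

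Next I would bound the stack-processing work of Subsection~\ref{subsect:processstack}. Each pop creates at most two fresh visibility-tree nodes, each associated with a unique (corridor, cone) pair that actually carries a point visible to $q$ inside the current cone; since the cones attached to distinct tree nodes are pairwise disjoint around $q$, the total number of tree nodes is $O(h)$. For each fresh node, the algorithm performs $O(1)$ tangent queries on hull trees at $O((\lg n')^2)$ each, plus $O(1)$ insertions into the red-black trees $VC_{C}$ at $O(\lg h)$ each. In addition, every pop walks through the corridor sequences $CS_l$ and $CS_r$ to attach the current node to intermediate corridors via $VC$ updates; a charging argument is needed to show that the total length of these traversals, summed over all pops, is $O(h)$, using that the cones attached to tree nodes are angularly disjoint so each corridor is touched $O(1)$ times per visibility tree. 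This yields a stack-processing cost of $O(h(\lg n')^2+h\lg h)$; the first term is absorbed into $|VP_{\cal P'}(q)|(\lg n')^2$ by charging each fresh tree node to at least one output vertex it witnesses (either a visible vertex of $\cal P'$ or an endpoint of a constructed edge), and the residual is $O(h)$.

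Then I would bound the cost of reporting visible and constructed vertices inside the primary cone $vc^t$ and the auxiliary cone $vc_{aux}^t$ at each tree node $t$, via the dynamic simple-polygon visibility subroutine of \cite{journals/ijcga/InkuluST20} applied to the $O(1)$ preprocessed polygons in $P_{C_t}$. That subroutine reports each output vertex in $O((\lg n')^2)$ time, and inserting it into the appropriate $RBT_L^t$ or $RBT_R^t$ costs an additional $O(\lg n')$. Because the primary cones across the nodes of both visibility trees, together with the auxiliary cones, are pairwise disjoint and every output vertex of $VP_{\cal P'}(q)$ lies in exactly one of them, the sum of the per-node output sizes telescopes to $|VP_{\cal P'}(q)|$, giving $O(|VP_{\cal P'}(q)|(\lg n')^2)$. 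Finally, the depth-first assembly pass visits each tree node and each stored vertex once, costing $O(h+|VP_{\cal P'}(q)|)$, and summing the phases yields the claimed bound.

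The main obstacle I foresee is the amortised bookkeeping in the stack-processing phase: both the $O(h)$ bound on the aggregate length of corridor-sequence traversals and the charging of each fresh visibility-tree node to a distinct output vertex of $VP_{\cal P'}(q)$ must be argued carefully. Without these two ingredients the phase degrades to $O(h(\lg n')^2)$ and the additive $+h$ term is lost; with them, the phase-by-phase totals combine to the stated $O(|VP_{\cal P'}(q)|(\lg n')^2+h)$ running time.
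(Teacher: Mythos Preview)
Your proposal is correct and follows essentially the same approach as the paper's proof: both charge the $O((\lg n')^2)$ tangent/hull-tree work to the $O(\min(h,|VP_{\cal P'}(q)|))$ visibility-tree nodes (your ``charge each fresh node to an output vertex'' is exactly the paper's observation that a node is created only when it has at least one visible point), bound the aggregate stack and corridor-sequence work by $O(h)$, and charge the per-node invocations of \cite{journals/ijcga/InkuluST20} to the output vertices for a total of $O(|VP_{\cal P'}(q)|(\lg n')^2)$. Your write-up is actually more explicit than the paper's about the two amortisation steps you flag as obstacles, but neither you nor the paper proves them in detail.
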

\begin{proof}
Using the point location query algorithm from \cite{journals/jal/GoodrichT97}, locating the corridor $C_q$ in which $q$ is located takes $O((\lg{n'})^2)$ time.
A new object is pushed onto the stack whenever an obstacle is encountered or one of the rays of visibility cone strikes the boundary of some corridor.
Since the total number of obstacles as well as corridors is $O(h)$, the overall time to push and pop objects in the stack is $O(h)$.
A node corresponding to a corridor is inserted to either of the visibility trees only if it has at least one point visible to $q$. Hence, the total number of nodes in $TVIS_{\cal P'}^B(q)$ and $TVIS_{\cal P'}^U(q)$ is $O(min(h, |VP_{\cal P'}(q)|))$. Therefore, the depth-first traversal of these trees takes $O(min(h, |VP_{\cal P'}(q)|))$ time.
Every stack element popped from stack leads to insertion of at most two nodes into either $TVIS_{\cal P'}^B(q)$ or $TVIS_{\cal P'}^U(q)$, and finding tangents corresponding to corridors at these nodes using hull trees takes $O((\lg{n'})^2)$ time. Hence, total time taken to process all the stack objects is $O(min(h, |VP_{\cal P'}(q)|)(\lg{n'})^2+h)$.
To compute the visibility polygon, algorithm from \cite{journals/ijcga/InkuluST20} is invoked at every node of $TVIS_{\cal P'}^B(q)$ as well as $TVIS_{\cal P'}^U(q)$. 
Hence, this step takes $O(|VP_{\cal P'}(q)|(\lg{n'})^2)$ time.
\end{proof}

\begin{theorem}
\label{thm:vpdynobst}
Given a polygonal domain $\cal P$ defined with $h$ obstacles and $n$ vertices, we preprocess $\cal P$ in $O(n(\lg{n})^2 + h(\lg{h})^{1+\epsilon})$ time to construct data structures of size $O(n)$ so that  
(i) whenever a vertex $v$ is inserted to the current polygonal domain $\cal P'$, the algorithm updates data structures that store visibility polygon $VP_{P'}(q)$ of a query point $q$ in $O(k(\lg|VP_{P'}(q)|)+(\lg n')^2+h)$ time,
(ii) whenever a vertex $v$ is deleted from the current polygonal domain $\cal P'$, the algorithm updates data structures that store visibility polygon $VP_{\cal P'}(q)$ of a query point $q$ in $O(k(\lg n')^2+(\lg|VP_{\cal P'}(q)|)+h)$ time,
and
(iii) whenever a query point $q$ is given, the algorithm outputs the visibility polygon in the current polygonal domain in $O(|VP_{\cal P'}(q)|(\lg{n'})^2+h)$ time.
Here, $\epsilon$ is a small positive constant resulting from the triangulation of the free space $\cal{F(P)}$ using the algorithm in \cite{journals/ijcga/Bar-YehudaC94}, $k$ is the number of combinatorial changes in $VP_{P'}(q)$ due to the insertion or deletion of $v$, and $n'$ is the number of vertices of $\cal P'$.
\end{theorem}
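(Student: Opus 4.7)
My plan is to observe that Theorem~\ref{thm:vpdynobst} is an aggregation theorem: each of its four assertions has already been established as a separate lemma, so the proof reduces to assembling these lemmas and verifying that the data structures they presuppose are consistent across the sequence of operations. Concretely, the preprocessing bound, both the time $O(n(\lg n)^{2}+h(\lg h)^{1+\epsilon})$ and the space $O(n)$, is exactly Lemma~\ref{lem:vispolypreproc}; the insertion-update bound $O(k(\lg|VP_{\cal P'}(q)|)+(\lg n')^{2}+h)$ is the lemma proved at the end of Subsection~\ref{subsect:insertvert}; the deletion-update bound $O(k(\lg n')^{2}+(\lg|VP_{\cal P'}(q)|)+h)$ is the lemma proved at the end of the deletion subsection; and the query bound $O(|VP_{\cal P'}(q)|(\lg n')^{2}+h)$ is Lemma~\ref{lem:querycorr}. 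Correctness of the returned visibility polygon in each mode follows from Lemma~\ref{l2} for the query, from the insertion-correctness lemma in Subsection~\ref{subsect:insertvert}, and from the deletion-correctness lemma in the deletion subsection.

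The only nontrivial work is to argue that the data structures on which each lemma relies are actually in the state that the lemma assumes. I would make this explicit through an invariant maintained across the lifetime of the algorithm: after the preprocessing phase, and after every subsequent update or query operation, the structures $P_C$ and the preprocessed dynamic-simple-polygon data of \cite{journals/ijcga/InkuluST20} for every corridor $C$, the hull trees on each corridor side, the point-location structure of \cite{journals/jal/GoodrichT97}, and (once a query has been issued) the visibility trees $TVIS_{\cal P'}^B(q)$ and $TVIS_{\cal P'}^U(q)$ together with their per-corridor $VC_{C}$ and per-edge $SC_{t't''}$ attachments, correctly describe the current polygonal domain $\cal P'$ and the current $VP_{\cal P'}(q)$. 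The base case is Lemma~\ref{lem:vispolypreproc} for the static structures and Lemma~\ref{lem:querycorr} for the visibility trees at the moment the query is first issued; the inductive step for an insertion (respectively deletion) is precisely what the insertion (respectively deletion) algorithm of Section~\ref{sect:insertdelete} establishes, since that section explicitly updates each of these structures in turn.

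Given this invariant, the proof proceeds simply by instantiating each cited lemma with the current $\cal P'$ and reading off the claimed complexities; the preprocessing is charged once, the space is the maximum of the sizes of the maintained structures (still $O(n)$ by the invariant because insertions and deletions change the structure sizes by $O(1)$ per operation as established in the per-operation lemmas, and the statement is phrased in terms of the current $n'$), and each update and each query contributes its stated term.

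I anticipate the main subtlety to be not a calculation but a bookkeeping argument: in particular, one must check that the deletion routine's internal call to the query algorithm of Section~\ref{sect:queryalgo}, restricted to the cone $vc \cap vc_m$, operates on data structures in the correct state (the hull trees and point-location structure must already reflect the post-deletion $\cal P''$ at the moment of that call, while the visibility trees still reflect the pre-deletion $VP_{\cal P'}(q)$ so that the subsequent merge is well-defined). This is why the common preamble of Section~\ref{sect:insertdelete} performs the updates to $P_C$, to the dynamic-simple-polygon data, to the hull trees, and to the point-location structure \emph{before} branching into the insertion/deletion-specific logic; once this ordering is made explicit, the invariant is preserved and the theorem is immediate from the four lemmas.
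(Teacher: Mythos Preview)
Your proposal is correct and matches the paper's approach: the paper gives no explicit proof for Theorem~\ref{thm:vpdynobst}, treating it as an immediate summary of Lemma~\ref{lem:vispolypreproc}, the insertion and deletion lemmas of Section~\ref{sect:insertdelete}, and Lemma~\ref{lem:querycorr}. Your added invariant argument about data-structure consistency across operations is more careful than anything the paper states, but it is sound and in the same spirit.
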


\section{Maintaining the visibility graph}
\label{sect:visgrmaint}

In this section, we describe an algorithm to maintain the visibility graph among dynamic polygonal obstacles in the plane.
We first detail the preprocessing algorithm with the input polygonal domain $\cal P$ with $n$ vertices and $h$ polygonal obstacles.
Our algorithm relies on the algorithm for maintaining the visibility polygon among dynamic polygonal obstacles in Section~\ref{sect:preprocessds}. 
Hence, we compute all the data structures as required in Lemma~\ref{lem:vispolypreproc}.
In addition, we need the following data structures.
Using \cite{journals/siamcomp/KapoorM00}, we construct the visibility graph of $\cal P$ in $O(|E|+h\lg n+h(\lg h)^{1+\epsilon})$ time.
Here, $E$ is the number of edges in the visibility graph of $\cal P$.
For every vertex $v \in \cal P$, we construct a red-black tree (detailed in \cite{books/algo/Cormen09}), denoted by $RBT_v$, that contains all the visible edges that are incident to $v$.
Every leaf of $RBT_v$ represents a unique visible edge that is incident to $v$.
For every visible edge that is determined to be incident to $v$, we insert it to $RBT_v$ in $O(\lg{|E|})$ time.
With every visible edge, $RBT_v$ stores the angle it makes with the positive $x$-axis. 
The left to right order of visible edges stored at the leaves of $RBT_v$ is the sorted order of visible edges incident to $v$ with respect to angle each makes with the positive $x$-axis.
Whenever a vertex is inserted to or deleted from any of the obstacles of the current polygonal domain $\cal P'$, as part of updating the visibility graph of $\cal P'$, we update these red-black tree data structures as well as the data structures for visibility polygon maintenance (refer to Theorem~\ref{thm:vpdynobst}) and return the updated visibility graph.

\begin{lemma} 
Given a polygonal domain $\cal P$ defined with $h$ holes and $n$ vertices, as part of preprocessing $\cal P$, in $O(n(\lg{n})^2+h(\lg h)^{1+\epsilon}+|E|\lg |E|)$ time, our algorithm constructs data structures of size $O(n+|E|)$.
Here, $|E|$ is the number of edges in the visibility graph of $\cal P$.
\end{lemma}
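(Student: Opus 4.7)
The plan is to verify the claimed time and space bounds by accounting for each preprocessing subtask separately and then summing. The preprocessing consists of three logically distinct pieces: (i) the visibility-polygon maintenance data structures inherited from the first part of the paper, (ii) the initial visibility graph of $\cal P$, and (iii) the per-vertex red-black trees $RBT_v$ of visible incident edges.

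First I would invoke Lemma~\ref{lem:vispolypreproc} directly: constructing the corridor decomposition, the hull trees on all corridor sides, the point-location structure, and the simple-polygon sets $P_C$ with their associated dynamic visibility structures takes $O(n(\lg n)^2 + h(\lg h)^{1+\epsilon})$ time and $O(n)$ space. This accounts for the first two summands in the time bound and the $O(n)$ part of the space bound.

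Next I would invoke the Kapoor--Maheshwari algorithm \cite{journals/siamcomp/KapoorM00} to compute the visibility graph $VG_{\cal P}$ in $O(|E| + h\lg n + h(\lg h)^{1+\epsilon})$ time and $O(|E|)$ space. Both the $h\lg n$ and $h(\lg h)^{1+\epsilon}$ terms are absorbed into the already-accounted $O(n(\lg n)^2 + h(\lg h)^{1+\epsilon})$ (since $h \le n$ and $\lg n$ dominates $\lg h$ up to constants), while the $O(|E|)$ term is dominated by the forthcoming $O(|E|\lg |E|)$ term.

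Finally I would bound the cost of building the red-black trees $RBT_v$. Each visible edge $e = v'v''$ of $VG_{\cal P}$ is inserted exactly twice, once into $RBT_{v'}$ and once into $RBT_{v''}$, each insertion keyed by the angle $e$ makes with the positive $x$-axis and costing $O(\lg |E|)$ time in a balanced binary search tree \cite{books/algo/Cormen09}. Summed over all edges this contributes $O(|E|\lg |E|)$ time and $O(|E|)$ total space across all vertices. Adding the three contributions yields the claimed $O(n(\lg n)^2 + h(\lg h)^{1+\epsilon} + |E|\lg |E|)$ preprocessing time and $O(n+|E|)$ space. There is no substantive obstacle here; the only minor care needed is to confirm that the Kapoor--Maheshwari construction already uses the same triangulation/corridor machinery built in step (i), so no data structure is built twice and the asymptotic bounds remain additive rather than multiplicative.
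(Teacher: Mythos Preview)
Your proposal is correct and mirrors the paper's own justification essentially step for step: the paper's argument (given in the paragraph preceding the lemma rather than in a separate proof environment) also decomposes the cost into (i) the data structures of Lemma~\ref{lem:vispolypreproc}, (ii) the Kapoor--Maheshwari visibility-graph construction in $O(|E|+h\lg n+h(\lg h)^{1+\epsilon})$ time, and (iii) building each $RBT_v$ by $O(\lg|E|)$-time insertions of the incident visible edges. Your additional remark that the $h\lg n$ and $h(\lg h)^{1+\epsilon}$ terms from step (ii) are absorbed by the bound in step (i) is a helpful detail the paper leaves implicit.
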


Let $\cal P'$ be the polygonal domain before inserting $v$ to (resp. deleting $v$ from) the boundary of an obstacle.
Also, let $v_i$ and $v_{i+1}$ be the vertices between which $v$ is located.
We first describe parts that are common to both the insertion and deletion algorithms. 
Using point location data structure, we determine the corridor $C$ in which $v$ is located. 
If $v$ is inserted to an obstacle of $\cal P'$, then we insert $v$ at its corresponding position into at most three simple polygons in $P'_C$ wherein each of these simple polygons has both $v_i$ and $v_{i+1}$.
If $v$ is deleted from an obstacle of $\cal P'$, then for every simple polygon $P \in P'_C$, we delete $v$ from $P$ if $v \in P$.
In addition, for each simple polygon in $P'_C$ that got modified, we update the preprocessed data structures needed to determine the visibility in dynamic simple polygons using the algorithm in \cite{journals/ijcga/InkuluST20}.

Let $v'$ be any vertex distinct from $v, v_i$, and $v_{i+1}$.
Using the algorithm in \cite{journals/jal/GoodrichT97} for ray-shooting in dynamic simple polygons, we determine whether $v$ is visible to $v'$ among obstacles in $\cal P'$.
This is accomplished using simple polygons in $P'_C$: if $v' \in C$, the ray-shooting query with ray $v'v$ is performed in $P'_4(C)$; otherwise, if $v$ belongs to a side $S_1$ (resp. $S_2$) of a corridor $C' (\ne C)$, we query with ray $v'v$ in each simple polygon in $P'_{C'}$ that corresponds to side $S_1$ (resp. $S_2$).
From the correctness of characterizations in \cite{journals/comgeo/InkuluK09}, it is immediate that we correctly determine whether $v$ is visible to $v'$. 
If $v$ is found to be not visible to $v'$, then $VP_{\cal P'}(v')$ does not change.
In this case, we only update the preprocessed data structures for hull trees of sides of corridor $C$ and the data structures for dynamic point location.
We note that all the updations of the preprocessed data structures can be accomplished in $O((\lg n)^2)$ worst-case time.
Consider the case when $v$ is visible to $v'$.
In this case, the insertion of $v$ (resp. deletion of $v$) may cause the deletion of (resp. insertion of) edges from (resp. to) $VG_{\cal P'}$ that are incident to $v'$ and intersect the triangle $vv_iv_{i+1}$.  
In addition, in the case of insertion of $v$, we introduce edge $v'v$ into $VG_{\cal P'}$.

\begin{wrapfigure}{r}{8cm}
\centering
\includegraphics[width=0.6\linewidth]{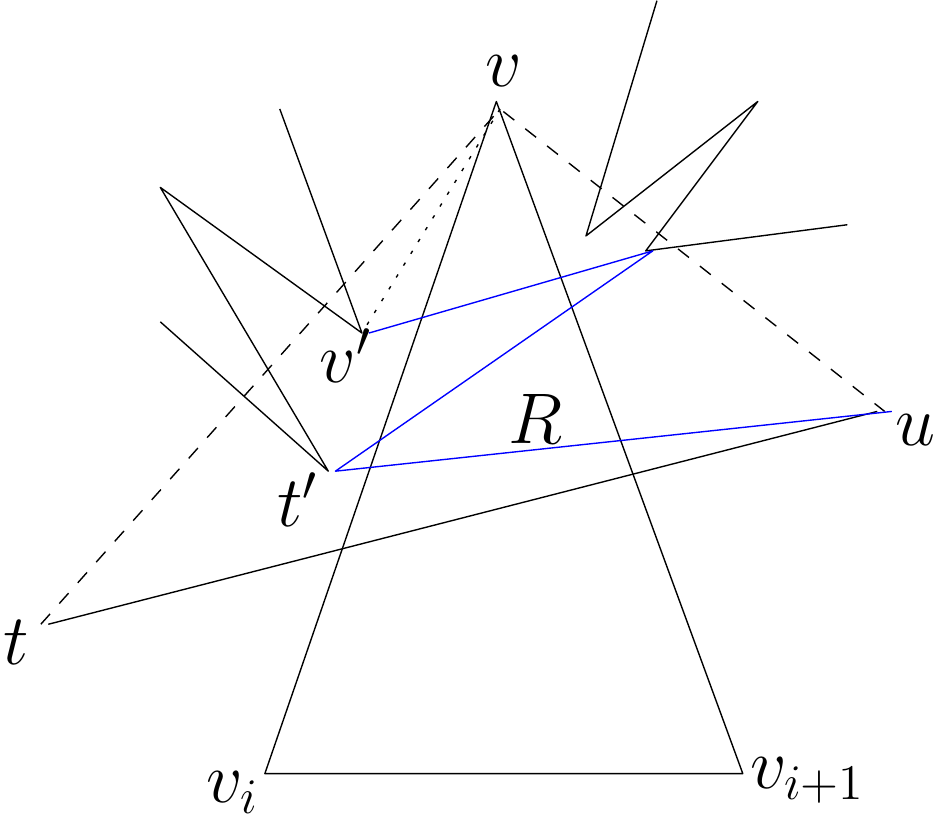}
\caption{Illustrating the notation in Lemma~\ref{lem:dynvisgr} statement. 
The path $R$ is shown in blue color.
The dotted line between $v$ and $v'$ illustrates an endpoint $v'$ of $R$ is visible to $v$.
(This illustration is from \cite{conf/caldam/Inkulu19})}
\label{fig:spp1}
\end{wrapfigure}

The following lemma statement is simplified from the one in Choudhury and Inkulu~\cite{conf/caldam/Inkulu19}, and for completeness and to show the proof extends to polygonal domains, we include a proof.
This lemma helps in efficiently determining the set of edges that intersect the triangle $v_ivv_{i+1}$.

\begin{lemma}
\label{lem:dynvisgr}
Let $v_i, v$, and $v_{i+1}$ be consecutive vertices along an obstacle of $\cal P''$.
For every visible edge $tu$ of $VG_{\cal P'}$ that intersects the triangle $vv_iv_{i+1}$, either 
(a) at least one of $t$ or $u$ is visible to $v$, or 
(b) there exists a path $R$ in $VG_{\cal P'}$ from a vertex $v'$ to $t$ (resp. $u$) such that each edge in $R$ intersects the triangle $v_ivv_{i+1}$ and $v'$ is visible to $v$.
\end{lemma}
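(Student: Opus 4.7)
My plan is to reduce the lemma to a single nontrivial geometric case and then construct the path $R$ by a tangent-peeling induction.

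First I would dispose of a trivial configuration: if the triangle $vv_iv_{i+1}$ lies inside the interior of some obstacle of $\cal P'$, then no edge of $VG_{\cal P'}$ (which lies in the free space of $\cal P'$) can meet it, and the statement is vacuous. So I may assume the triangle lies in the free space of $\cal P'$. Because the triangle is convex and disjoint from every obstacle interior, the point $v$ sees every point of the open triangle in $\cal P'$; in particular, any point $p \in tu$ that lies in the triangle is visible from $v$ and the segment $vp$ is unobstructed.

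If $t$ or $u$ is visible to $v$ then case (a) holds, so the substance of the proof is when neither endpoint of $tu$ is visible to $v$. Pick a point $p$ of $tu$ in the triangle and walk along $tu$ from $p$ toward $t$: visibility to $v$ is retained near $p$, but since $t$ is not visible to $v$, there is a first point $q \in [p,t]$ at which $vq$ grazes an obstacle. The grazing segment $vq$ then passes through a reflex vertex $v'$ of that obstacle, and $v'$ is visible to $v$ by construction. I would then induct on a suitable progress measure — for instance, the angular width at $v$ of the still-occluded portion of $tu$, or the number of distinct obstacles whose interiors meet the segment $vt$ — to produce a sequence of visibility edges joining $v'$ to $t$ through further tangent vertices. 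The base case is when $v'$ sees $t$ directly, giving the edge $v't \in VG_{\cal P'}$; the inductive step uses the same tangent argument at $v'$ (or a walk along the obstacle hull) to produce a new visible neighbor $v''$ strictly closer to $t$ in the chosen measure, yielding the next edge $v'v'' \in VG_{\cal P'}$. A symmetric argument handles the endpoint $u$.

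The main obstacle I expect in this plan is verifying that every edge produced by the construction actually meets the triangle $vv_iv_{i+1}$. The intuition is that the blocking obstacle must protrude into the angular wedge at $v$ bounded by $vv_i$ and $vv_{i+1}$, because the occlusion occurs along the portion of $tu$ that already lies in this wedge; consequently both $v'$ and its successor on the obstacle's convex hull lie in this wedge, and the visibility chord between them must pass through the triangle. Making this precise will require a short case analysis of which pair of the triangle's three sides is crossed by $tu$ and by each successive edge of $R$ — either the two $v$-sides $vv_i$ and $vv_{i+1}$, or the base $v_iv_{i+1}$ together with one of those — but the wedge-containment argument carries through in every case, yielding the desired path $R$ and thereby case (b).
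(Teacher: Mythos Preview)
Your tangent-peeling approach is different from the paper's, and the difficulty you single out---that each edge of the constructed path must meet the triangle---is a genuine gap that your wedge-containment heuristic does not close. The problem is that the first grazing point $q$ on $[p,t]$ can lie outside the angular wedge at $v$: once the walk along $tu$ leaves the triangle through one of the $v$-incident sides, say $vv_i$, every subsequent segment $vq$ misses the triangle entirely, so the blocking obstacle can sit wholly outside the wedge. Concretely, take $v=(0,10)$, $v_i=(-1,0)$, $v_{i+1}=(1,0)$, let $tu$ be a long horizontal visibility edge at height $5$ with $t$ far to the left, and place a small obstacle whose rightmost vertex is $v'=(-2,7)$. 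Then $v'$ is the tangent vertex your procedure finds, $v'$ is visible to $v$, and $v't$ is a visibility edge, but $v't$ lies entirely at $x\le -2$ and never meets the triangle; your induction terminates immediately with this non-crossing edge, so the path you build is not a valid $R$. (The lemma still holds here because, for instance, $v_{i+1}t$ is a visibility edge through the triangle and $v_{i+1}$ is visible to $v$; your construction simply does not find it.)

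The paper avoids this by anchoring the induction at $t$ and $u$ rather than at $v$: among all visibility edges incident to $t$ or to $u$ that intersect both triangles $vv_iv_{i+1}$ and $tuv$, it takes the one making the smallest angle with $tu$, say $t'u$, argues that $t'$ is strictly closer to $v$ than $t$, and recurses on $t'u$. Because every edge produced is chosen to cross $vv_iv_{i+1}$ by definition, the triangle-intersection requirement for $R$ is enforced at each step rather than recovered after the fact.
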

\begin{proof}
If either $t$ or $u$ is visible to $v$, there is nothing to prove.
Otherwise, among all the edges of $VG_{\cal P'}$ that intersect both the triangles $vv_iv_{i+1}$ and $tuv$, let $t'u'$ be a visible edge in $\cal P'$ that makes the least angle with the line segment $tu$ at either $t'=t$ or $u'=u$.
Without loss of generality, we suppose $u'=u$.
(Refer to Fig.~\ref{fig:spp1}.)
Note that both $t$ and $u$ are reachable from $t'$ in $VG_{\cal P'}$.
Besides, both the edges $t'u$ and $ut$ intersect the triangle $vv_iv_{i+1}$.
Further, $t'$ is closer to $v$ as compared to $t$, with respect to Euclidean distance.
Inductively, $t'$ is reachable from $v$ in $VG_{\cal P'}$ as described in the lemma statement.
Since there are finite vertices, there exists some edge $t''u''$ such that $t''u''$ intersects $vv_iv_{i+1}$ and either $t''$ or $u''$ is visible to $v$.
\end{proof}

In the following subsections, we describe more details of insertion and deletion algorithms. 

\subsection{Insertion of a vertex} 
\label{visgraph:insert}

Let $v_i$ and $v_{i+1}$ be the vertices between which the vertex $v$ is inserted.
Also, let $S$ be the set of visible edges that intersect the triangle $vv_iv_{i+1}$.
We observe that any visible edge $e\in VG_{\cal P'}$ appears in the updated visibility graph $VG_{\cal P''}$ if and only if $e$ does not belong to set $S$.
First, we determine vertices that are visible to $v$. 
To accomplish this, we invoke the visibility polygon query algorithm among dynamic obstacles detailed in Section~\ref{sect:insertdelete}, with $v$ as the query point.
A new red-black tree, denoted by $RBT_v$ is initiated and, for every vertex $v'\in \cal P'$ determined by the visibility query algorithm, we insert the visibility edge $vv'$ into $RBT_v$ as well into $RBT_{v'}$.
This completes the insertion of new visible edges to the visibility graph $VG_{\cal P'}$.
Next, we determine the set $S$ of visible edges of $VG_{\cal P'}$ whose endpoints have become not visible to each other after the insertion of $v$.
To efficiently determine edges in $S$, we use Lemma~\ref{lem:dynvisgr} in iteratively finding all the vertices to which visible edges to be removed from $VG_{\cal P'}$ are incident.
Let $S_1$ be the set comprising of all the vertices of $\cal P'$ visible to $v$.
The points in $S_1$ are determined from the visibility polygon of $v$.
For each vertex $v' \in S_1$, we determine all the vertices $v''$ such that $v'v''$ is a visible edge in $VG_{\cal P'}$ and it intersects the triangle $vv_iv_{i+1}$.
Without loss of generality, suppose the ray $v'v_i$ makes larger angle with $v'v$ as compared to $v'v_{i+1}$.
Let $vc_m$ be the cone with rays $v'v$ and $v'v_i$ bounding it.
The visible edges lying in cone $vc_m$ are the potential candidates to be deleted from $VG_{\cal P'}$.
We search in $RBT_{v'}$ to find every vertex $v''$ such that the edge $v'v''$ belongs to cone $vc_m$, and delete the edge $v'v''$ from both $RBT_{v'}$ and $RBT_{v''}$.
In specific, for every $j \ge 1$, let $S_{j+1}$ be equal to $\bigcup_{v' \in S_j} S_{v'}$.
Following the same procedure, for every $j > 1$, for every vertex $v' \in S_j$ and $v' \notin S_{j'}$ with $j' < j$, we find the set $S_{j+1}$ comprising of all the visible edges in $\cal P'$ that are incident to $v'$ and intersect triangle $vv_iv_{i+1}$.
If at any point, $S_j \subseteq S_{j'}$ for some $j' < j$, we terminate the algorithm. 
(Refer to Fig.~\ref{fig:insert}.)

\begin{figure}[h]
\centering

\minipage{0.29\textwidth}
\includegraphics[width=\linewidth]{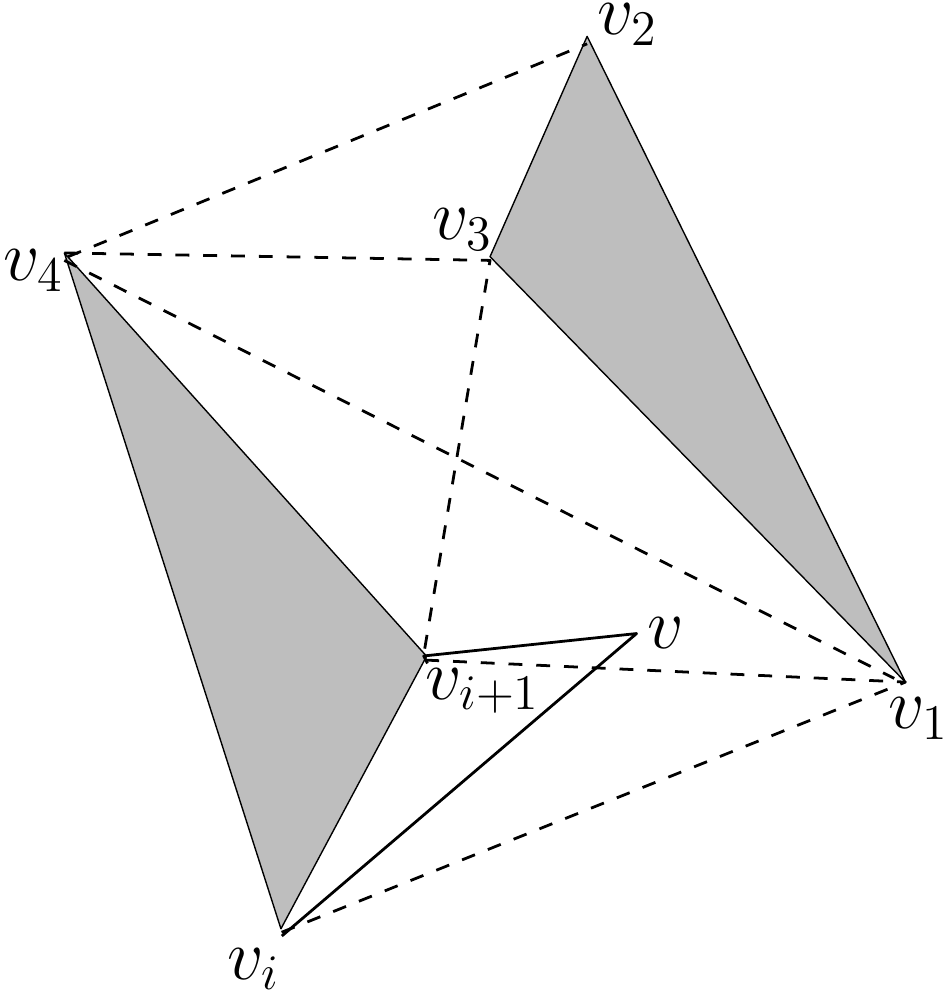}
\endminipage\hfil
\minipage{0.33\textwidth}
\includegraphics[width=\linewidth]{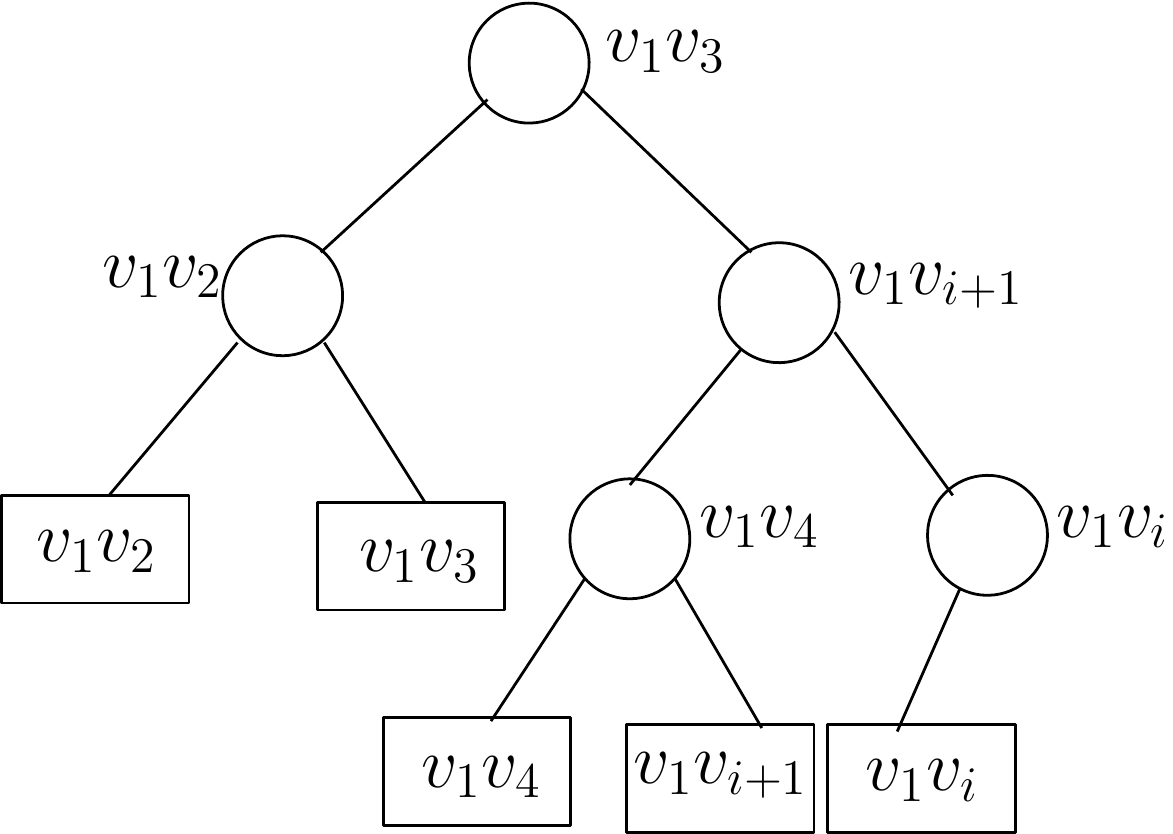}
\endminipage\hfill
\minipage{0.33\textwidth}
\includegraphics[width=\linewidth]{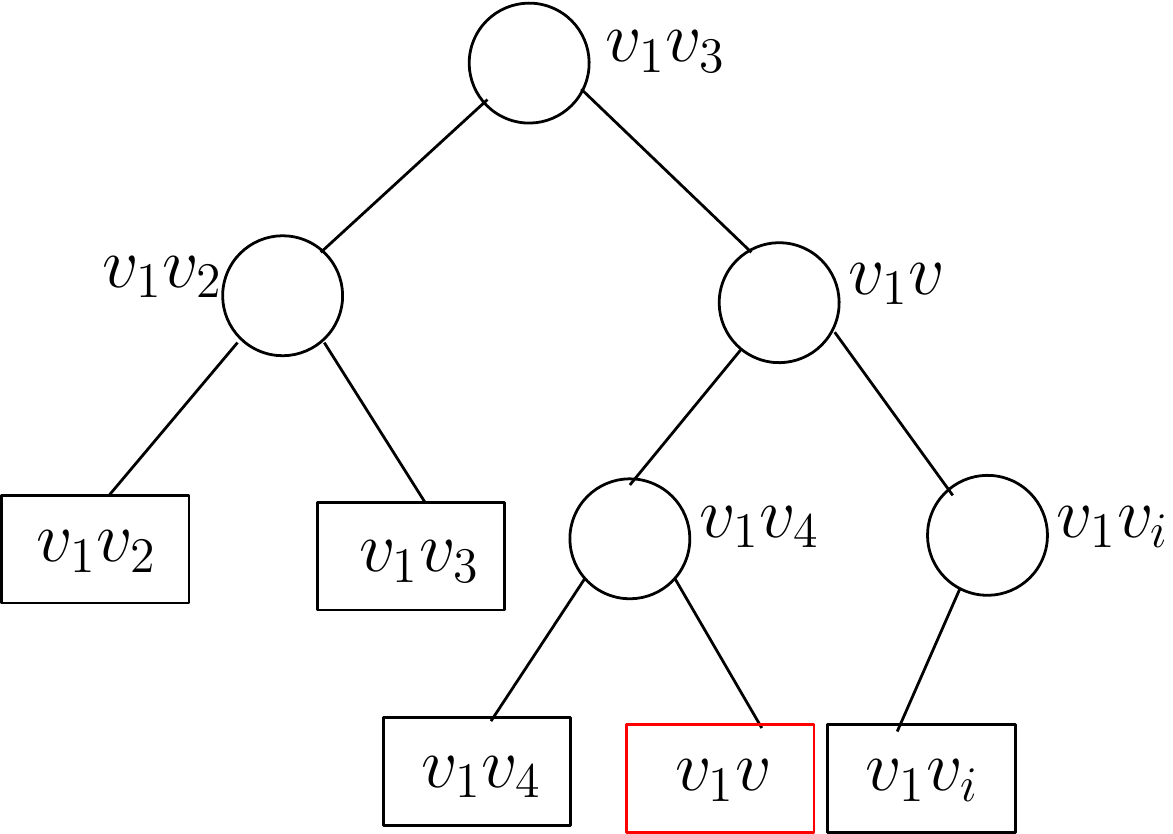}
\endminipage

\caption{Illustrating the insertion of vertex $v$ (left), the data structure $RBT_{v_1}$ before insertion of $v$ (middle), and the data structure $RBT_{v_1}$ after inserting $v$ (right).}
\label{fig:insert}
\end{figure}

\begin{lemma}
In the updated polygonal domain $\cal P''$, any edge $pq \in VG_{\cal P''}$ if and only if $p$ and $q$ are mutually visible among obstacles in $\cal P''$.
\end{lemma}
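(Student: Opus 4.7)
The plan is to prove both directions (soundness and completeness) by partitioning the edges according to how the insertion of $v$ could have affected their visibility status. Observe that for any two vertices $p, q$ distinct from $v$, the open segment $pq$ is blocked by some obstacle in $\cal P''$ but not in $\cal P'$ exactly when $pq$ intersects the interior of the triangle $vv_iv_{i+1}$ (since the only new obstacle boundary added is that triangle's two new edges). Conversely, $pq$ could become unblocked only if the removed edge $v_iv_{i+1}$ had been blocking it, but since $v_iv_{i+1}$ lies in the same corridor boundary and the triangle $vv_iv_{i+1}$ bulges outward into the free space, no previously blocked pair $(p,q)$ with $p,q \ne v$ becomes visible. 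Thus the set of edges of $VG_{\cal P'}$ whose visibility status changes is exactly the set $S$ of edges intersecting the triangle $vv_iv_{i+1}$, and all other edges of $VG_{\cal P''}$ either coincide with edges of $VG_{\cal P'}$ or involve $v$.

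Next I would handle edges of $VG_{\cal P''}$ incident to $v$. The algorithm invokes the visibility polygon query algorithm of Section~\ref{sect:insertdelete} with $v$ as the query point; by Lemma~\ref{l2} (applied in $\cal P''$), the returned set is exactly the set of vertices visible to $v$ in $\cal P''$. For each such $v'$, the algorithm inserts $vv'$ into both $RBT_v$ and $RBT_{v'}$, and no other edges incident to $v$ are created. Hence for edges incident to $v$, both directions of the biconditional hold.

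The main obstacle, and the content of the rest of the proof, is to show that the iterative deletion procedure removes exactly the edges in $S$. For soundness, whenever the algorithm deletes an edge $v'v''$, it does so only after checking via $RBT_{v'}$ that $v'v''$ lies in the cone $vc_m$ spanned by $v'v$ and the farther of $v'v_i, v'v_{i+1}$; a straightforward geometric check shows that any edge of $VG_{\cal P'}$ incident to $v'$ that intersects the triangle $vv_iv_{i+1}$ must lie in this cone, and only such edges in the cone are candidates (edges outside the cone cannot cross the triangle). Hence no edge outside $S$ is removed. For completeness, I invoke Lemma~\ref{lem:dynvisgr}: every edge $tu \in S$ has the property that either one of its endpoints is visible to $v$, or else $t$ (or $u$) is reachable in $VG_{\cal P'}$ through a path of edges, each intersecting the triangle $vv_iv_{i+1}$, starting from some vertex $v'$ visible to $v$. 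Setting $S_1$ as the vertices visible to $v$ and inductively defining $S_{j+1} = \bigcup_{v'\in S_j} S_{v'}$, Lemma~\ref{lem:dynvisgr} guarantees that every endpoint of any edge in $S$ appears in some $S_j$, so the BFS-like traversal uncovers every edge of $S$ before termination (which occurs when no new vertices are discovered).

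Combining the three analyses, the edges present in the algorithm's output are: (i) every edge of $VG_{\cal P'}$ not in $S$ (untouched), (ii) every edge $vv'$ with $v'$ visible to $v$ in $\cal P''$, and (iii) no edge of $S$. By the characterization in the first paragraph, this is exactly the edge set of $VG_{\cal P''}$, which establishes the lemma.
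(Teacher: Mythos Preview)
Your proof follows essentially the same approach as the paper's: correctness of the query algorithm for edges incident to $v$, a cone argument for soundness of the deletions, and Lemma~\ref{lem:dynvisgr} for completeness of the deletions. Your opening characterization (that the edges changing status are precisely those crossing the triangle $vv_iv_{i+1}$, plus the edges incident to $v$) is a useful addition that the paper leaves implicit.

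There is, however, a logical slip in your soundness paragraph. You argue that ``any edge of $VG_{\cal P'}$ incident to $v'$ that intersects the triangle $vv_iv_{i+1}$ must lie in this cone'' and that ``edges outside the cone cannot cross the triangle''; both statements are the inclusion $S\cap\{\text{edges at }v'\}\subseteq\{\text{edges in }vc_m\}$. But soundness requires the \emph{opposite} inclusion: you must show that every visibility edge $v'v''$ of $VG_{\cal P'}$ lying in $vc_m$ actually intersects the triangle (hence lies in $S$), since the algorithm deletes \emph{all} such edges. The needed geometric fact is that, because $v'$ sees $v$ in $\cal P''$ and $v'v''$ was unobstructed in $\cal P'$, no vertex $v''$ in the angular sector $vc_m$ can lie strictly between $v'$ and the segment $vv_i$ (otherwise $v''$ would sit in the interior of the region swept between $v'v$ and $v'v_i$, contradicting that it is an obstacle vertex visible from $v'$). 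The paper's own proof asserts this direction without further detail (``an edge is removed \ldots\ only if it intersects the triangle''), so you are not being held to a higher standard, but as written your implication points the wrong way and the conclusion ``Hence no edge outside $S$ is removed'' does not follow from what you stated.
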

\begin{proof}
All the visible edges incident to $v$ are computed correctly using the algorithm from \cite{journals/siamcomp/KapoorM00}.
For every vertex $v'$ that is an endpoint of a visible edge incident to $v$, we do binary search in $RBT_{v'}$ to prune visible edges incident to $v'$ that intersect with triangle $vv_iv_{i+1}$ using the cone $vc_m$.
Following Lemma~\ref{lem:dynvisgr}, these edges are computed iteratively.
It ensures that an edge is removed from $VG_{\cal P'}$ only if it intersects the triangle $vv_iv_{i+1}$.
Further, by the correctness of Lemma~\ref{lem:dynvisgr}, it is guaranteed that every visible edge that is intersecting the triangle $vv_iv_{i+1}$ is removed from $VG_{\cal P'}$.
\end{proof}

\begin{lemma}
Whenever a vertex $v$ is inserted to $\cal P'$, our algorithm updates the visibility graph $VG_{\cal P'}$ in $O(k(\lg n')^2+h)$ time. 
Here, $k$ is the number of combinatorial changes in $VG_{\cal P'}$ due to the insertion of $v$, $n'$ is the number of vertices of $\cal P'$, and $h$ is the number of obstacles in $\cal P'$. 
\end{lemma}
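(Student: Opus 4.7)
The plan is to analyze the insertion procedure phase by phase and account for each cost in terms of $k$, $n'$, and $h$.

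First I would charge the preliminary work: point-locating the corridor $C$ that contains $v$ via the structure of \cite{journals/jal/GoodrichT97} costs $O((\lg n')^2)$; updating the at most three simple polygons in $P'_C$ (together with the auxiliary data structures required by the algorithm of \cite{journals/ijcga/InkuluST20}) and the hull trees of the affected corridor sides from \cite{books/compgeom/prep1985} each costs $O((\lg n')^2)$. All these contributions are absorbed into the additive $O((\lg n')^2+h)$ term of the claimed bound.

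Next I would account for the cost of determining the vertices visible to $v$. The algorithm invokes the visibility-polygon query of Section~\ref{sect:queryalgo} with $v$ as the query point; by Theorem~\ref{thm:vpdynobst} this takes $O(|VP_{\cal P'}(v)|(\lg n')^2+h)$. The key step here is to argue $|VP_{\cal P'}(v)|=O(k)$: every vertex of $\cal P'$ visible to $v$ becomes an endpoint of a newly inserted edge of $VG_{\cal P''}$, and the number of constructed vertices of $VP_{\cal P'}(v)$ is within a constant factor of the number of visible real vertices. Consequently this phase contributes $O(k(\lg n')^2+h)$. The subsequent insertion of each new edge $vv'$ into both $RBT_v$ and $RBT_{v'}$ costs $O(\lg n')$, for $O(k\lg n')$ in total.

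Then I would bound the iterative deletion of obsolete edges via Lemma~\ref{lem:dynvisgr}. For each vertex $v'$ encountered in some $S_j$, a binary search in $RBT_{v'}$ restricted to the cone $vc_m$ reports exactly the edges incident to $v'$ that intersect $vv_iv_{i+1}$, every one of which is removed. Because the termination condition $S_j\subseteq S_{j'}$ ensures each vertex is processed only the first time it appears, the number of binary searches is bounded by the number of endpoints of deleted edges, which is $O(k)$. Each binary search plus the two red-black-tree deletions per edge is $O(\lg n')$, totaling $O(k\lg n')$.

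The main obstacle I anticipate is the twin combinatorial bounds: proving $|VP_{\cal P'}(v)|=O(k)$ so that the visibility-query cost is charged to $k$ rather than to a potentially larger visibility polygon, and the amortized argument that the iterative expansion over $S_1,S_2,\ldots$ visits each vertex only once so that Lemma~\ref{lem:dynvisgr} yields an $O(k)$ bound on the number of binary searches. Summing all contributions gives the claimed $O(k(\lg n')^2+h)$ worst-case bound.
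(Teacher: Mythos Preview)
Your proposal is correct and follows essentially the same approach as the paper: the paper splits $k=k_1+k_2$ (new edges incident to $v$ versus deleted edges), charges the visibility-polygon query for $v$ as $O(k_1(\lg n')^2+h)$ via Theorem~\ref{thm:vpdynobst}, and charges the iterative red-black-tree searches and deletions as $O(k_2\lg n')$. Your phase-by-phase accounting mirrors this, and you are in fact more explicit than the paper about the two points it leaves implicit---why $|VP_{\cal P'}(v)|=O(k_1)$ and why the $S_j$ iteration processes each vertex at most once.
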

\begin{proof}
From Theorem~\ref{thm:vpdynobst}, finding all the vertices of $\cal P'$ that are visible to $v$ takes $O(k_1(\lg{n'})^2 + h)$ time.
Here, $k_1$ is the number of vertices in $\cal P'$ that are visible to $v$.
Inserting visible edges corresponding to these vertices in $VG_{\cal P'}$ takes $O(k_1(\lg{n'}))$ time.
To recursively determining edges in $S$, search in red-black trees stored at the endpoints of edges in $S$, and to delete edges in $S$ together takes $O(k_2(\lg{n'}))$ time, where $k_2 = k-k_1$.
\end{proof}

\subsection{Deletion of a vertex} 
\label{visgraph:delete}

Let $v$ be a vertex in $\cal P'$.
Also, let $v_i$ and $v_{i+1}$ be the vertices adjacent to $v$ along the boundary of an obstacle in $\cal P'$.
Further, let $VG_{\cal P'}$ be the visibility graph of $\cal P'$.
Suppose the vertex $v$ is deleted.
Then, the deletion algorithm updates $VG_{\cal P'}$.
For line segment joining any two vertices $v', v''$ in $\cal P''$, if $v'v''$ intersects the triangle $vv_iv_{i+1}$, edge $v'v''$ needs to be introduced into $VG_{\cal P'}$.
In this algorithm, like in the insertion algorithm, we use Lemma~\ref{lem:dynvisgr} to efficiently determine the visible edges in $VG_{\cal P''}$ that does not belong to $VG_{\cal P'}$.

Let $S_1$ be the set of vertices of $\cal P'$ visible to $v$.
For each vertex $v'\in S_1$, we determine all the vertices $v''$ such that edge $v'v''$ intersects triangle $vv_iv_{i+1}$.
Without loss of generality, suppose ray $v'v_i$ makes a larger angle with ray $v'v$ as compared to ray $v'v_{i+1}$.
Let $vc_m$ be the cone with rays $v'v$ and $v'v_i$ bounding it.
The visible edges lying in cone $vc_m$ are the potential candidates to be inserted to $VG_{\cal P'}$.
We invoke the query algorithm in Section~\ref{sect:insertdelete} with query point $v'$ and the visibility cone $vc_m$. 
As a result, our algorithm finds every vertex $v''\in \cal P''$ that is visible to $v'$ such that the edge $v'v''$ resides in cone $vc_m$. 
The edge $v'v''$ is inserted as a new visible edge into both $RBT_{v'}$ and $RBT_{v''}$.
In specific, for every $j \ge 1$, let $S_{j+1}$ be equal to $\bigcup_{v' \in S_j} S_{v'}$.
Following the same procedure, for every $j > 1$, for every vertex $v' \in S_j$ and $v' \notin S_{j'}$ with $j' < j$, we find the set $S_{j+1}$ comprising of all the visible edges in $\cal P''$ that are incident to $v'$ and intersect the triangle $vv_iv_{i+1}$.
If at any point, $S_j \subseteq S_{j'}$ for some $j'< j$, we terminate the algorithm.
In the end, we delete $RBT_v$ as part of removing all the visible edges of $VG_{\cal P'}$ that are incident to $v$.
(Refer to Fig.~\ref{fig:delete}.)

\begin{figure}[h]
\centering

\minipage{0.29\textwidth}
\includegraphics[width=\linewidth]{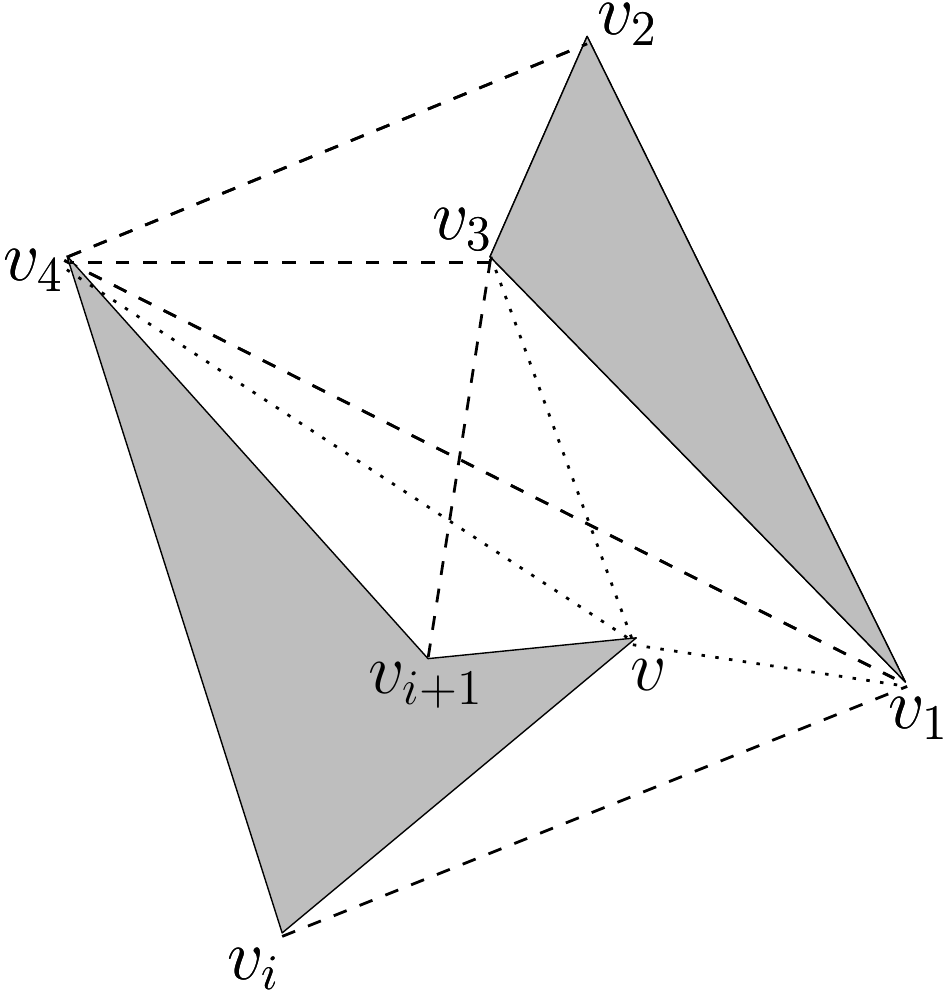}
\endminipage\hfill
\minipage{0.33\textwidth}
\includegraphics[width=0.95\linewidth]{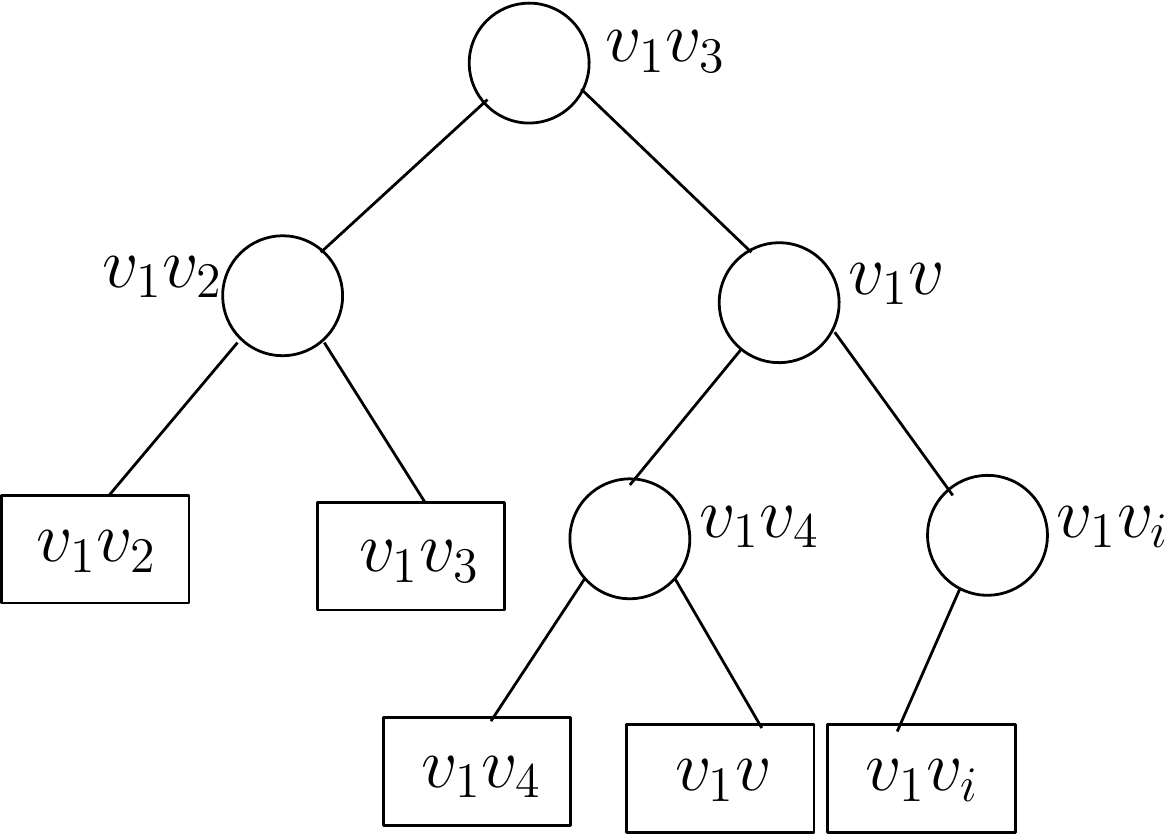}
\endminipage\hfil
\minipage{0.33\textwidth}
\includegraphics[width=\linewidth]{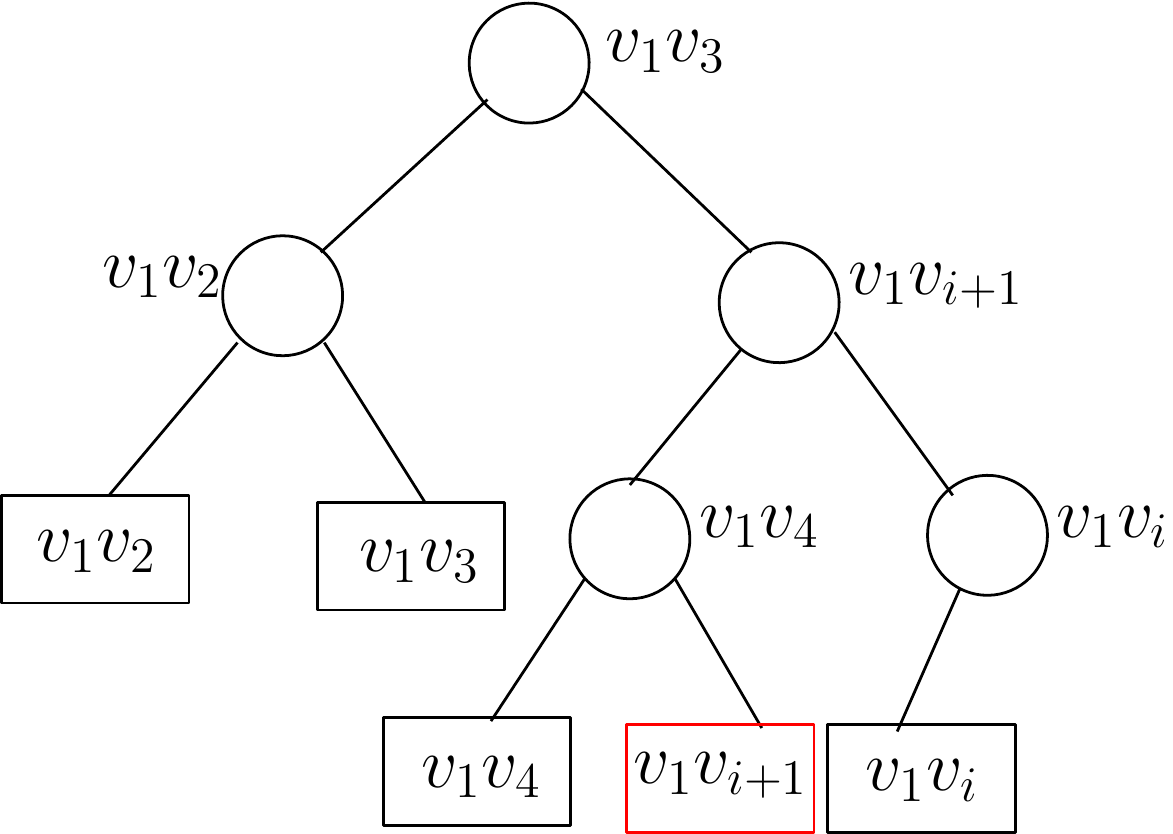}
\endminipage\hfill

\caption{Illustrating the vertex $v$ to be deleted (left), the data structure $RBT_{v_1}$ before deleting $v$ (middle), and the data structure $RBT_{v_1}$ after deleting $v$ (right).}
\label{fig:delete}
\end{figure}

\begin{lemma}
In any updated polygonal domain $\cal P''$, any edge $pq \in VG_{\cal P''}$ if and only if $p$ and $q$ are mutually visible among obstacles in $\cal P''$.
\end{lemma}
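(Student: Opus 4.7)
The plan is to prove both directions of the biconditional separately. For the forward direction (soundness), I will argue in three parts. First, every edge of $VG_{\cal P'}$ not incident to $v$ that the algorithm retains remains a valid visible edge in $\cal P''$, because replacing the pair $v_iv, vv_{i+1}$ by $v_iv_{i+1}$ can only enlarge the free space inside the triangle $vv_iv_{i+1}$ and cannot introduce any new blocker; so a segment that was unobstructed in $\cal P'$ is still unobstructed in $\cal P''$. Second, all edges incident to $v$ are correctly purged by discarding $RBT_v$ along with the corresponding leaves in the red-black trees of $v$'s neighbors, which is correct since $v$ is no longer a vertex in $\cal P''$. Third, each fresh edge $v'v''$ inserted by the algorithm arises from invoking the visibility polygon query of Section~\ref{sect:insertdelete} at $v'$ restricted to the cone $vc_m$; by Theorem~\ref{thm:vpdynobst} this query returns only vertices genuinely visible to $v'$ in $\cal P''$.

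For the converse (completeness), I first observe that any $pq \in VG_{\cal P''} \setminus VG_{\cal P'}$ must intersect the triangle $vv_iv_{i+1}$, since otherwise the same segment was already unobstructed in $\cal P'$ (deletion of $v$ does not introduce an obstacle), contradicting $pq \notin VG_{\cal P'}$. Hence the completeness task reduces to enumerating all triangle-crossing edges of $VG_{\cal P''}$. To do this I will establish a deletion analog of Lemma~\ref{lem:dynvisgr}: for every edge $tu \in VG_{\cal P''}$ crossing the triangle $vv_iv_{i+1}$, either one of $t$ or $u$ is visible to $v$ in $\cal P'$, or there exists a path in $VG_{\cal P''}$ from $t$ (or $u$) to some vertex of $S_1$ along which every edge crosses the triangle. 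The extremal argument of Lemma~\ref{lem:dynvisgr} transfers essentially unchanged, because it relies only on the fact that the interior of the triangle is vertex-free and that the triangle-crossing witness making the least angle at a fixed endpoint must terminate at a $v$-visible vertex.

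Given this analog, the iterative construction of $S_2, S_3, \dots$ completes the proof. The set $S_1$ of vertices visible to $v$ in $\cal P'$ is obtained from the visibility polygon data structures maintained for $v$. For each newly encountered $v' \in S_j$, the algorithm runs the visibility query at $v'$ restricted to the cone $vc_m$ bounded by $v'v$ and $v'v_i$ (or $v'v_{i+1}$); by Theorem~\ref{thm:vpdynobst} this returns exactly the vertices visible to $v'$ in $\cal P''$ that lie in $vc_m$, and every triangle-crossing edge incident to $v'$ lies in $vc_m$ by the geometry of the cone. Hence all triangle-crossing edges incident to $v'$ are recorded. The deletion analog above then guarantees that every endpoint of every new edge is reached from $S_1$ via such a sequence, so the breadth-first expansion terminates at precisely the correct set, and the stopping rule $S_j \subseteq S_{j'}$ for some $j' < j$ is safe since from that point no further new vertex, and hence no further new edge, can be produced.

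The step I expect to be the main obstacle is establishing the deletion analog of Lemma~\ref{lem:dynvisgr} cleanly. The subtlety is that the original lemma talks about edges of $VG_{\cal P'}$ while the visibility they witness lives in $\cal P''$; for deletion the roles of $\cal P'$ and $\cal P''$ are swapped, and I must verify that the extremal ``closest triangle-crossing edge'' witness actually yields an endpoint visible to $v$ in the old domain $\cal P'$ rather than only in $\cal P''$. This requires a careful choice of the cone used to measure closeness and a verification that the local improvement step stays within the free space of $\cal P'$; once this is done, the two directions compose to give the lemma.
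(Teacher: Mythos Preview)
Your proposal is correct and follows essentially the same approach as the paper's own proof, which likewise argues soundness of the three classes of edges (retained, purged at $v$, newly inserted via the visibility query) and then invokes Lemma~\ref{lem:dynvisgr} together with Lemma~\ref{lem:querycorr} to guarantee that the iterative expansion from $S_1$ discovers every triangle-crossing edge of $VG_{\cal P''}$. Your extra care in spelling out a ``deletion analog'' of Lemma~\ref{lem:dynvisgr} is more than the paper itself supplies---the paper simply applies the lemma with the roles of $\cal P'$ and $\cal P''$ swapped (the domain containing $v$ plays the role of $\cal P''$ in the lemma statement, and the domain whose visibility-graph edges are being enumerated plays the role of $\cal P'$), relying implicitly on the fact that the extremal-angle argument in the lemma's proof is agnostic to which of the two domains is the ``before'' and which is the ``after.''
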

\begin{proof}
Let a vertex $v$ is deleted from the boundary of an obstacle in $\cal P'$.
For every vertex $v'\in \cal P'$ that is visible to $v$, the edge $vv'$ is deleted from $VG_{\cal P'}$.
No other edge needs to be deleted from the current visibility graph.
For any two vertices $v'v''$, if the edge intersect the triangle $vv_iv_{i+1}$, it should be added as a visible edge in $VG_{\cal P'}$.
Following Lemma~\ref{lem:dynvisgr}, these edges are computed iteratively.
In each iteration, for every vertex $v'$, we invoke the query algorithm from Section~\ref{sect:queryalgo} with $v'$ as the query point and $vc_m$ as the visibility cone.
By the correctness of Lemma~\ref{lem:dynvisgr} and Lemma~\ref{lem:querycorr}, it is guaranteed that every visible edge that is intersecting the triangle $vv_iv_{i+1}$ is determined correctly and added to $VG_{\cal P'}$.
\end{proof}

\begin{lemma} 
Whenever a vertex $v$ is deleted from $\cal P'$, our algorithm updates the visibility graph $VG_{\cal P'}$ in $O(k((\lg n')^2+h))$ time. 
Here, $k$ is the number of combinatorial changes in $VG_{P'}$ due to the deletion of $v$, $n'$ is the number of vertices of $\cal P'$, and $h$ is the number of obstacles in $\cal P'$. 
\end{lemma}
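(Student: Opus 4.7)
The plan is to decompose the running time into four pieces that match the phases of the algorithm described in Subsection~\ref{visgraph:delete}, and then to bound each piece in terms of $k$ (the total combinatorial change in $VG_{\cal P'}$) by appealing to Theorem~\ref{thm:vpdynobst} and Lemma~\ref{lem:querycorr}.

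First I would account for the preprocessing-style updates triggered by the deletion. Locating the corridor $C$ that contains $v$ via the dynamic point-location structure of \cite{journals/jal/GoodrichT97}, updating the at-most-three simple polygons of $P'_C$ that contain $v$, and updating the hull trees on the affected corridor sides all cost $O((\lg n')^2)$ worst-case time, which is subsumed by the bound claimed. Deleting the visible edges of $VG_{\cal P'}$ that are incident to $v$ is handled by walking through $RBT_v$, removing the corresponding leaf from each $RBT_{v'}$ with $v' \in S_1$, and discarding $RBT_v$; with $|S_1| \le k$, this costs $O(k\lg n')$ time.

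Next I would bound the work done to discover the edges that become newly visible. By Lemma~\ref{lem:dynvisgr}, every such edge is reached by iteratively expanding from $S_1$: for each vertex $v'$ already in the frontier, we only need to search the cone $vc_m$ bounded by rays $v'v$ and $v'v_i$ (or $v'v_{i+1}$) for new visible neighbours. This is precisely a restricted invocation of the query algorithm of Section~\ref{sect:queryalgo} with apex $v'$ and cone $vc_m$, which by Lemma~\ref{lem:querycorr} costs $O(|VP_{\cal P'}(v') \cap vc_m|(\lg n')^2 + h)$ time. Each edge returned by such a call corresponds to a distinct combinatorial change counted by $k$, and each invocation is charged to one vertex already known to undergo a change, so the total number of invocations is at most $O(k)$. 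Summing over all invocations yields $O\big(\sum_{v'} k_{v'}(\lg n')^2 + k\cdot h\big) = O(k(\lg n')^2 + kh) = O(k((\lg n')^2 + h))$, where $k_{v'}$ is the number of newly visible edges incident to $v'$ discovered at that invocation and $\sum_{v'} k_{v'} = O(k)$.

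Finally I would account for the bookkeeping: inserting each new edge $v'v''$ into both $RBT_{v'}$ and $RBT_{v''}$ costs $O(\lg n')$ per edge, which across $O(k)$ insertions contributes $O(k\lg n')$ and is again absorbed by $O(k(\lg n')^2)$. Summing all four contributions gives the claimed $O(k((\lg n')^2+h))$ bound. The only nontrivial point, and the one I would treat with the most care, is the charging argument in the iterative expansion: I have to argue that the termination condition $S_j \subseteq \bigcup_{j' < j} S_{j'}$ is reached without re-invoking the query algorithm on any vertex more than once in a way that could inflate the $h$ term beyond $O(kh)$. This follows because once a vertex $v'$ has been processed its incident visible edges inside $vc_m$ are fixed, so subsequent iterations skip $v'$, and the total number of query-algorithm calls remains $O(k)$.
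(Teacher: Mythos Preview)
Your proposal is correct and follows essentially the same approach as the paper's proof: decompose the work into deleting the edges incident to $v$ (cost $O(k_1\lg n')$), invoking the visibility-polygon query algorithm for the iterative expansion (cost $O(k_2((\lg n')^2+h))$ via Theorem~\ref{thm:vpdynobst} and Lemma~\ref{lem:querycorr}), and inserting the newly discovered edges into the red-black trees (cost $O(k_2\lg n')$), with $k_1+k_2=k$. Your version is more explicit than the paper about the charging argument that keeps the number of query-algorithm invocations to $O(k)$ and hence the additive $h$ term to $O(kh)$; the paper simply asserts the aggregate bound for all invocations.
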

\begin{proof}
All the visible edges in $VG_{\cal P'}$ that are incident to $v$ are deleted in $O(k_1 \lg{n'})$ time.
Here, $k_1$ is the number of visible edges incident to $v$.
From Theorem~\ref{thm:vpdynobst}, all the invocations of the visibility polygon query algorithm together takes $O(k_2((\lg n')^2+h))$ time, where $k_2 = k-k_1$ is the number of visible edges in $VG_{\cal P''}$ that are determined with this algorithm.
Inserting the newly found visible edges into corresponding red-black trees takes $O(k_2 \lg{n'})$ time.
\end{proof}

\begin{theorem}
Given a polygonal domain $\cal P$ defined with $h$ obstacles and $n$ vertices, we preprocess $\cal P$ in $O(n(\lg{n})^2+h(\lg h)^{1+\epsilon}+|E|\lg |E|)$ time to construct data structures of size $O(n+|E|)$ so that whenever a vertex $v$ is inserted to (resp. deleted from) the current polygonal domain $\cal P'$, the algorithm updates the data structures that save visibility graph of $\cal P'$ in $O(k(\lg n')^2+h)$ (resp. $O(k((\lg n')^2+h))$) time.
Here, $\epsilon$ is a small positive constant resulting from the triangulation of the free space $\cal{F(P)}$ using the algorithm in \cite{journals/ijcga/Bar-YehudaC94}, $|E|$ is the number of edges present in the visibility graph of $\cal P$, $k$ is the number of combinatorial changes in updating the visibility graph of $\cal P'$ due to the insertion (resp. deletion) of $v$, and $n'$ is the number of vertices of $\cal P'$.
\end{theorem}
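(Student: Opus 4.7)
The theorem is essentially a compilation of the preprocessing cost from the start of Section~\ref{sect:visgrmaint} together with the insertion and deletion lemmas proved in Subsections~\ref{visgraph:insert} and~\ref{visgraph:delete}, so the plan is simply to collect those bounds and verify the time and space accounting.

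For the preprocessing bound I would add three contributions. First, the visibility-polygon preprocessing from Lemma~\ref{lem:vispolypreproc} costs $O(n(\lg n)^2 + h(\lg h)^{1+\epsilon})$ time and $O(n)$ space. Second, the initial visibility graph of $\cal P$ is built with the algorithm of Kapoor and Maheshwari~\cite{journals/siamcomp/KapoorM00} in $O(|E|+h\lg n+h(\lg h)^{1+\epsilon})$ time, whose $h$-terms are absorbed into the preceding bound. Third, the construction of all the red-black trees $RBT_v$, which store the visible edges incident to $v$ keyed by angle, costs $O(|E|\lg|E|)$ because each of the $|E|$ visible edges is inserted at both endpoints at $O(\lg|E|)$ per insertion. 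Summing yields the claimed preprocessing time, and the space bound $O(n+|E|)$ follows because the visibility-polygon structures take $O(n)$ while the $RBT_v$ trees collectively store two copies of each visible edge.

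For the insertion update bound I would invoke the insertion lemma: the algorithm performs one visibility-polygon query at $v$ costing $O(|VP_{\cal P'}(v)|(\lg n')^2 + h) = O(k_1(\lg n')^2 + h)$ by Theorem~\ref{thm:vpdynobst}, where $k_1$ is the number of new visible edges incident to $v$; it inserts each of these edges into the two relevant $RBT$ trees at $O(\lg n')$ per edge; and it uses Lemma~\ref{lem:dynvisgr} with cone-restricted binary searches in the trees $RBT_{v'}$ to identify and delete each of the remaining $k_2 = k - k_1$ edges that intersect the triangle $vv_iv_{i+1}$ at $O(\lg n')$ per edge. Adding these contributions with the $O((\lg n')^2)$ update cost for the hull trees, point-location structure and the simple polygons $P'_C$ gives $O(k(\lg n')^2 + h)$. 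The deletion bound is symmetric, except that the frontier cascade must now rediscover the edges that became visible through the opened triangle by invoking the output-sensitive visibility-polygon query algorithm once per frontier vertex $v'$ with a restricted cone $vc_m$; each invocation charges $O((\lg n')^2 + h)$ per newly inserted visible edge, yielding the aggregate $O(k((\lg n')^2 + h))$.

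The only step that requires genuine care, and where I would focus the write-up, is the charging argument implicit in the use of Lemma~\ref{lem:dynvisgr}. The frontier recursion $S_{j+1} = \bigcup_{v' \in S_j} S_{v'}$ has to terminate with exactly the set of affected edges discovered, and each such edge must be touched only $O(1)$ times across the whole cascade, so that the total work scales with the combinatorial change $k$ rather than with the recursion depth or with the size of the current visibility graph. Once this charging is in place, the three bounds above fit together, and the theorem follows.
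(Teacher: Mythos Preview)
Your proposal is correct and matches the paper's approach: the theorem is stated in the paper without an explicit proof, serving as a summary that simply collects the preprocessing lemma from the start of Section~\ref{sect:visgrmaint} together with the insertion and deletion lemmas of Subsections~\ref{visgraph:insert} and~\ref{visgraph:delete}. Your write-up actually goes beyond what the paper provides by spelling out the charging argument for the cascade driven by Lemma~\ref{lem:dynvisgr}, which is a reasonable point to make explicit.
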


\section{Conclusions}
\label{sect:conclu}
We proposed algorithms for the following problems among dynamic polygonal obstacles in the plane: (i) dynamically updating the visibility polygon of any given query point, (ii) answering visibility polygon queries, and (iii) dynamically maintaining the visibility graph of the polygonal domain.
An immediate extension of this algorithm would handle new obstacles being introduced into the polygonal domain as well as allowing the newly inserted vertex to be in a corridor different from the corridor in which its neighbors reside.
Further, it would be interesting to devise efficient dynamic algorithms for these problems so that the update time complexities mainly depend on the number the combinatorial changes to update the visibility polygon.
In specific, as other parameters are under logarithm, this involves removing the dependency on the number of obstacles in the polygonal domain from the update time complexities.

\subsection*{Acknowledgement}

This research of R. Inkulu is supported in part by SERB MATRICS grant MTR/2017/000474.
The authors also like to acknowledge reviewers of the preliminary version of this paper for their valuable input.

\bibliographystyle{plain}

\end{document}